\definecolor{lightgrey}{cmyk}{0,0,0,0.2}
\definecolor{darkgrey}{cmyk}{0,0,0,0.6}
\pgfplotsset{compat=newest}
\let\origtop\top
\renewcommand\top{{\scriptscriptstyle{\origtop}}} 
\def\diag{\mathrm{diag}}
\def\nnz{\mathrm{nnz}}
\def\DPP{\mathrm{DPP}}
\def\PDPP{\mathrm{P\textnormal{-}DPP}}
\def\x{{\mathbf x}}
\def\w{{\mathbf w}}
\def\g{{\mathbf g}}
\def\e{{\mathbf e}}
\def\v{{\mathbf v}}
\def\u{{\mathbf u}}
\def\y{{\mathbf y}}
\def\ss{{\mathbf s}}
\def\Alg{{\mathsf{Alg}}}
\def\MatVec{{\mathsf{MatVec}}}
\def\b{{\mathbf b}}
\def\A{{\mathbf A}}
\def\Ac{{\mathcal A}}
\def\Lc{{\mathcal L}}
\def\B{{\mathbf B}}
\def\Sigmab{{\mathbf \Sigma}}
\def\G{{\mathbf G}}
\def\H{{\mathbf H}}
\def\I{{\mathbf I}}
\def\Q{{\mathbf Q}}
\def\zero{{\mathbf 0}}
\def\M{{\mathbf M}}
\def\S{{\mathbf S}}
\def\X{{\mathbf X}}
\def\Z{{\mathbf Z}}
\def\D{{\mathbf D}}
\def\setR{{\mathcal R}}
\def\setS{{\mathcal S}}
\def\Ec{{\mathcal E}}
\def\V{{\mathbf V}}
\def\P{{\mathbf P}}
\def\E{{\mathds E}}
\def\R{{\mathbf R}}
\def\U{{\mathbf U}}
\def\eps{{\epsilon}}
\def\sym{{\mathrm{sym}}}
\def\spec{{\mathrm{spec}}}
\def\R{{\mathds R}}
\def\Var{\mathrm{Var}}
\def\Cov{\mathrm{Cov}}
\newif\ifisfinal
\newcommand{\dn}[1]{\ifisfinal\else\textcolor{magenta}{#1 -dn}\fi}
\def\one{{\mathds{1}}}
\newtheorem{thm}{Theorem}
\newtheorem{lem}[thm]{Lemma}
\newtheorem{cor}[thm]{Corollary}
\newtheorem{dfn}[thm]{Definition}
\newtheorem{remark}[thm]{Remark}
\title{Fine-grained Analysis and Faster Algorithms\\
for Iteratively Solving Linear Systems\thanks{The work was partially supported by NSF grant CCF-2338655 (MD), ARO grant 2003514594 (DL), NSF grant DMS-2011140 (DN), and NSF grant
DMS-2309685 (ER).}}
\author{
Micha{\l} Derezi\'nski\thanks{University of Michigan (\texttt{derezin@umich.edu})}
\quad
Daniel LeJeune\thanks{Stanford University (\texttt{daniel@dlej.net})}
\quad 
Deanna Needell\thanks{University of California, Los Angeles (\texttt{deanna@math.ucla.edu})}
\quad
Elizaveta Rebrova\thanks{Princeton University (\texttt{elre@princeton.edu})}
}
\begin{document}

\maketitle

\begin{abstract}
Despite being a key bottleneck in many machine learning tasks, the cost of solving large linear systems has proven challenging to quantify due to problem-dependent quantities such as condition numbers.
To tackle this, we
    consider a fine-grained notion of complexity for solving linear systems, which is motivated by applications where the data exhibits low-dimensional structure, including spiked covariance models and kernel machines, and when the linear system is explicitly regularized, such as ridge regression.
    
    Concretely, let $\kappa_\ell$ be the ratio between the $\ell$th largest and the smallest singular value of $n\times n$ matrix $\A$.
     We give a stochastic algorithm based on the Sketch-and-Project paradigm, that solves the linear system $\A\x=\b$, that is, finds $\tilde{\x}$ such that $\|\A\tilde{\x} - \b\| \le \epsilon \|\b\|$, in time $\tilde O(\kappa_\ell\cdot n^2\log 1/\epsilon)$ for any $\ell = O(n^{0.729})$.
This is a direct improvement over 
preconditioned conjugate gradient, and it provides a stronger separation between stochastic linear solvers and algorithms accessing $\A$ only through matrix-vector products.

Our main technical contribution is the new analysis of the first and second moments of the random projection matrix that arises in Sketch-and-Project.
\end{abstract}

\section{Introduction}
\label{s:intro}

In the era of big data, the efficient processing of massive datasets has become critically important across a wide range of areas, from machine learning and statistics to scientific computing and industrial applications. Traditional methods, and especially direct approaches, for handling such data often face significant computational challenges due to their high dimensionality and massive volume. In response to this, iterative refinement methods using randomized sampling and sketching have emerged as powerful tools for effectively solving algorithmic tasks in large-scale  machine learning and data science. Yet, the computational cost of these methods is often significantly affected by problem-dependent quantities such as condition numbers, which make it challenging to characterize how their complexity compares to, and is affected by, recent advances in algorithmic theory.

Perhaps one of the most fundamental tasks impacted by this phenomenon is solving large systems of linear equations, which has numerous applications in machine learning such as least squares \cite{dieuleveut2017harder}, kernel ridge regression \cite{alaoui2015fast}, as well as model training with Newton-type methods on both convex and non-convex objectives \cite{erdogdu2015convergence,xu2020newton}. Other applications include imaging \cite{natterer, hounsfield_CAT}, sensor networks \cite{sensors}, and scientific computing \cite{xia2010superfast, wolters2008numerical}, among others. 
In this problem, our goal is to approximately solve $\A\x=\b$, given a large data matrix $\A$ and a vector $\b$. Traditional direct approaches for solving linear systems, such as Gaussian elimination, require $O(n^3)$ time to find the exact solution when $\A$ is a dense square $n\times n$ matrix. Compared to this, deterministic iterative refinement methods, such as Richardson or Chebyshev iteration \cite{golub1961chebyshev} and Krylov methods including the Lanczos algorithm and Conjugate Gradient \cite{saad1981krylov,liesen2013krylov}, produce a sequence of estimates which gradually converge to the solution, having a much cheaper $O(n^2)$ per-iteration cost that comes typically from computing a matrix-vector product with $\A$. Introducing sub-sampling into the iterations has led to stochastic approaches such as Randomized Kaczmarz \cite{Kac37:Angenaeherte-Aufloesung, SV09:Randomized-Kaczmarz} and Randomized Coordinate Descent \cite{nesterov2012efficiency}, which have an even smaller per-iteration cost but tend to require more steps to converge. 

Another algorithmic approach, which has led to improvements in the time complexity of solving linear systems, is fast matrix multiplication. Initiated by \cite{Strassen1969}, this approach relies on the fact that we can invert an $n\times n$ matrix in the time that it takes to  multiply two such matrices. This has led to algorithms with runtime of $O(n^\omega)$, where $\omega<2.371552$ is the current exponent of matrix multiplication, which is regularly improved with continued advances in the area \cite{pan1984multiply,coppersmith1987matrix,williams2012multiplying,williams2024multiplication}. Unfortunately, except in special cases \cite{peng2021solving}, these algorithmic advances have not led to improvements in the complexity of iterative linear system solvers, due to the fundamentally sequential nature of these methods, as well as their dependence on the condition number $\kappa$.

The traditional analysis of 
iterative methods runs into a fundamental 
complexity barrier of $\tilde O(\kappa \cdot n^2)$, or $\tilde O(\sqrt{\kappa} \cdot n^2)$ in the positive definite setting, which depends on the condition number $\kappa$ of the matrix $\A$, defined as the ratio between its largest and smallest singular value: $\kappa = \sigma_{1}/\sigma_{n}$, where $\sigma_1\geq \sigma_2\geq ...\geq \sigma_n$ are the decreasing singular values of $\A$. Yet, a single  condition number does not accurately characterize the cost of iteratively solving a linear system, particularly in machine learning settings where the singular value profile of the input matrix exhibits low-dimensional structure determined by the underlying data distribution or a kernel function.

In this paper, we study the time complexity of iterative linear system solvers through a more fine-grained notion than a single condition number quantity, in a way that is particularly well-suited for machine learning and statistical settings. Concretely, we consider a \emph{parameterized} condition number $\kappa_\ell$, for $\ell\in\{1,...,n\}$, which allows the top-$\ell$ part of the spectrum to be arbitrarily ill-conditioned, while controlling the condition number of the tail of the spectrum (related notions of condition number have been considered, see Section \ref{s:related-work}). In this model, we obtain improved time complexity guarantees for solving linear systems (see Theorems~\ref{t:informal-main} and~\ref{t:informal-psd}) through a combination of new convergence analysis and efficient algorithms for stochastic iterative solvers based on the Sketch-and-Project paradigm \cite{gower2015randomized,gower2018accelerated}.
Our approach not only provides sharper \dn{I worry about statements like this..do we want to explain what we mean here, i.e. sharper than what, or maybe just say by using information on the full spectrum rather than condition numbers, something like this?} guarantees for wide classes of matrices, but also enables us to tie the complexity of iterative solvers together with the ongoing algorithmic advances in fast matrix multiplication, as well as providing a stronger complexity separation between stochastic iterative methods and classical iterative algorithms such as conjugate gradient (see Theorem \ref{t:informal-lower}).

Our fine-grained analysis is well-motivated by many established statistical models of data matrices. These models are commonly of the form ``signal+noise'', which corresponds to the intuition that most of the information is contained in a low-dimensional component of the data. A classical example of this is the spiked covariance model \cite{johnstone2001distribution,capitaine2009largest,cai2013sparse,perry2018optimality}, which describes the data as a low-rank matrix distorted by noise (i.e., $\A+\epsilon\G$).
Moreover, linear systems are often regularized before solving (i.e., $\A+\lambda\I$), either to achieve better generalization, e.g., for kernel ridge regression \cite{alaoui2015fast}, or to attain improved convergence in an optimization method, e.g., damped or cubic-regularized Newton's method \cite{nesterov2006cubic}. 

Other motivating examples include matrices arising from feature extraction techniques commonly used in machine learning, such as kernel machines \cite{Williams01Nystrom}, Gaussian processes \cite{RasmussenWilliams06}, and random features \cite{rahimi2007random}, which lead to well-understood polynomial or exponential singular value decay profiles \cite{burt2019rates,Santa97gaussianregression,RasmussenWilliams06}. For example, our new results imply improved runtimes for solving linear systems with a polynomial spectral decay $\sigma_i\simeq i^{-\beta}$ (see Corollary \ref{c:polynomial}), which arise when using certain kernel functions (e.g., Mat\'ern, \cite{RasmussenWilliams06}), and also, as a result of the power law.

\subsection{Main results}

Our main result shows that the $\tilde O(\kappa\cdot n^2)$ time complexity of iteratively solving a linear system can be directly improved by replacing the condition number $\kappa = \sigma_1/\sigma_n$ with the \emph{spectral tail condition number}, $\kappa_{\ell}:=\sigma_\ell/\sigma_n$ where $\sigma_\ell$ is the $\ell$th largest singular value of $\A$. Using fast matrix multiplication, we show that this can be achieved with any $\ell=O(n^{0.729})$.

\begin{thm}\label{t:informal-main}
    Consider an $n\times n$ matrix $\A$ with singular values $\sigma_1\geq \sigma_2\geq ...\geq\sigma_{n}>0$ and an $n$-dimensional vector $\b$. We can compute $\tilde\x$ such that $\|\A \tilde\x-\b\|\leq \epsilon\|\b\|$ in time:
    \begin{align*}
        \tilde O\bigg(\frac{\sigma_{\ell}}{\sigma_n}\cdot n^2\log1/\epsilon\bigg)\quad\text{for any}\quad \ell=O\big(n^{\frac1{\omega-1}}\big)=O(n^{0.729}).
    \end{align*}
\end{thm}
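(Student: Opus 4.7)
The plan is to analyze the Sketch-and-Project method augmented with Nesterov's momentum acceleration. At each step we sample a random sketching matrix $\S_t$ of size $k\times n$ and update the iterate by projecting a momentum-corrected version of $\x_t$ onto the affine solution set $\{\x:\S_t\A\x=\S_t\b\}$. We tune $k:=\Theta(\ell)$, matching the sketch size to the index appearing in the theorem. Per iteration, this costs $O(nk+k^\omega)$: $O(nk)$ to form $\S_t\A$ and $O(k^\omega)$ to solve the resulting $k\times k$ system using fast matrix multiplication. A Nesterov-style potential-function argument gives a contraction factor $1-\sqrt{\mu}$ per step, where
\[
\mu \;:=\; \lambda_{\min}\!\bigl(\E[\A^\top\S^\top(\S\A\A^\top\S^\top)^{-1}\S\A]\,\big|_{\mathrm{range}(\A^\top)}\bigr)
\]
is the smallest eigenvalue of the expected random projection matrix, so $\tilde O(\log(1/\eps)/\sqrt{\mu})$ iterations suffice.

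The key technical step is a sharp lower bound of the form $\mu \;=\; \Omega\bigl((\ell/(n\kappa_\ell))^2\bigr)$ when the sketch size is $k\asymp\ell$. This is the payoff from the new first- and second-moment characterizations of the projection matrix promised in the abstract. The first moment identifies $\E[\A^\top\S^\top(\S\A\A^\top\S^\top)^{-1}\S\A]$ with an explicit spectral functional of the singular values of $\A$ via determinantal point process identities, which naturally separates the top-$\ell$ directions (where any reasonable sketch of size $\asymp\ell$ achieves essentially full overlap, contributing an $O(1)$ factor) from the spectral tail $\sigma_\ell,\ldots,\sigma_n$ (which governs the rate through $\kappa_\ell$). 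The second moment controls the variance of the stochastic updates, which is essential for Nesterov acceleration to upgrade the contraction from $1-\mu$ to $1-\sqrt{\mu}$ in the stochastic setting.

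Combining the bound on $\mu$ with the per-iteration cost gives an iteration count $\tilde O(n\kappa_\ell/\ell)$ and a total runtime of
\[
\tilde O\!\left(\frac{n\kappa_\ell}{\ell}\cdot(n\ell+\ell^\omega)\,\log\tfrac{1}{\eps}\right)\;=\;\tilde O\!\left((n^2+n\ell^{\omega-1})\,\kappa_\ell\,\log\tfrac{1}{\eps}\right).
\]
The admissible range $\ell=O(n^{1/(\omega-1)})$ in the theorem is exactly the constraint $n\ell^{\omega-1}\leq n^2$ needed to ensure the second term is dominated by the first, yielding the advertised bound $\tilde O(\kappa_\ell\, n^2\log(1/\eps))$. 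This is the mechanism by which the fast-matrix-multiplication exponent $\omega$ enters: improvements in $\omega$ relax the constraint $\ell^{\omega-1}\le n$ and thereby widen the range of $\ell$ for which $\kappa_\ell$ may be substituted for $\kappa$.

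The hard part will be establishing the moment bound $\mu=\Omega((\ell/(n\kappa_\ell))^2)$. Standard concentration-type arguments for $\E[\P_\S]$ tend to produce bounds that depend on the full condition number $\kappa$ rather than the spectral tail $\kappa_\ell$, and the presence of the inverse $(\S\A\A^\top\S^\top)^{-1}$ inside the expectation makes the moments genuinely non-linear in $\S$. Extracting the tail-only dependence requires the DPP machinery to produce a closed-form first-moment identity in an idealized setting where the sketch induces a determinantal distribution on the singular directions of $\A$, followed by a Gaussian universality argument from random matrix theory to transfer the bound to the computationally efficient sub-Gaussian or sub-sampled sketches that the algorithm actually uses. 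This universality transfer is what ultimately lets us pay only $O(nk+k^\omega)$ per iteration while preserving the sharp tail-dependent rate.
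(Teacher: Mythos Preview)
Your overall architecture (accelerated Sketch-and-Project, sketch size $k\asymp\ell$, per-iteration cost $\tilde O(nk+k^\omega)$, and the final arithmetic $\ell^{\omega-1}\le n$) matches the paper, and your final iteration count $\tilde O(n\kappa_\ell/\ell)$ is correct. But the way you arrive at that count contains two compensating errors that together mask a genuine gap.

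First, the accelerated Sketch-and-Project rate is \emph{not} $1-\sqrt{\mu}$. The known potential-function analysis (Gower--Hanzely--Richt\'arik--Stich) gives $1-\sqrt{\mu/\nu}$, where $\nu=\lambda_{\max}\big(\E[(\bar\P^{-1/2}\P\bar\P^{-1/2})^2]\big)$ is a normalized second matrix moment of the projection. Since $\nu\ge 1$ always (by Jensen), your claimed rate is strictly optimistic, and for the sketches in question one actually has $\nu=\Theta(n/\ell)\gg 1$. Second, your target bound $\mu=\Omega\big((\ell/(n\kappa_\ell))^2\big)$ is not what the DPP machinery delivers: the first-moment analysis yields $\mu=\Omega\big(\ell/(n\kappa_\ell^2)\big)$, which is larger by a factor $n/\ell$. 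You happen to land on the right iteration count because your too-small $\mu$ and too-optimistic rate cancel, but neither step is provable as stated.

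The actual mechanism is: the DPP coupling gives the first-moment lower bound $\E[\P]\succeq\A^\top\A(\A^\top\A+\lambda\I)^{-1}-\delta\I$ and hence $\mu\gtrsim\ell/(n\kappa_\ell^2)$; the \emph{separate} second-moment bound $\nu\lesssim n/\ell$ is the genuinely new contribution, and it is here (not merely in ``transferring to efficient sketches'') that the Gaussian universality enters, via a high-probability lower bound $\sigma_{\min}^2(\S\A)\gtrsim 2k\sigma_{2k}^2+\sum_{i>2k}\sigma_i^2$ for LESS-uniform $\S$. Combining gives $\sqrt{\mu/\nu}\gtrsim \ell/(n\kappa_\ell)$. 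There is also a non-trivial final step you do not mention: the rate in the technical theorem is governed by $\bar\kappa_\ell\bar\kappa_{\ell:2k}$ rather than $\kappa_\ell$ directly, and showing this product is $O(\kappa_\ell)$ requires an adaptive choice of $k$ (possibly inflated by polylog factors) to avoid sharp drops in the spectrum between $\sigma_\ell$ and $\sigma_{2k}$.
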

Note that the complexity improvement in this result is already significant without resorting to fast matrix multiplication, in which case, we can use any $\ell=O(\sqrt n)$. Moreover, unlike with the usual condition number $\kappa$, any future improvements in the matrix multiplication exponent $\omega$ will directly lead to improvements in the exponent of $\ell$ in $\kappa_\ell$. As $\omega$ approaches $2$, similarly $\ell$ approaches $n$, until the two paradigms (potentially) meet at $\tilde O(n^2)$.

To underline that this fine-grained notion of complexity is natural, we show that the above guarantee holds for a remarkably simple stochastic iterative solver (see Algorithm~\ref{alg:main}), a combination of two standard techniques: Sketch-and-Project \cite{gower2015randomized}, which is a randomized iterative framework for solving linear systems that can be viewed as a simple extension of the classical Kaczmarz algorithm \cite{Kac37:Angenaeherte-Aufloesung,SV09:Randomized-Kaczmarz}; and Nesterov's acceleration technique \cite{nesterov1983method}, variants of which have been used in numerous iterative optimization methods \cite{liu2016accelerated,ye2020nesterov,even2021continuized}. While this combination is not new \cite{gower2018accelerated}, its convergence has proven challenging to quantify in terms of the spectral properties~of~$\A$. 

We address this challenge by bridging ultra-sparse sketching techniques \cite{chenakkod2023optimal} with a new convergence analysis that builds on recent advances in combinatorial sampling \cite{derezinski2020improved} and random matrix theory \cite{brailovskaya2024universality}. Along the way, we establish technical results that are likely of independent interest, including a new characterization of the smallest singular value for the family of random matrices obtained via sparse sketching (Lemma~\ref{l:smin-less}).

\smallskip

\noindent
\textbf{Positive definite systems.}
Our algorithmic framework can be adapted to take advantage of certain types of additional structure present in the matrix $\A$. As an example, we show that, for positive definite matrices, the dependence on the spectral tail condition number $\kappa_\ell$ can be improved to a square root, just as regular condition number $\kappa$ can be improved to a square root for deterministic iterative methods~\cite{axelsson1986rate}.
\begin{thm}\label{t:informal-psd}
    Consider an $n\times n$ positive definite $\A$ with singular values $\sigma_1\geq \sigma_2\geq ...\geq\sigma_{n}$ and an $n$-dimensional vector $\b$. We can compute $\tilde\x$ such that $\|\A \tilde\x-\b\|\leq \epsilon\|\b\|$ in time:
    \begin{align*}
        \tilde O\bigg(\sqrt{\frac{\sigma_{\ell}}{\sigma_n}}\cdot n^2\log1/\epsilon\bigg)\quad\text{for any}\quad \ell=O\big(n^{\frac1{\omega-1}}\big)=O(n^{0.729}).
    \end{align*}
\end{thm}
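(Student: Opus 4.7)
The plan is to adapt the proof of Theorem~\ref{t:informal-main} by running Sketch-and-Project with Nesterov's acceleration in the $\A$-norm directly, exploiting the symmetry and positive-definiteness of $\A$. Concretely, I would replace the row-sketching update from the general case with the $\A$-orthogonal projection
\begin{equation*}
\x_{k+1} = \x_k - \S_k(\S_k^\top\A\S_k)^{\dagger}\S_k^\top(\A\x_k-\b),
\end{equation*}
coupled with an auxiliary momentum sequence in the Nesterov style. The error $\e_k=\x_k-\x^*$ then contracts in the $\A$-norm with rate $\E\|\e_{k+1}\|_\A^2 \le (1-\mu)\|\e_k\|_\A^2$, where
\begin{equation*}
\mu \;=\; \lambda_{\min}\!\Bigl(\E\bigl[\A^{1/2}\S_k(\S_k^\top\A\S_k)^{\dagger}\S_k^\top\A^{1/2}\bigr]\Bigr).
\end{equation*}

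First, I would reuse the sharp first and second moment bounds that the paper develops en route to Theorem~\ref{t:informal-main}, but applied to $\A^{1/2}$ in place of $\A$. The determinantal-point-process identities and the Gaussian-universality bound transfer because $\S^\top\A\S = (\A^{1/2}\S)^\top(\A^{1/2}\S)$ is again a sketched Gram matrix; the resulting estimate is $\mu \gtrsim \sigma_n/\sigma_\ell = 1/\kappa_\ell$ for the same range $\ell = O(n^{1/(\omega-1)})$. Note the crucial saving here: in the general case the analogous spectral quantity scales like $1/\kappa_\ell^2$, because the implicit Gram matrix is $\A^\top\A$, whereas in the PSD setting $\A$ enters linearly. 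Next, I would invoke Nesterov's acceleration for Sketch-and-Project (in the spirit of~\cite{gower2018accelerated}) in this metric, which upgrades the per-iteration contraction $1-\mu$ to an accelerated contraction $1-\sqrt{\mu}$, producing an iteration complexity of $\tilde O(\sqrt{\kappa_\ell}\log(1/\epsilon))$.

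The per-iteration cost then follows the same accounting as for Theorem~\ref{t:informal-main}: the ultra-sparse sketches from~\cite{chenakkod2023optimal} let us form $\S_k^\top\A\S_k$ and solve the inner $\ell\times\ell$ system in $\tilde O(n)$ time, while a constant number of $\A$-vector products contribute $O(n^2)$. Multiplying iteration count by per-step cost yields the claimed $\tilde O(\sqrt{\kappa_\ell}\cdot n^2\log(1/\epsilon))$ runtime, and translating from an $\A$-norm error to the stated residual guarantee $\|\A\tilde\x-\b\|\le\epsilon\|\b\|$ costs only a logarithmic factor in $\kappa_\ell$. The main obstacle I anticipate is not the acceleration step (which is standard once the right norm is identified) but verifying that the sharp spectral estimate $\mu\gtrsim 1/\kappa_\ell$ genuinely survives in the PSD setting: the relevant projection now naturally lives in the $\A$-geometry, and one must re-run the DPP-based moment identities together with the universality-based second-moment bound using $\A^{1/2}\S$ in place of $\A\S$, while confirming that the bound on $\ell$ coming from the cost of the $\ell\times\ell$ inner solve is unaffected.
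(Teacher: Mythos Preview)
Your high-level plan---run Sketch-and-Project in the $\A$-geometry so that the first- and second-moment projection analysis applies to $\A^{1/2}$ rather than $\A$, converting the quadratic condition-number dependence to linear---is exactly the paper's route (see the proof of Theorem~\ref{thm:psd}). However, several of your quantitative claims are wrong in ways that happen to cancel, so the argument as written is not a proof.

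First, $\mu \gtrsim 1/\kappa_\ell$ is impossible: the projection $\A^{1/2}\S_k(\S_k^\top\A\S_k)^\dagger\S_k^\top\A^{1/2}$ has rank at most $k$, so $\tr(\E[\P])\le k$ and hence $\mu=\lambda_{\min}(\E[\P])\le k/n$. The correct bound (Corollary~\ref{c:mu-less} applied to $\A^{1/2}$) is $\mu\gtrsim \ell/(n\tilde\kappa_\ell)$, with the factor $\ell/n$ essential. Second, Nesterov acceleration does not turn $1-\mu$ into $1-\sqrt\mu$; it yields $1-\sqrt{\mu/\nu}$ as in~\eqref{eq:convergence-intro}, and the whole point of the second-moment analysis (Lemma~\ref{l:nu-gaussian}) is to show $\nu\lesssim (n/\ell)\,\tilde\kappa_{\ell:2k}$ rather than the trivial $\nu\le 1/\mu$, which would give no acceleration at all. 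With the correct $\mu$ and $\nu$, the iteration count is $\tilde O\big((n/\ell)\sqrt{\tilde\kappa_\ell\tilde\kappa_{\ell:2k}}\big)$, not $\tilde O(\sqrt{\kappa_\ell})$. Third, the per-iteration cost is $\tilde O(nk+k^\omega)=\tilde O(n\ell)$ at the chosen $\ell$, not $O(n^2)$: once $\B=\A$, step~\ref{line-w} collapses to $\w_t=\S^\top\u_t$ and no dense $\A$-matvec is ever needed (and forming $\S_k^\top\A\S_k$ is already $\tilde O(n\ell)$, not $\tilde O(n)$). Multiplying the correct iteration count by the correct per-step cost does give $n^2\sqrt{\kappa_\ell}$---but only after also re-running the sketch-size selection argument from the proof of Theorem~\ref{t:informal-main} to bound $\tilde\kappa_\ell\tilde\kappa_{\ell:2k}$ by $O(\kappa_\ell)$. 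Your strategy matches the paper's, but each of the three intermediate claims above must be corrected.
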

\smallskip

\noindent
\textbf{Separation from matrix-vector query methods.}
An important implication of our results, and of our reliance on the spectral tail condition number, is a more distinct than previously known separation (in terms of complexity) between stochastic solvers, such as Sketch-and-Project, and classical iterative solvers, which interact with $\A$ only through matrix-vector product queries. Here, representing state-of-the-art matrix-vector query solvers, let us consider Krylov methods, such as Conjugate Gradient (CG, \cite{hestenes1952methods}) and its non-symmetric extensions \cite{saad1981krylov}. A careful convergence analysis of Krylov iterations shows that, given any $\ell$ and assuming exact precision arithmetic, they converge $\epsilon$-close in $O(n^2\ell+\kappa_\ell \cdot n^2\log1/\epsilon)$ time for general linear systems, and in $O(n^2\ell+\sqrt{\kappa_\ell}\cdot n^2\log1/\epsilon)$ time for positive definite systems \cite{axelsson1986rate,sw09}. In both cases, the second term matches our complexity in Theorems \ref{t:informal-main} and \ref{t:informal-psd}, but the Krylov methods suffer an additional cost of $O(n^2\ell)$ to take care of the top $\ell$ singular values of $\A$. 

In fact, we show that this additional cost is a fundamental bottleneck of all methods based on the matrix-vector query model. Specifically, we give a lower bound showing that any algorithm which accesses the matrix $\A$ only through matrix-vector product queries 
must incur a worst-case cost of $\tilde\Omega(n^2\ell + \sqrt{\kappa_\ell}\cdot n^2)$ arithmetic operations for dense positive definite linear systems. See Section \ref{s:lower} for a detailed discussion.

\begin{thm}\label{t:informal-lower} 
    Given $n$, $\ell<n$, and $\kappa_\ell\geq 1$, consider the task of solving an $n\times n$ positive definite linear system $\A\x=\b$ with $\frac{\sigma_{\ell}(\A)}{\sigma_n(\A)}\leq\kappa_\ell$. Any algorithm which interacts with $\A$ via adaptive randomized queries  of the form $\v\rightarrow \A\v$, and solves this task to high precision, has query complexity at least $\tilde\Omega(\ell+\sqrt{\kappa_\ell})$, which for dense systems leads to time complexity at least $\tilde\Omega(n^2\ell + \sqrt{\kappa_\ell}\cdot n^2)$.
\end{thm}
 Note that, in addition to deterministic solvers like CG, this lower bound also includes randomized preconditioning methods \cite{martinsson2020randomized}, which probe matrix $\A$ with Gaussian vectors to construct a low-rank approximation (essentially, estimating the top part of its spectrum), and use that information to speed up an iterative solver like CG. Once we go beyond the matrix-vector query model, this preconditioning approach can be sped up by blocking the queries together and using fast $n\times n$ by $n\times \tilde O(\ell)$ rectangular matrix multiplication \cite{le2012faster} to approximate the top part of the spectrum faster. However, even then, the cost is still $\Omega(n^{2+\theta})$ with some $\theta > 0$ for any $\ell=\Omega(n^{0.33})$. In comparison, Theorems~\ref{t:informal-main} and \ref{t:informal-psd} show that Sketch-and-Project with Nesterov's acceleration avoids this cost entirely for~any~$\ell=O(n^{0.729})$. 

\paragraph{Application: Polynomial spectral decay.}
To illustrate how our new results improve on the time complexity of solving linear systems for real-world matrices, we consider matrices with polynomial spectral decay (power law distribution). Concretely, consider a matrix $\A$ with singular values $\sigma_i = \Theta(i^{-\beta}\sigma_1)$ for some positive constant $\beta$. Such behavior can naturally occur in data \cite{clauset2009power,eikmeier2017revisiting}, or it can arise when $\A$ is the kernel matrix of a dataset, i.e., its $(i,j)$th entry is $k(x_i,x_j)$ where $x_i$ is the $i$th data point and $k(\cdot,\cdot)$ is a kernel function. For example, \cite{RasmussenWilliams06} showed that if we use the Mat\'ern kernel function with parameter $\nu>0$, then matrix $\A$ exhibits polynomial decay with $\beta = 2\nu+1$. Solving linear systems with these matrices is the main computational cost in kernel ridge regression, Gaussian processes, and independent component analysis, among others.
Our results imply the following guarantee for solving linear systems with polynomial decay. 
\begin{cor}\label{c:polynomial}
Consider an $n\times n$ matrix $\A$ with singular values $\sigma_1\geq \sigma_2\geq ...\geq \sigma_n$ and an $n$-dimensional vector $\b$. Suppose that $c_1i^{-\beta}\leq \sigma_i\leq c_2i^{-\beta}$ for some absolute constants $0<c_1<c_2$ and $\beta\geq 0.5$. Then, we can compute $\tilde\x$ such that $\|\A\tilde\x-\b\|\leq\epsilon\|\b\|$ in time:
\begin{align*}
    \tilde O\Big(n^{2+\frac{\omega-2}{\omega-1}(\beta-0.5)}\log1/\epsilon\Big).
\end{align*}
\end{cor}
\begin{remark}
This is better than the best known guarantee in the matrix-vector query model \cite{sw09} for any $\beta\in(0.5, 3.75)$, 
and it improves on the prior best known time complexity in the unrestricted model for any $\beta\in(0.5,1.33)$. For example, if we choose $\beta=1$, then we get $\tilde O(n^{2.135})$ runtime, whereas the best prior result (obtained by preconditioning an SVRG-type stochastic solver, see \cite{gonen2016solving,musco2018spectrum}) achieves $\tilde O(n^{2.178})$, and CG obtains $\tilde O(n^{2.667})$. For a detailed discussion see Appendix \ref{a:polynomial}.
\end{remark}

\paragraph{Extensions to sparse least squares.} While our algorithms are stated for consistent linear systems, they can be easily extended to the general inconsistent setting, i.e., least squares regression. This can be achieved by incorporating a variant of our Sketch-and-Project algorithm as an inner solver within a sketch-to-precondition type algorithm \cite{rokhlin2008fast}, obtaining nearly-input-sparsity time algorithms for a tall least squares task parameterized by the spectral tail condition number $\kappa_\ell$. 
For example, given an $n\times d$ matrix $\A$ and an $n$-dimensional vector $\b$, we can find $\tilde \x$ such that $\|\A\tilde\x-\b\|^2\leq\min_\x\|\A\x-\b\|^2+\epsilon\|\b\|^2$ in $\tilde O((\nnz(\A) + d^2\kappa_\ell)\log1/\epsilon)$ time, where $\nnz(\A)$ is the number of non-zeros in $\A$ and $\kappa_\ell = \sigma_\ell(\A)/\sigma_d(\A)$ for $\ell = O(d^{0.729})$. 
See Section \ref{s:completing} for further discussion.

\vspace{-3mm}
\subsection{Main technical contributions}

Our algorithms are built on the Sketch-and-Project framework \cite{gower2015randomized}, an extension of the Randomized Kaczmarz algorithm \cite{SV09:Randomized-Kaczmarz}. The core idea of Sketch-and-Project is that in each iteration we construct a small randomized sketch of the linear system, and then project the current iterate onto the set of solutions of that sketch. Formally, given an $n\times n$ linear system $\A\x=\b$ and a current iterate $\x_t$, we generate a new random $k\times n$ sketching matrix $\S=\S(t)$, where $k\ll n$ is the sketch size, and then consider a smaller linear system $\S\A\x=\S\b$. For example, $\S$ can be a subsampling matrix which selects $k$ out of $n$ linear equations (rows of $\A$). In the original Sketch-and-Project algorithm, we would now compute $\x_{t+1}$ as the projection of $\x_t$ onto the subspace defined by the smaller system. A natural extension of this is to use Nesterov's acceleration scheme \cite{gower2018accelerated}, which introduces two additional sequences $\y_t$ and $\v_t$ to mix the projection step with the current trajectory of the iterates (see Algorithm \ref{alg:main}).

While various Sketch-and-Project algorithms have been proposed and studied across a long line of works, such as  \cite{hanzely2018sega,necoara2019randomized,lejeune2024asymptotics,tang2023sketch}, their complexity analysis has proven remarkably challenging. Crucially, their convergence rate depends on the spectral properties of a random projection matrix defined by the sketching matrix $\S$, namely $\P := (\S\A)^\dagger\S\A$, which is a projection onto the subspace spanned by the rows of the sketch $\S\A$.  Matrix $\P$ is also central to other applications of sketching, such as low-rank approximation \cite{halko2011finding,cohen2015dimensionality,cohen2017input}, and so understanding its spectral properties is of significant independent interest across the literature on randomized linear algebra \cite{woodruff2014sketching,martinsson2020randomized}. Our main technical contributions are new sharp characterizations of the first and second matrix moments of $\P$ for a class of ultra-sparse sketching matrices $\S$, as explained~below.

Sketch-and-Project with Nesterov's acceleration has been proposed in \cite{richtarik2020stochastic} and previously studied by \cite{gower2018accelerated}. However, these papers characterize the convergence rate of the method only in terms of the properties of the random projection $\P$, without identifying how these properties are determined by the spectrum of the input matrix $\A$, or the choice of the sketching matrix $\S$. Specifically, they showed that, as long as the matrix $\bar\P=\E[\P]$ is invertible, then the iterates converge to the solution $\x^*=\A^{-1}\b$, with the expected convergence rate given by
\begin{align}
\E\big[\|\x_t-\x^*\|^2\big] &\leq 2\bigg(1-\sqrt{\frac\mu\nu}\,\bigg)^t\|\x_0-\x^*\|^2,
\label{eq:convergence-intro}
\\
\text{where}\quad    \mu &= \lambda_{\min}(\E[\P])
    \quad\text{and}\quad
    \nu = \lambda_{\max}\Big(\E\big[ (\bar\P^{-1/2}\P\bar\P^{-1/2})^2\big]\Big).\nonumber
\end{align}
Here, we can think of $\mu$ and $\nu$ as the first and second moment properties of the random projection matrix $\P$, since $\mu$ requires lower-bounding its first matrix moment $\bar\P=\E[\P]$, whereas $\nu$ requires upper-bounding its normalized second matrix moment $\E[(\bar\P^{-1/2}\P\bar\P^{-1/2})^2]$. The core challenge in obtaining these guarantees is that they are fundamentally average-case properties that cannot be recovered by usual high-dimensional concentration techniques (consider that we seek a bound on $\lambda_{\min}(\E[\P])$ when the matrix $\P$  is not even invertible).

While the characterization of the second-order term $\nu$ presented in this paper appears to be the first such guarantee, there has been a number of works analyzing the complexity of Sketch-and-Project \cite{rk20,mdk20,derezinski2024sharp,derezinski2023solving} based on quantifying only the first-order term~$\mu$. These works do not consider Nesterov's acceleration and rely on a weaker convergence guarantee that follows from  \eqref{eq:convergence-intro} by observing that $1-\sqrt{\mu/\nu}\leq 1-\mu$. This allows one to only have to lower bound $\mu$, while avoiding the second matrix moment of $\P$, but it also leads to sub-optimal condition number dependence.

First such guarantees for $\mu$ were obtained only recently: initially for specialized combinatorial sub-sampling matrices $\S$ based on determinantal point processes (DPPs, see \cite{rk20,mdk20}); then,  for dense Gaussian sketching matrices \cite{rebrova2021block}, as well as sub-Gaussian matrices with some degree of sparsity \cite{derezinski2024sharp}, and asymptotically free sketching matrices in the asymptotic regime~\cite{lejeune2024asymptotics}. 
All of these sketching/sub-sampling approaches are still too expensive to yield fast algorithms, however the resulting guarantees exposed a connection between the first moment of $\P$ and the spectral tail condition number $\kappa_\ell$ when the sketch size $k$ is proportional to $\ell$.

Finally, \cite{derezinski2023solving} gave a reduction showing that after preprocessing the matrix $\A$ with a randomized Hadamard transform \cite{ailon2009fast}, a uniform sub-sampling matrix $\S$ of size $k$ yields a guarantee on $\mu$ that is at least as good as a DPP sub-sampling matrix of size $\ell = \Omega(k/\log^3 n)$. 
This led to the first \emph{efficient} sub-sampling scheme with a convergence guarantee, although requiring the sample size to be larger by a factor of $O(\log^3 n)$ than what we might hope to achieve.

\smallskip

\noindent
\textbf{Technical contribution 1:} \emph{First-moment projection analysis via optimal DPP reduction.} 
In our first main technical contribution, we directly improve on the existing guarantees for the first matrix moment of the projection matrix $\P$, both for uniform sub-sampling matrices $\S$, as well as when using ultra-sparse sketching matrices. Our general strategy is similar to that of \cite{derezinski2023solving}, in that we also show a reduction from uniform to DPP sub-sampling, however the actual reduction is entirely different 
(it involves designing a custom DPP sampling algorithm that is then coupled with a uniform sampling oracle; see Section \ref{s:mu} for details). Overall we show that, after preprocessing $\A$ with a randomized Hadamard transform, a uniform sub-sample of size $k$ yields the same bound on the first moment of $\P$ as a DPP sample of size $\ell=\Omega(k/\log k)$, which closes the gap in the result of \cite{derezinski2023solving} from $O(\log^3n)$ to $O(\log k)$. We note that standard lower bounds based on the coupon collector problem \cite{tropp2011improved} suggest that the factor $O(\log k)$ is unavoidable, meaning that our reduction size of $\Omega(k/\log k)$ is likely optimal.

Moreover,  our first-moment analysis provides a lower bound on the entire matrix moment of $\P$, rather than just its smallest eigenvalue, which proves crucial for bounding $\nu$ later on. This result can be informally stated as follows: if $\P$ is produced from an $n\times n$ matrix $\A$ using a sub-sampling or sketching matrix $\S$ of size $k\times n$, then:
\begin{align*}
\text{(Lemma \ref{l:mu-less})}\quad    \E[\P] \ \succeq_\delta \ \A^\top\A(\A^\top\A+\lambda\I)^{-1},\quad 
    \text{where}\quad\lambda = \frac1\ell\sum_{i>\ell}\sigma_i^2\quad\text{for}\quad\ell = \Omega\Big(\frac k{\log k}\Big),
\end{align*}
where $\succeq_\delta$ hides an additive $\delta \I$.
This result in particular implies a lower bound on $\mu$, which roughly states that $\mu = \Omega(\ell/(n\kappa_\ell^2))$. As an immediate corollary of our new first-moment analysis of the random projection matrix~$\P$, we can sharpen the complexity of solving linear systems with $\ell$ large singular values, i.e., those where $\kappa_\ell = O(1)$, the setting considered by \cite{derezinski2023solving}. For comparison, they obtained $O((n^2\log^4n+n\ell^{\omega-1}\log^{3\omega}n)\log1/\epsilon)$ for this problem.
\begin{cor}\label{t:large-sv}
    Consider an $n\times n$ matrix $\A$ with at most $\ell$ singular values larger than $O(1)$ times its smallest one, and vector $\b$. We can compute $\tilde\x$ such that $\|\A\x-\b\|\leq\epsilon\|\b\|$ in:
    \begin{align*}
        O\Big((n^2\log^2n + n\ell^{\omega-1}\log^\omega \ell)\log1/\epsilon\Big)\quad\text{time}.
    \end{align*}
\end{cor}

\smallskip

\noindent
\textbf{Technical contribution 2:} \emph{Second-moment projection analysis via Gaussian universality.} 
In our second main technical contribution, which is arguably the more significant of the two, we give the first non-trivial upper bound on the normalized second matrix moment of the random projection matrix $\P$. Here, our analysis crucially builds on recent breakthroughs in non-asymptotic random matrix theory \cite{brailovskaya2024universality}, which show that the singular values of certain classes of random matrices are close to the singular values of a Gaussian matrix with matching entry-wise mean and covariance (we refer to this property as \emph{Gaussian universality}). In our case, Gaussian universality is required for the random sketch $\S\A$, where $\S$ is a sparse sketching matrix.
Fortunately, we are able to show that even extremely sparse  sketching matrices \cite{derezinski2021newton,chenakkod2023optimal}, for which computing $\S\A$ costs only $\tilde O(nk)$, are sufficient for this to hold. We use this to show that when an $n\times n$ matrix $\A$ is preprocessed with a randomized Hadamard transform, then the resulting projection $\P=(\S\A)^\dagger\S\A$ satisfies:
\begin{align*}
 \text{(Lemma \ref{l:nu-gaussian})}\quad   \E\big[\P\bar\P^{-1}\P\big] \ \precsim\  \frac n\ell\cdot\bar\P,\quad\text{for}\quad\ell = \Omega\Big(\frac k{\log k}\Big),
\end{align*}
where recall that $\bar\P=\E[\P]$ (we refer to Lemma \ref{l:nu-gaussian} for a formal statement).
Note that the result directly implies a bound on the largest eigenvalue of $\E[(\bar\P^{-1/2}\P\bar\P^{-1/2})^2]$, leading to $\nu\lesssim n/\ell$. Putting this together with the bound on $\mu$, with a careful choice of the parameters we obtain $\sqrt{\mu/\nu}\gtrsim \ell/(n\kappa_\ell)$, which can be directly plugged into  the convergence rate of Sketch-and-Project with Nesterov's acceleration \eqref{eq:convergence-intro}.

The key reason we rely on Gaussian universality as part of the proof of Lemma \ref{l:nu-gaussian} is to establish the following 
lower bound on the smallest singular value for $k\times n$ sketches $\S\A$, which should be of independent interest: 
\begin{align*}
 \text{(Lemma \ref{l:smin-less})}\quad   \sigma_{\min}^2(\S\A) = \Omega\Big(2k\sigma_{2k}^2+\sum_{i>2k}\sigma_i^2\Big)\quad\text{with high probability,}
\end{align*}
where recall that $\sigma_1\geq \sigma_2\geq ...$ are the singular values of $\A$.
Remarkably, this bound appears to be new even for dense Gaussian and sub-Gaussian sketching matrices~$\S$, but crucially, we are able to show this also for the above mentioned sparse sketching matrices.

\smallskip

\noindent
\textbf{Completing the complexity analysis.} 
To recover the time complexity in Theorem \ref{t:informal-main}, we must also address the cost of the projection step in each iteration of the algorithm. This step effectively involves solving a very wide $k\times n$ linear system, which can be done inexactly in time $\tilde O( nk + k^\omega)$ by relying on standard sketch-and-precondition techniques \cite{rokhlin2008fast,msm14,woodruff2014sketching}. This forces us to extend the existing convergence analysis of Sketch-and-Project with Nesterov's acceleration \cite{gower2018accelerated} to allow inexact projections, which is done in Section \ref{s:inexact}. Putting everything together and choosing sketch size $k=\tilde O(\ell)$, our convergence analysis shows that the algorithm requires $\tilde O(\frac n\ell \kappa_\ell\cdot\log1/\epsilon)$ iterations, each costing $\tilde O(n\ell + \ell^\omega)$ time. Setting $\ell = \tilde O(n^{\frac1{\omega-1}})$ yields the desired time complexity.

\section{Related work}\label{s:related-work}
Next, we give additional background and related work, focusing on different approaches to iteratively solving linear systems, as well as connections between our techniques and other areas such as matrix sketching, combinatorial sampling and random matrix theory.
\smallskip

\noindent
\textbf{Complexity of iterative linear solvers.} A number of prior works have given sharp characterizations of the convergence of iterative linear system solvers, going beyond the usual condition number $\kappa$ in some way, which makes them relevant to our discussion. In the context of Krylov methods such as CG, early works such as \cite{axelsson1986rate} showed that its convergence rate depends on the ``clusterability'' of the spectrum of $\A$, which has been restated in terms of~$\kappa_\ell$, as discussed earlier (see, e.g., Theorem 2.5 in \cite{sw09}). 

For stochastic iterative methods, such as randomized Kaczmarz and randomized coordinate descent, a number of \emph{averaged} notions of the condition number have been used \cite{SV09:Randomized-Kaczmarz,leventhal2010randomized,liu2016accelerated,bollapragada2024fast}, which depend in a more complex way on the entire spectrum, and may also lead to sharper bounds than $\kappa$. We recover similar averaged condition numbers in our analysis (Theorems \ref{thm:main} and \ref{thm:psd}). This has been used to suggest that stochastic methods can be faster than CG \cite{lee2013efficient}, although without providing fine-grained upper/lower bounds, such as those given here based on the parameter $\ell$.
A few works have also considered versions of spectral tail condition numbers, similar to $\kappa_\ell$, either as part of the analysis \cite{gonen2016solving,musco2018spectrum} or the assumptions \cite{derezinski2023solving}. Some acceleration schemes other than Nesterov's momentum have been proposed for stochastic solvers \cite{lin2015universal,frostig2015regularizing,allen2018katyusha,bollapragada2024fast, alderman2024randomized}, which can give sharper convergence guarantees in some cases, at the expense of additional computational overhead. Remarkably, we are able to show our results for Nesterov's original scheme, which is both simple and practical.

\smallskip

\noindent
\textbf{Matrix sketching and sampling.} Randomized techniques for sketching or subsampling matrices have been developed largely as part of the area known as Randomized Numerical Linear Algebra \cite{woodruff2014sketching,drineas2016randnla,martinsson2020randomized,derezinski2024recent}, with applications in least squares \cite{sarlos2006improved,rokhlin2008fast} and low-rank approximation \cite{cohen2015dimensionality,cohen2017input}, among others. In this context, the sketched projection matrix $\P=(\S\A)^\dagger\S\A$, which we study as part of our analysis, arises naturally in low-rank approximation when bounding error of the form $\|\A - \A\P\|$, where $\A\P$ is the projection of the input matrix $\A$ onto the rank $k$ span of the sketch $\S\A$ \cite{halko2011finding}. 

Some of the most popular and efficient sketching techniques are sparse random matrices, including CountSketch \cite{charikar2004finding}, OSNAP \cite{nelson2013osnap}, and LESS \cite{derezinski2021sparse}. In our setting, we require sketching matrices that are even sparser than typically used, since they are applied repeatedly, so we rely on recently proposed LESS-uniform sketches \cite{derezinski2021newton,chenakkod2023optimal} in conjunction with the randomized Hadamard transform~\cite{ailon2009fast,tropp2011improved}. 

Most of the popular matrix sub-sampling methods are based on i.i.d.\ importance sampling, for example using the so-called leverage scores \cite{drineas2006sampling,drineas2012fast}. More recently, a number of works have used non-i.i.d.\ combinatorial sampling techniques, such as volume sampling and determinantal point processes, for solving matrix problems including least squares and low-rank approximation (for an overview, see \cite{derezinski2021determinantal}). We build on some of these approaches internally as part of our analysis (Section~\ref{s:mu}).

\smallskip

\noindent
\textbf{Sketch-and-Project.} The sketch-and-project method was developed as a unified framework  for iteratively solving linear systems \cite{gower2015randomized}. Varying the distribution of the sketching matrix $\S$ and the projection type determined by a positive definite matrix $\B$, this general update rule reduces to a variety of classical methods such as  randomized Gaussian pursuit, Randomized Kaczmarz, and Randomized Newton methods, among others (e.g., \cite{gower2019rsn, gower2015randomized}). The applicability of the sketch-and-project framework spans beyond linear solvers, including extensions  aimed at solving linear and convex feasibility problems \cite{necoara2019randomized}, efficiently minimizing convex functions \cite{gower2019rsn,hanzely2020stochastic}, solving nonlinear equations \cite{yuan2022sketched} and inverting matrices \cite{gower2017randomized}. 

The performance of generic sketch-and-project type methods is often hindered by the cost of storing and applying the sketching matrix, as well as the complexity of solving the sketched system (projection step), especially when the sketch size is not very small. Special features of the sketch-and-project based algorithm presented in this work (Algorithm~\ref{alg:main})---such as ultra-sparse sketching matrices, momentum, and solving sketched systems iteratively---take care of the mentioned inefficiencies of the original sketch-and-project method and can be employed to refine the sketch-and-project methods beyond linear solvers; partial results in this direction were obtained in \cite{hanzely2018sega,gazagnadou2022ridgesketch,derezinski2021sparse,derezinski2023solving}. 

\smallskip

\noindent
\textbf{Random matrix theory.}
Due to the random nature of the sketching matrix $\S$, the analysis of Sketch-and-Project is greatly facilitated by advances in random matrix theory. In the asymptotic setting, when $\S$ has i.i.d.\ entries, the spectrum of $\S\A$ converges deterministically to a generalized Marchenko--Pastur distribution~\cite{silverstein1995empirical}. This spectral characterization enables a precise understanding of the projection matrix $\P$ and its expectation $\bar \P$ through deterministic equivalents in the asymptotic regime~\cite{hachem2007deterministic,rubio2011spectral}, recently extended beyond the i.i.d.\ setting to asymptotically free sketches by \cite{lejeune2024asymptotics}, who also applied this asymptotic analysis to Sketch-and-Project.

In finite dimensions, there are a variety of known results on matrix concentration \cite{vershynin_2018}. In particular, the largest and smallest singular values of i.i.d.\  random matrices are well characterized, e.g., \cite{rudelson2009smallest,oymak2018universality}. For other types of random matrices, recent universality results~\cite{brailovskaya2024universality} prove that a broad class of sparse and non-i.i.d.\ matrices concentrate around the same spectral profile as Gaussian matrices, enabling the application of established results to new types of extremely sparse sketches~\cite{chenakkod2023optimal}.

\section{Preliminaries}
\textbf{Notation.} We denote matrices with upper case bolded font ($\X$), vectors with lower case bolded font ($\x$), and scalars with basic font ($x$). 
The vector $\e_i$ denotes the $i$th coordinate vector. We denote by $[m]$ the set of integers from $1$ to $m$, $[m] := \{1, 2, \ldots m\}$. We denote expectation by $\E$ and probability by $\Pr$, where the randomness should be clear from context if not specified. The smallest and largest (non-zero) singular values of a matrix $\X$ are denoted by $\sigma_{\min}(\X)$ and $\sigma_{\max}(\X)$, respectively.  The norm $\|\cdot\|$ denotes the Euclidean or spectral norm. The condition number of a matrix $\X$ is denoted by $\kappa(\X) = \sigma_{\max}(\X)\sigma^{-1}_{\min}(\X)$. For a symmetric positive definite matrix $\A$, its induced norm is defined as $\|\x\|_\A = \sqrt{\x^\top \A\x},$ and we write $\A \preceq \B$ to denote the (Loewner) matrix ordering, i.e., to say that $\B - \A$ is positive semi-definite. For an event $\Ec$, we write $\neg\Ec$ to denote the complement of $\Ec$. Lastly, we utilize constants $c, c_1, c_2, \ldots, C, C_1, C_2, \ldots$ to denote absolute constants, which may hold different values from one instance to the next. We reserve the use of big-$O$ notation for complexity remarks, and we use the notation $\tilde O$ to hide logarithmic factors. To avoid difficulty in implicitly adjusting constants, we assume that $m, n, k, \ell, \delta$ are bounded away from 1. 

\smallskip

\noindent
\textbf{Sketching model.} We obtain fast solvers for our main results with Algorithm~\ref{alg:main}, which is based on the Sketch-and-Project framework \cite{gower2015randomized}, with ultra-sparse sketching matrices $\S$ and Nesterov's acceleration. For ultra-sparse sketching, we consider the following sparse sketching construction, following similar constructions in \cite{derezinski2021sparse,chenakkod2023optimal}.

\begin{dfn}[LESS-uniform]\label{d:less}
A sparse sketching matrix $\S\in\R^{k\times m}$ is called a LESS-uniform matrix with $s$ non-zeros per row if it is constructed out of independent and identically distributed (i.i.d.) rows, with the $i$th row of $\S$ given by
  \begin{align*}
    \sqrt{\frac ms}\sum_{j=1}^s r_{ij}\e_{I_{ij}}^\top,
  \end{align*}
  where $I_{i1},...,I_{is}$ are i.i.d.\ random indices sampled uniformly with replacement from $[m]$, and $r_{ij}$ are i.i.d.\
  Rademacher random variables, i.e., $\Pr(r_{ij}=1)=\Pr(r_{ij}=-1)=1/2$.
\end{dfn}
\noindent
This is one of the most natural ways to generate sparse sketching matrices. We can avoid more expensive techniques used in earlier works \cite{derezinski2024sharp} thanks to initial preprocessing with the randomized Hadamard transform~\cite{ailon2009fast}.
\begin{dfn}[Randomized Hadamard Transform]\label{d:rht}
    For any $m$ that is a power of 2, the Hadamard matrix $\H_m\in\R^{m\times m}$ is defined so that $\H_0 = 1$ and:
    \begin{align*}
        \H_m=\begin{bmatrix}
            \H_{m/2}&\H_{m/2}\\
            \H_{m/2}&-\H_{m/2}
        \end{bmatrix}.
    \end{align*}
    An $m\times m$ randomized Hadamard transform 
    (RHT) is a random matrix $\Q=\frac1{\sqrt m}\H_m\D$, where $\H_m$ is the Hadamard matrix and $\D$ is an $m\times m$ diagonal matrix with i.i.d.~Rademacher entries. Applying $\Q$ to a vector takes $O(m\log m)$ time.
\end{dfn}
\noindent
An advantage of the RHT is that it can be used to uniformize the leverage scores of a matrix, which is crucial in our analysis of both the first and second matrix moments of the random~projection~$\P$.
\begin{lem}[Lemma 3.3, \cite{tropp2011improved}]\label{l:rht}
    Consider a matrix $\U\in\R^{m\times d}$ such that $\U^\top\U=\I$ and let $\Q$ be the $m\times m$ Randomized Hadamard Transform. Then, the matrix $\tilde\U=\Q\U$ with probability $1-\delta$ has nearly uniform leverage scores,
    i.e., the norms of its rows are bounded as $\|\tilde\u_i\| \leq \sqrt{d/m}+\sqrt{8\log(m/\delta))/m}$ for all $i = 1, \ldots, m$. 
\end{lem}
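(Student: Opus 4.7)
The plan is to rewrite each row's Euclidean norm as the norm of a linear image of a Rademacher vector, pin down its expected square exactly, apply a convex-Lipschitz concentration inequality to control the deviation, and close with a union bound over the $m$ rows.

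Since $\Q=\tfrac1{\sqrt m}\H\D$, the $i$th row of $\tilde\U=\Q\U$ satisfies $\sqrt m\,\tilde\u_i = \U^\top\D\,\mathbf h_i$, where $\mathbf h_i\in\{\pm1\}^m$ denotes the $i$th row of $\H$ viewed as a column vector. Set $\eta:=\D\,\mathbf h_i$. Because $\D$ has i.i.d.\ Rademacher diagonal entries and $\mathbf h_i$ is a fixed $\pm1$ vector, the entries of $\eta$ are themselves $m$ independent Rademacher random variables, and
\[
\sqrt m\,\|\tilde\u_i\| \;=\; \|\U^\top\eta\| \;=:\; \phi(\eta).
\]

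Next, I would compute the second moment exactly: since $\E[\eta\eta^\top]=\I$ and $\U^\top\U=\I$,
\[
\E\bigl[\phi(\eta)^2\bigr] \;=\; \E[\eta^\top\U\U^\top\eta] \;=\; \tr(\U\U^\top) \;=\; \tr(\U^\top\U) \;=\; d,
\]
and hence $\E\,\phi(\eta)\le\sqrt d$ by Jensen's inequality.

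The key observation is that $\phi$ is a convex function of $\eta$ (a norm composed with a linear map) whose Euclidean Lipschitz constant is at most $\|\U^\top\|=1$. The Ledoux–Talagrand concentration inequality for convex Euclidean-Lipschitz functions of Rademacher vectors then yields a sub-Gaussian upper tail with a \emph{dimension-free} scale, of the form $\Pr(\phi(\eta)\ge\E\phi(\eta)+t)\le e^{-t^2/c}$ for an absolute constant $c$ which can be taken at most $8$ from standard Talagrand-type statements. Choosing $t=\sqrt{8\log(m/\delta)}$, combining with $\E\phi\le\sqrt d$, and dividing by $\sqrt m$ produces
\[
\Pr\!\Bigl(\|\tilde\u_i\| > \sqrt{d/m}+\sqrt{8\log(m/\delta)/m}\Bigr) \;\le\; \delta/m,
\]
and a union bound over $i=1,\dots,m$ delivers the claim.

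The only real obstacle is selecting the right concentration tool: a naive McDiarmid or Azuma bound phrased in terms of the coordinate-wise Lipschitz constants of $\phi$ would force $\sum_{j=1}^m\|\u_j\|^2=d$ (with $\u_j$ the $j$th row of $\U$) into the exponent denominator, losing the necessary dimension-free tail decay and inflating the bound by a factor of $\sqrt d$. Convex-Lipschitz concentration (or, equivalently, Hanson–Wright applied to the quadratic form $\eta^\top(\U\U^\top)\eta$) captures that the operator $\eta\mapsto\U^\top\eta$ has spectral norm $1$ regardless of how the row norms of $\U$ are distributed, which is exactly what pins down the desired tail scale.
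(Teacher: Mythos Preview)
The paper does not prove this lemma itself; it is quoted verbatim from \cite{tropp2011improved} and used as a black box. Your argument is correct and essentially reproduces Tropp's original proof: write each row norm as a convex $1$-Lipschitz function of a Rademacher vector, bound its mean by $\sqrt d$ via Jensen, apply Ledoux's convex-Lipschitz concentration inequality for the Rademacher measure (which supplies the constant $8$), and finish with a union bound over the $m$ rows.
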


As an auxiliary result, we obtain new bounds on the extreme singular values of the random matrix $\S\U$, where $\S$ is a $k\times m$ LESS-uniform embedding and $\U$ is an $m\times d$ isometric embedding matrix, i.e., such that $\U^\top\U=\I$.
Specifically, in Appendix \ref{sec:gauss-univ} we show that $\S\U$ matrices have extreme singular values of the same order as after Gaussian sketching. 
\begin{lem}\label{l:universality}
      Consider an $m\times d$ matrix $\U$ such that $\U^\top\U=\I$ and such that the norm of each row of $\U$ is bounded by $C\sqrt{d/m}$. There are universal constants $c_1,c_2,C_1,C_2>0$ such that if $\S$ is a $k\times m$ LESS-uniform matrix with $1\leq k\leq d/2$, $s\ge C_1\log^4(d/\delta)$ non-zeros per row, and dimension $d\geq C_2\log(1/\delta)$, then with probability $1-\delta$
      \begin{align*}
          c_1\sqrt d\leq \sigma_{\min}(\S\U)\leq \sigma_{\max}(\S\U)\leq c_2 \sqrt d.
      \end{align*}
\end{lem}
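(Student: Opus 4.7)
The plan is to pass to the Hermitian dilation of $\S\U$, identify the corresponding Gaussian model via its variance profile, and then transfer to the LESS-uniform ensemble using the matrix universality theorem of \cite{brailovskaya2022universality}. Concretely, set
\[
\M = \begin{pmatrix} 0 & \S\U \\ (\S\U)^\top & 0 \end{pmatrix} \in \R^{(k+d)\times(k+d)},
\]
whose nonzero eigenvalues are precisely $\pm\sigma_i(\S\U)$, so controlling $\sigma_{\min}(\S\U)$ and $\sigma_{\max}(\S\U)$ reduces to controlling the extreme eigenvalues of $\M$. Because the rows of $\S$ are i.i.d.\ and mean-zero, $\M=\sum_{i=1}^k\M_i$ decomposes as a sum of $k$ independent centered Hermitian summands, each depending only on the $i$-th row of $\S$. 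A direct computation using $\U^\top\U=\I_d$ and $\E[\S_i^\top\S_i]=\I_m$ yields the block-diagonal covariance
\[
\E[\M^2]=\begin{pmatrix}d\,\I_k & 0 \\ 0 & k\,\I_d\end{pmatrix},
\]
which is precisely the covariance obtained if $\S$ were an i.i.d.\ Gaussian matrix $\G$: by rotational invariance and $\U^\top\U=\I_d$, $\G\U$ would be $k\times d$ with i.i.d.\ standard Gaussian entries, for which Davidson--Szarek gives $\sqrt d - \sqrt k - O(1) \le \sigma_{\min}(\G\U) \le \sigma_{\max}(\G\U) \le \sqrt d + \sqrt k + O(1)$ with high probability. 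Since $k \le d/2$ already makes both extremes $\Theta(\sqrt d)$, this is the target of our comparison.

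Next I would transfer this Gaussian conclusion to the LESS-uniform case via universality. The parameters appearing in \cite{brailovskaya2022universality} are the covariance size $\sigma^2 := \|\E[\M^2]\| = d$, a variance-of-bilinear-forms parameter that can also be bounded by $d$ using $\U^\top\U=\I_d$, and the worst-case summand norm $R := \max_i \|\M_i\| = \max_i \|\S_i\U\|$. Writing $\S_i\U = \sqrt{m/s}\sum_{j=1}^s r_{ij}\,\u_{I_{ij}}^\top$ and using the bounded-leverage hypothesis $\|\u_j\| \le C\sqrt{d/m}$, a triangle inequality gives the crude deterministic bound $R \le C\sqrt{sd}$. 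The universality theorem then asserts that the spectrum of $\M$ sits within an additive correction of the spectrum of the Gaussian model, with the correction a polylogarithmic combination of $\sigma$ and $R$.

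The final step is to verify that this correction is $o(\sqrt d)$, which is exactly where the sparsity hypothesis $s \ge C_1\log^4(d/\delta)$ enters: this threshold ensures the $R$-dependent term stays well below the $\Theta(\sqrt d)$ slack present in the Gaussian bounds $\sqrt d \pm \sqrt k$ (using $k \le d/2$), while the assumption $d \ge C_2\log(1/\delta)$ absorbs the $\log(1/\delta)$ overhead coming from the failure probability. Putting everything together yields universal constants $c_1,c_2$ for which the asserted bounds hold with probability at least $1-\delta$.

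The main technical obstacle will be the precise bookkeeping in the universality step: the entries within each row of $\S\U$ are \emph{not} independent, since they share the uniform indices $I_{i1},\ldots,I_{is}$, so the moment parameters must be computed at the level of independent \emph{rows} rather than independent entries. In particular, the $\log^4$ exponent on $s$ is tightly coupled to the $(\log\dim)^{3/4}$-type factors in the Brailovskaya--van Handel bound through the row-norm estimate $R = O(\sqrt{sd})$, and carefully tracking this coupling, along with verifying that the variance-of-bilinear-forms parameter really is of order $d$ rather than something larger, is where the bulk of the work lies.
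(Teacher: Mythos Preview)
Your overall architecture---Hermitian dilation, identifying the Gaussian model, then transferring via Brailovskaya--van Handel---matches the paper's proof exactly. However, your row-level decomposition $\M=\sum_{i=1}^k\M_i$ does not work, and this is not merely a bookkeeping issue.

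With $R=\max_i\|\S_i\U\|\le C\sqrt{sd}$ (your triangle-inequality bound, which is essentially tight), the term $Rt$ in the universality correction $\epsilon(t)=\sigma_*t^{1/2}+R^{1/3}\sigma^{2/3}t^{2/3}+Rt$ is of order $\sqrt{sd}\,\log(d/\delta)$. For this to be at most a small multiple of $\sqrt d$ you would need $s\lesssim 1/\log^2(d/\delta)$, which directly contradicts the hypothesis $s\ge C_1\log^4(d/\delta)$. So the row-level $R$ grows with $s$, whereas you need it to \emph{shrink} with $s$. Similarly, your coarse estimate $\sigma_*\lesssim\sqrt d$ makes $\sigma_*t^{1/2}\gtrsim\sqrt{d\log(d/\delta)}\gg\sqrt d$; the correct bound is $\sigma_*=O(1)$.

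The fix, and what the paper does, is to decompose one level finer: write
\[
\sym(\S\U)=\sum_{i\in[k],\,j\in[s]}\Z_{ij},\qquad \Z_{ij}=\sym\Big(\sqrt{\tfrac ms}\,r_{ij}\,\e_i\u_{I_{ij}}^\top\Big).
\]
Each $\Z_{ij}$ depends only on the pair $(r_{ij},I_{ij})$, and by the LESS-uniform definition these pairs are mutually independent across both $i$ and $j$; moreover $\E[\Z_{ij}]=0$ since $\E[r_{ij}]=0$. So your worry that dependence forces row-level summands is misplaced: the entries of the \emph{product} $\S\U$ are dependent, but the atomic rank-one building blocks of the sketch are not. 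With this decomposition one gets $R=\|\Z_{ij}\|\le\sqrt{m/s}\cdot C\sqrt{d/m}=C\sqrt{d/s}$, and then $Rt=O(\sqrt{d/s}\,\log(d/\delta))$ and $R^{1/3}\sigma^{2/3}t^{2/3}=O((d/s)^{1/6}d^{1/3}\log^{2/3}(d/\delta))$ are both $\le\sqrt d/6$ once $s\ge C_1\log^4(d/\delta)$, which is exactly where that exponent comes from.
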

\noindent
 We use the upper bound from Lemma \ref{l:universality} in our first-moment projection analysis (Section~\ref{s:mu}), while the lower bound is needed for the second-moment analysis (Section~\ref{s:nu}). Crucially, a similar result does not hold if we replace the LESS-uniform matrix with a uniform sub-sampling matrix (the smallest singular value of a uniformly sub-sampled $\U$ can be arbitrarily close to zero with a non-negligible probability when the sample size $k$ is less than $d$).

\smallskip

\noindent
\textbf{Numerical stability.} Even though, for simplicity, our results are stated in the exact arithmetic (Real-RAM) model, we provide a stability analysis of our algorithm in Section~\ref{s:inexact}, which shows that its convergence rate is stable with respect to the error in the projection steps. 
Extending this analysis to the Word-RAM model with word sizes polylogarithmic in the parameters of the problem is a promising direction, since all other linear algebraic operations we rely on are known to be stable \cite{demmel2007fast}. 
This suggests another potential advantage of Sketch-and-Project type solvers over deterministic Krylov methods, for which ensuring stability is an ongoing area of research \cite{musco2018stability,peng2021solving}.

\section{Main Algorithm and Convergence Guarantees}

Our main results consist of convergence theorems describing the iteration-wise convergence rate of Algorithm~\ref{alg:main}, revealing sharp dependencies on the spectral profile of matrix $\A$. Here, note that the algorithm takes as an additional parameter positive definite matrix $\B$, which controls the type of projection operator used in Sketch-and-Project.

\begin{algorithm}[!ht]
\caption{Sketch-and-Project with Nesterov's acceleration}
\label{alg:main}
\begin{algorithmic}[1]
\State \textbf{Input: }matrix $\A\in\R^{m\times n}$, vector $\b \in \R^m$, p.d.\ matrix $\B\in\R^{n\times n}$, sketch size $k$, iterate $\x_0$, iteration $t_{\max}$, $\beta=1-\sqrt{\frac{\mu}{\nu}}$, $\gamma=\frac1{\sqrt{\mu\nu}}$, $\alpha=\frac1{1+\gamma\nu}$;
\State $\v_0 \leftarrow\x_0$;
\For{$t = 0, 1, \ldots, (t_{\max}-1)$}
\State $\y_t \leftarrow \alpha \v_t + (1-\alpha)\x_t$;
\State Generate sketching matrix $\S$;
\State Solve $(\S\A\B^{-1}\A^\top\S^\top) \u_t = \S\A\y_t-\S\b$ for $\u_t$
\Comment{Possibly inexactly, see Section \ref{s:inexact}}
\label{line-u}
\State $\w_t \leftarrow \B^{-1}\A^\top\S^\top\u_t$; \label{line-w}
\State $\x_{t+1} \leftarrow \y_t - \w_t$; \Comment{$\x_{t+1} = \mathrm{argmin}_\x\|\x-\y_t\|_{\B}\ \text{ s.t. }\S\A\x=\S\b$}
\State $\v_{t+1} \leftarrow \beta\v_t + (1-\beta)\y_t - \gamma\w_t$;
\EndFor \\
\Return $\tilde{\x} = \x_{t_{\max}}$; \Comment{Solves $\A\x=\b$.}
\end{algorithmic}
\end{algorithm}

We first present the more general convergence result for an $m\times n$ matrix $\A$, where for simplicity we assume that $m\geq n$ and $\A$ has full column rank. To fully capture the convergence properties in terms of the spectral profile of $\A$, we rely on the following notion of average condition number, $\bar\kappa_{\ell:q}$, parameterized by two indices $1\leq \ell<q\leq n$ describing the range of singular values on which the condition number is computed:
\begin{align*}
    \bar\kappa_{\ell:q} = \Big(\frac1{q-\ell}\sum_{i=\ell+1}^q\frac{\sigma_i^2}{\sigma_q^2}\Big)^{1/2},
\end{align*}
where $\sigma_1\geq\sigma_2\geq ...\geq \sigma_n>0$ are the singular values of $\A$. We use $\bar\kappa_\ell:=\bar\kappa_{\ell:n}$ as a shorthand.

\begin{thm}\label{thm:main}
    Consider an $m\times n$ matrix $\A$ with full column rank, and a vector $\b$ such that the linear system $\A\x=\b$ is consistent. Suppose that we preprocess the linear system with a randomized Hadamard transform $\Q$ by setting $\A\leftarrow\Q\A$ and $\b\leftarrow \Q\b$, and choose $\B=\I$ in Algorithm~\ref{alg:main}. Also, let the sketching matrices $\S$ be LESS-uniform with sketch size $k\geq C_1\log(m/\delta)$, $2k \leq m$ and $s\geq C_2\log^4(n/\delta)$ non-zeros per row. Then, conditioned on a $1-\delta$ probability event that only depends on the randomness of $\Q$, 
    \begin{align*}
        \E\big[\|\x_t-\x^*\|^2\big]\leq 2\Big(1- \frac{c_1 \ell}{n\bar\kappa_{\ell}\bar\kappa_{\ell:2k}}\Big)^t\|\x_0-\x^*\|^2\qquad\text{for}\qquad \ell= \Big\lceil\frac {c_2k}{\log k}\Big\rceil,
    \end{align*}
    where $\x^*$ is the minimum norm solution.
    Moreover, the algorithm can be implemented to run in time $O(mn\log m + t(nk(s+\log(n\bar\kappa_\ell)) + k^\omega))$.
\end{thm}
\begin{remark}
    While our results are stated for $\A$ with full column rank, they should extend naturally to rank-deficient $\A$ by operating on the subspace defined by the column span of~$\A$, following arguments similar to those of \cite{gower2015stochastic}. We focus here on the full-rank setting for the sake of better clarity of the notation and analysis.
\end{remark}

In the case where $\A$ is positive definite, the additional structure provides us with an improved condition number dependence in the convergence rate, by a square root factor. This is achieved by using a modified projection operator dictated by the choice of $\B$. Here, the average condition number is slightly redefined in terms of the eigenvalues of~$\A$ (instead of its singular values) to suite the positive definite setting:
\begin{align*}
    \tilde\kappa_{\ell:q} =\frac1{q-\ell}\sum_{i=\ell+1}^q\frac{\lambda_i}{\lambda_q},
\end{align*}
where $\lambda_1\geq\lambda_2\geq ...\geq \lambda_n>0$ are the eigenvalues of the $n\times n$ positive definite $\A$. Here, similarly as before, we use $\tilde\kappa_\ell$ as a shorthand for $\tilde\kappa_{\ell:n}$.
\begin{thm}\label{thm:psd}
Consider an $n\times n$ positive definite matrix $\A$ and a vector $\b\in\R^n$. Suppose that we initialize Algorithm \ref{alg:main} by setting both $\A$ and $\B$ to $\Q\A\Q^\top$ and replacing $\b$ with $\Q\b$. Also, let the sketching matrices $\S$ be LESS-uniform with sketch size $k\geq C_1\log(n/\delta)$, $2k \leq n$ and $s\geq C_2\log^4(n/\delta)$ non-zeros per row. Then, conditioned on a $1-\delta$ probability event that only depends on the randomized Hadamard transform $\Q$,
    \begin{align*}
        \E\big[\|\x_t-\x^*\|_{\A}^2\big]\leq 2\Big(1- \frac{c_1 \ell}{n\sqrt{\tilde\kappa_{\ell}\tilde\kappa_{\ell:2k}}}\Big)^t\|\x_0-\x^*\|_{\A}^2\qquad\text{for}\qquad \ell= \Big\lceil\frac {c_2k}{\log k}\Big\rceil,
    \end{align*}
    with $\x^*=\A^{-1}\b$ denoting the solution. Moreover, the algorithm can be implemented to run in time $O(n^2\log n + t(nks + k^\omega))$.
\end{thm}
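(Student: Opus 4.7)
The plan is to reduce Theorem~\ref{thm:psd} to the unstructured analysis underlying Theorem~\ref{thm:main} via a change of variables exposing an effective ``square-root'' matrix. Let $\tilde\A := \Q\A\Q^\top$ denote the preprocessed PD matrix (with eigendecomposition $\tilde\A = (\Q\V)\Sigma(\Q\V)^\top$), and set $\M := (\Q\V)\Sigma^{1/2}$, so that $\M\M^\top = \tilde\A$ and $\M$ has singular values $\sqrt{\lambda_1},\dots,\sqrt{\lambda_n}$. First I would verify that with $\B = \tilde\A$ we have $\B^{-1}\A^\top = \I$, so the sketched solve in line~\ref{line-u} reduces to $(\S\tilde\A\S^\top)\u_t = \S\tilde\A\y_t - \S\tilde\b$ with $\w_t = \S^\top \u_t$. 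Transforming iterates by $\xi_t = \tilde\A^{1/2}\x_t$ converts the $\A$-norm projection into a Euclidean projection, and the associated projection matrix
\[
\P \;=\; \tilde\A^{1/2}\S^\top(\S\tilde\A\S^\top)^{-1}\S\tilde\A^{1/2}
\]
is exactly the one that would arise from running Algorithm~\ref{alg:main} with $\B = \I$ on the unstructured matrix $\M$. Since $\|\x_t - \x^*\|_{\tilde\A} = \|\xi_t - \xi^*\|$, the Nesterov convergence identity~\eqref{eq:convergence-intro} transfers verbatim, and it suffices to control $\mu$ and $\nu$ for this $\P$.

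Next I would feed $\M$ into the two moment lemmas that drive the proof of Theorem~\ref{thm:main}. The left singular vectors of $\M$ are the columns of $\Q\V$, and since $\V$ is orthogonal, Lemma~\ref{l:rht} guarantees that with probability $1 - \delta$ all rows of $\Q\V$ have squared norm $1 + o(1)$, satisfying the uniform-leverage hypothesis required by Lemmas~\ref{l:mu-less} and~\ref{l:nu-gaussian}. The first-moment lemma then yields $\E[\P] \succeq_\delta \tilde\A(\tilde\A + \lambda\I)^{-1}$ with $\lambda = \tfrac{1}{\ell}\sum_{i>\ell}\lambda_i$ and $\ell \geq c_2 k/\log k$, from which $\mu \gtrsim \ell/(n\tilde\kappa_\ell)$. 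The second-moment lemma, combined with Lemma~\ref{l:smin-less} applied to $\S\M$, which here gives $\sigma_{\min}^2(\S\M) = \Omega(2k\lambda_{2k} + \sum_{i>2k}\lambda_i)$, yields $\nu \lesssim (n/\ell)\,\tilde\kappa_{\ell:2k}$. The substitution $\sigma_i \mapsto \sqrt{\lambda_i}$ converts the general-case averaged quantities $\bar\kappa_{\ell:q}$ from Theorem~\ref{thm:main} into $\sqrt{\tilde\kappa_{\ell:q}}$, which is precisely the source of the square-root improvement; plugging into~\eqref{eq:convergence-intro} produces $\sqrt{\mu/\nu} \gtrsim \ell/(n\sqrt{\tilde\kappa_\ell\,\tilde\kappa_{\ell:2k}})$, the claimed rate, on the $1 - \delta$ intersection of the RHT and universality events.

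For the runtime, the preprocessing $\A \leftarrow \Q\A\Q^\top$ requires $n$ Hadamard applications on each side and costs $O(n^2\log n)$. At each iteration I would form the $k \times n$ dense matrix $\S\tilde\A$ in time $O(nks)$, then the $k \times k$ Gram $\S\tilde\A\S^\top$ in time $O(k^2 s)$ using sparsity of $\S^\top$, solve the genuinely $k \times k$ PSD linear system directly by Cholesky in $O(k^\omega)$, apply $\S^\top$ to $\u_t$ in $O(ks)$, and perform the iterate updates in $O(n)$; totaling $O(nks + k^\omega)$ and avoiding the $\log(k\kappa/\delta)$ overhead that appears in Theorem~\ref{thm:main} because here the projection system is already small enough to solve exactly. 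The hard part will be producing the $\nu$ bound in the stated form: since $\M$ does not have a flat spectrum, the second-moment analysis together with Lemma~\ref{l:smin-less} must be executed carefully with the eigenvalues $\lambda_i$ to extract precisely $\tilde\kappa_{\ell:2k}$ rather than a coarser averaged quantity, and to verify that the partitioned spectral contribution from the top $\ell$ directions (captured by $\bar\P$) and the tail (captured by $\sigma_{\min}(\S\M)$) combine with the right exponents. The change of variables, the first-moment bound, and the runtime accounting are all direct specializations of the non-PSD analysis.
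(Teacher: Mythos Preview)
Your proposal is correct and takes essentially the same approach as the paper: recognize that with $\B=\bar\A$ the projection matrix becomes $\P=\bar\A^{1/2}\S^\top(\S\bar\A\S^\top)^{-1}\S\bar\A^{1/2}$, which (up to orthogonal conjugation) is exactly the Sketch-and-Project projection for the square-root matrix, then invoke Corollaries~\ref{c:mu-less} and~\ref{c:nu-bound} with $\sigma_i\mapsto\sqrt{\lambda_i}$ to get $\mu\gtrsim \ell/(n\tilde\kappa_\ell)$ and $\nu\lesssim (n/\ell)\tilde\kappa_{\ell:2k}$, and do the same $O(nks+k^2s+k^\omega)$ per-iteration accounting. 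The paper phrases the reduction as ``replace $\A$ by $\A^{1/2}$ in the corollaries'' and cites the general-$\B$ exact convergence bound (Remark~\ref{r:exact-analysis}) directly, whereas you go through an explicit change of variables $\xi_t=\tilde\A^{1/2}\x_t$ and the auxiliary matrix $\M=(\Q\V)\Sigma^{1/2}$; these are equivalent, since your $\M$ differs from $\Q\A^{1/2}$ only by right-multiplication by $\V^\top$, which does not affect $\mu$ or $\nu$. One small remark: the ``hard part'' you flag at the end is not actually hard---the $\nu$ bound in Corollary~\ref{c:nu-bound} already produces $\bar\kappa_{\ell:2k}^2$ in terms of the squared singular values of the input matrix, so substituting $\sigma_i^2\to\lambda_i$ gives $\tilde\kappa_{\ell:2k}$ immediately with no extra work.
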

We next briefly discuss how the time complexities given in Theorem \ref{t:informal-main} and \ref{t:informal-psd} can be recovered from the above convergence results, with the details relegated to Section \ref{s:completing}.

Following the assumptions of Theorem \ref{t:informal-main}, suppose that we are given an $n\times n$ invertible matrix $\A$ and an $n$-dimensional vector $\b$. If the sketch size in Algorithm \ref{alg:main} satisfies $k=\tilde O(n^{\frac1{\omega-1}})$ and $\ell=\Omega(k/\log k)$, then we can write down the time complexity described by Theorem~\ref{thm:main} as follows:
\begin{align*}
    \tilde O\Big(n^2 + \bar\kappa_{\ell}\bar\kappa_{\ell:2k}\frac n\ell (nk + k^\omega)\Big)
= \tilde O\Big(\bar\kappa_{\ell}\bar\kappa_{\ell:2k}(n^2 + nk^{\omega-1})\Big) = \tilde O\big(\bar\kappa_{\ell}\bar\kappa_{\ell:2k}n^2\big),
\end{align*}
and an analogous calculation holds for Theorem \ref{thm:psd}, with $\bar\kappa_\ell\bar\kappa_{\ell:2k}$ replaced by $\sqrt{\tilde\kappa_\ell\tilde\kappa_{\ell:2k}}$. It remains to translate those condition number quantities into our spectral tail condition number $\kappa_\ell=\frac{\sigma_\ell}{\sigma_n}$. Fortunately, in Section \ref{s:completing} we show that one can always find a sketch size $k=\tilde O(n^{\frac1{\omega-1}})$ and the corresponding $\ell=\tilde\Omega(k)$, such that those quantities are bounded by $\tilde O(\kappa_\ell)$ and $\tilde O(\sqrt{\kappa_\ell})$, respectively, which is sufficient to establish Theorems~\ref{t:informal-main}~and~\ref{t:informal-psd}. In the next sections, we present the convergence analysis of Sketch-and-Project with Nesterov's acceleration, needed for Theorems \ref{thm:main} and \ref{thm:psd}, which is our main technical contribution.

\section{First-Moment Projection Analysis via Optimal DPP Reduction}
\label{s:mu}
In this section, we show how to lower bound the smallest eigenvalue of the expected projection matrix, i.e., $\mu = \lambda_{\min}(\E[\P])$, where $\P=(\S\A)^\dagger\S\A$ is the random projection matrix central to the convergence guarantee \eqref{eq:convergence-intro} for Sketch-and-Project. In fact, we give a more general result, lower bounding the entire expected projection in positive semidefinite ordering, which will be necessary later for the analysis of the second-order term $\nu$.

\begin{lem}\label{l:mu-less}
Suppose that a rank $n$ matrix $\A\in\R^{m\times n}$ is transformed by the randomized Hadamard transform (RHT), i.e., $\A\leftarrow \Q\A$. 
If $\S$ is a $k\times m$ LESS-uniform sketching matrix with sketch size $k\in[ C_1\log (m/\delta),n] $ and $s$ non-zeros per row, then, as long as $s=O(1)$ or $s \geq C_2\log^4(n/\delta)$, and $n \geq C_3 \log(1 / \delta)$, conditioned on an RHT property that holds with probability $1-\delta$, there exists $\ell\geq \frac{ck}{\log k}$ such that the matrix $\P=(\S\A)^{\dagger}\S\A$ satisfies:
\begin{align*}
\E[\P] \succeq\A^\top\A(\A^\top\A+\lambda\I)^{-1} - \delta\I,\qquad\text{where}\qquad \lambda = \frac1\ell\sum_{i>\ell}\sigma_i^2.
\end{align*}
\end{lem}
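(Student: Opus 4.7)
The plan is to establish this Loewner lower bound by a coupling-based reduction from LESS-uniform sketching (after RHT preconditioning) to a determinantal point process (DPP) of size $\ell\geq ck/\log k$, followed by a standard DPP expected-projection identity. The target expression $\A^\top\A(\A^\top\A+\lambda\I)^{-1}$ with $\lambda=\frac1\ell\sum_{i>\ell}\sigma_i^2$ is exactly the quantity that a suitably calibrated $\ell$-DPP produces in expectation, so once the coupling is in place the bound essentially drops out.

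First I would reduce LESS-uniform sketching to uniform sub-sampling with replacement. In the constant-$s$ regime, each row of $\S\A$ is a sparse signed combination of $s$ rows of $\A$, whose row-span dominates that of a uniform sub-sample of the same indices (yielding Loewner monotonicity of $\P$), letting us reduce to uniform sampling directly. In the large-$s$ regime $s\geq C_2\log^4(n/\delta)$, the Gaussian universality machinery behind Lemma~\ref{l:universality} lets one compare $\S\A$ to a Gaussian sketch of comparable size, for which the analogous bound is well understood. Applying Lemma~\ref{l:rht} to $\Q\A$ guarantees that its leverage scores are uniformly $O(n/m)$ on a $1-\delta$ event, placing us in the regime where uniform sampling is as effective as its leverage-score refinement.

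Next I would construct the key coupling: starting from a uniform sub-sample $T\subseteq[m]$ of size $k$, produce a subset $T'\subseteq T$ of size $\ell\geq ck/\log k$ whose distribution is exactly the $\ell$-DPP with ridge kernel at scale $\lambda$. Since $T'\subseteq T$, monotonicity of the projection in the sampled row set yields $\E[\P_T]\succeq \E[\P_{T'}]$. The crucial quantitative claim is that a carefully designed acceptance/rejection sampler, exploiting the flatness of the RHT-preconditioned leverage scores, can realize this coupling with only a $\log k$ factor of over-sampling, rather than the $\log^3 n$ of the MCMC-based approach in \cite{derezinski2023solving}. A coupon-collector-type lower bound (as mentioned in Contribution~1) shows that the $\log k$ factor is essentially optimal.

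The final step invokes the standard DPP expected-projection identity (cf.\ \cite{derezinski2020improved}): for the $\ell$-DPP at ridge scale $\lambda$, one has $\E[\P_{T'}]\succeq \A^\top\A(\A^\top\A+\lambda\I)^{-1}$ in the Loewner order. Chaining the three reductions and absorbing the RHT failure event together with any universality slack into the additive $-\delta\I$ completes the proof. The main obstacle is the second step: designing the custom uniform-to-DPP coupling that simultaneously achieves the exact DPP marginals and the tight $\log k$ over-sampling is the delicate combinatorial heart of the argument, and it is precisely this improvement over \cite{derezinski2023solving} that drives the quantitative strengthening of the lemma and, downstream, of Theorem~\ref{t:informal-main}.
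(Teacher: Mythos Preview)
Your high-level architecture (monotonicity of projections, a coupling to a DPP, then the DPP projection formula) matches the paper, but two concrete choices in your plan would not deliver the stated bound. First, you couple to a \emph{fixed-size} $\ell$-DPP and invoke \cite{derezinski2020improved} for the inequality $\E[\P_{T'}]\succeq \A^\top\A(\A^\top\A+\lambda\I)^{-1}$. The identity in that reference (Lemma~\ref{l:dpp-formula} here) is an \emph{equality} for the \emph{random-sized} $\DPP(\tfrac1\lambda\A\A^\top)$, not for a fixed-cardinality $\ell$-DPP; the paper is explicit that switching from a $k$-DPP to a random-sized DPP is exactly what unlocks both the full-matrix Loewner bound and the improved $\log k$ oversampling. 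With a fixed $\ell$-DPP you are back in the regime of \cite{derezinski2023solving}, which yields only a $\lambda_{\min}$ bound and the $\log^3 n$ factor. The paper's coupling instead runs Algorithm~\ref{alg:dpp} (Bernoulli mixture over Projection-DPPs) followed by the rejection sampler Algorithm~\ref{alg:p-dpp}, and Lemma~\ref{l:p-dpp} gives the $O(\ell\log\ell)$ sample count directly.

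Second, your reduction from LESS-uniform to uniform row sampling of $\A$ via ``row-span domination'' does not work as stated: the row space of $\S\A$ is contained in the span of the $ks$ touched rows of $\A$, not the other way around, so projection monotonicity goes in the wrong direction for a lower bound on $\E[\P]$. The paper avoids this entirely by an \emph{expansion trick}: it builds the $N\times m$ matrix $\bar\S$ (with $N=(2m)^s$) enumerating all possible LESS rows, so that a LESS-uniform sketch of $\A$ is literally an i.i.d.\ uniform row sample of $\bar\A=\bar\S\Q\A$. The DPP is then defined on $\bar\A$, and the RHT plus (for large $s$) the one-row Gaussian-universality bound of Lemma~\ref{l:universality} are used only to show that the leverage-score proposal $q$ in Algorithm~\ref{alg:p-dpp} has total-variation distance $O(s\delta)$ from uniform on $[N]$. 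This is a much more localized use of universality than comparing $\S\A$ globally to a Gaussian sketch, and it is what makes the coupling go through uniformly for both $s=O(1)$ and $s\geq C\log^4(n/\delta)$.
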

First, we show how the above result can be translated into a lower bound on the smallest eigenvalue of the expected projection, and therefore, on $\mu$.
\begin{cor}\label{c:mu-less}
    Under the assumptions of Lemma \ref{l:mu-less}, and choosing $\delta\leq \frac1{n\bar\kappa_\ell^2}$, we have
    \begin{align*}
        \lambda_{\min}(\E[\P]) \geq \frac{\ell/4}{n\bar\kappa_\ell^2}\quad\text{for}\quad \ell= \Big\lceil\frac{ck}{\log k}\Big\rceil,\quad \bar\kappa_\ell = \Big(\frac1{n-\ell}\sum_{i>\ell}\frac{\sigma_i^2}{\sigma_n^2}\Big)^{1/2}.
    \end{align*}
\end{cor}
\begin{proof}
 Note that we can choose a sufficiently large $C_1$ in the lemma to make sure that $k$ is large enough so that $\ell=\lceil\frac {ck}{\log k}\rceil\in[4,n/2]$. Thus, we have $\lambda = \frac1\ell\sum_{i>\ell}\sigma_i^2\geq \frac 1{n-\ell}\sum_{i>\ell}\sigma_i^2\geq \sigma_n^2$. Note that if the lower bound of Lemma \ref{l:mu-less} holds with some $\ell\geq \frac{ck}{\log k}$, then it holds with $\ell = \lceil\frac{ck}{\log k}\rceil$. Thus, using Lemma \ref{l:mu-less} with $\delta\leq \frac1{n\bar\kappa_\ell^2}$, we have:
 \begin{align*}
     \lambda_{\min}(\E[\P]) 
     &\geq \lambda_{\min}(\A^\top\A(\A^\top\A+\lambda\I)^{-1}) - \delta
     = \frac{\sigma_n^2}{\sigma_n^2 + \lambda} - \delta
     \geq \frac{\sigma_n^2}{2\lambda} - \delta
     =\frac{\ell/2}{(n-\ell)\bar\kappa_l^2}-\delta\geq \frac{\ell/4}{n\bar\kappa_\ell^2},
 \end{align*}
 which concludes the proof of the corollary.
\end{proof}
\begin{remark}
    We note that for the case of $s=1$ non-zeros per row, the LESS-uniform sketch is precisely equivalent to uniform sub-sampling. In this case, Corollary \ref{c:mu-less} is a direct improvement over the result of \cite{derezinski2023solving}, from 
    $\ell=\Omega(\frac {k}{\log^3 m})$ to $\ell=\Omega(\frac {k}{\log k})$ 
     (their Lemma~4.2).   
\end{remark}

Before we proceed with the proof of Lemma \ref{l:mu-less}, we introduce some definitions related to determinantal point processes, a family of non-i.i.d.~sub-sampling distributions commonly studied in machine learning \cite{kulesza2012determinantal}, that are central to our analysis.
\begin{dfn}\label{d:dpp}
    Given an $m\times m$ positive semi-definite matrix $\B$, a determinantal point process $\setS\sim\DPP(\B)$ is a distribution over all sets $\setS\subseteq\{1,...,m\}$ such that $\Pr(\setS)\propto \det(\B_{\setS,\setS})$, where $\B_{\setS,\setS}$ denotes the prinicipal submatrix of $\B$ indexed by $\setS$.
\end{dfn}
The distribution of $\setS\sim\DPP(\B)$ is defined over all $2^m$ subsets of $\{1,...,m\}$, which means that the size of $\setS$ is also a random variable (albeit one can show that it is highly concentrated around its mean), hence we call this a random-sized DPP. If we condition the DPP on a particular set size, $|\setS|=k$, the resulting distribution is often called a (fixed-size) $k$-DPP.

The key benefit of determinantal point processes in the context of Sketch-and-Project is that, thanks to a Cauchy-Binet-type determinantal summation formula, the expectation of the random projection matrix $\P$ has a simple closed form under DPP sampling, which, as we see below, is needed for our analysis.

\begin{lem}[Lemma 5, \cite{derezinski2020improved}]\label{l:dpp-formula}
Given matrix $\A\in\R^{m\times n}$, let $\setS\sim \DPP(\frac1\lambda\A\A^\top)$ for some $\lambda>0$, and define $\S \in \R^{|\setS| \times m}$ to be the sampling matrix  corresponding to $\setS$, i.e., a matrix consisting of rows that are standard basis vectors $\mathbf{e}_i^\top$ for $i\in \setS$, so that $\S\A$ is the submatrix of the rows of $\A$ indexed by $\setS$. Then, we have
\begin{align}\label{eq:proj_lower_bound}
\E[(\S\A)^\dagger\S \A] = \A^\top\A(\A^\top\A+\lambda\I)^{-1}.
\end{align}
\end{lem}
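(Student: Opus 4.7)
The plan is to establish the identity by combining the closed-form DPP probability with a matrix polynomial identity derived by differentiating the characteristic polynomial of $\A^\top\A$ in two different ways.

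First I would obtain the explicit DPP weights. By Sylvester's identity, $\det(\I_m + \tfrac{1}{\lambda}\A\A^\top) = \det(\I_n + \tfrac{1}{\lambda}\A^\top\A) = \lambda^{-n}\det(\A^\top\A+\lambda\I)$, so
\[
\Pr(\setS = S) = \frac{\lambda^{n-|S|}\det(\A_S\A_S^\top)}{\det(\A^\top\A+\lambda\I)},
\]
and any $S$ with rank-deficient $\A_S$ (in particular $|S|>n$) contributes zero. For the surviving $S$, the projection factors as $(\S\A)^\dagger\S\A = \A_S^\top(\A_S\A_S^\top)^{-1}\A_S$, and absorbing $\det(\A_S\A_S^\top)$ into the inverse to form the adjugate yields
\[
\det(\A^\top\A+\lambda\I)\cdot\E[(\S\A)^\dagger\S\A] = \sum_{S\subseteq[m]}\lambda^{n-|S|}\A_S^\top\,\mathrm{adj}(\A_S\A_S^\top)\A_S.
\]

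The claim therefore reduces to the matrix polynomial identity
\[
\sum_{S\subseteq[m]}\lambda^{n-|S|}\A_S^\top\,\mathrm{adj}(\A_S\A_S^\top)\A_S \;=\; \A^\top\A\cdot\mathrm{adj}(\A^\top\A+\lambda\I),\qquad(\star)
\]
which I would establish by computing $\tfrac{1}{2}\A^\top\nabla_\A[\det(\A^\top\A+\lambda\I)]$ in two different ways. Directly, the matrix chain rule gives $\nabla_\A\det(\A^\top\A+\lambda\I) = 2\A\cdot\mathrm{adj}(\A^\top\A+\lambda\I)$, so left-multiplying by $\A^\top$ produces the right-hand side of $(\star)$. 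Alternatively, the Cauchy-Binet expansion $\det(\A^\top\A+\lambda\I) = \sum_{S\subseteq[m]}\lambda^{n-|S|}\det(\A_S\A_S^\top)$ allows termwise differentiation: using $\nabla_\A\det(\A_S\A_S^\top) = 2\S^\top\,\mathrm{adj}(\A_S\A_S^\top)\A_S$ (chain rule through $\A_S = \S\A$) together with $\A^\top\S^\top = \A_S^\top$ recovers the left-hand side of $(\star)$.

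The main obstacle is establishing the Cauchy-Binet expansion of $\det(\A^\top\A+\lambda\I)$, which requires two applications of the classical Cauchy-Binet formula. One first writes $\det(\A^\top\A+\lambda\I) = \sum_{k=0}^n\lambda^{n-k}e_k(\A^\top\A)$ and expands each elementary symmetric function as a sum of $k\times k$ principal minors of $\A^\top\A$. A first Cauchy-Binet turns each principal minor $\det((\A^\top\A)_{T,T})$ into $\sum_{|S|=k}\det(\A_{S,T})^2$, and a second Cauchy-Binet collapses $\sum_{|T|=k}\det(\A_{S,T})^2 = \det(\A_S\A_S^\top)$. With this expansion and the two derivative formulas in hand, identity $(\star)$ follows, and dividing through by $\det(\A^\top\A+\lambda\I)$ completes the proof.
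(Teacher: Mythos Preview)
The paper does not supply its own proof of this lemma; it is quoted verbatim as Lemma~5 of \cite{derezinski2020improved} and used as a black box. So there is no in-paper argument to compare against.

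Your argument is correct. One point you gloss over deserves a sentence: when you ``absorb $\det(\A_S\A_S^\top)$ into the inverse to form the adjugate,'' you are implicitly claiming that the singular terms agree on both sides. This is true, but not automatic: when $\A_S\A_S^\top$ is singular the DPP weight vanishes, and one must also check that $\A_S^\top\,\mathrm{adj}(\A_S\A_S^\top)\,\A_S=0$. If the rank defect is at least two the adjugate itself is zero; if the rank defect is exactly one, the columns of $\mathrm{adj}(\A_S\A_S^\top)$ lie in $\ker(\A_S\A_S^\top)=\ker(\A_S^\top)$, so $\A_S^\top\,\mathrm{adj}(\A_S\A_S^\top)=0$. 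With that in hand, your derivative computation of $(\star)$ goes through cleanly, since Jacobi's formula $d\det(\M)=\tr(\mathrm{adj}(\M)\,d\M)$ does not require invertibility.

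As a minor simplification, the expansion $\det(\A^\top\A+\lambda\I)=\sum_{S}\lambda^{n-|S|}\det(\A_S\A_S^\top)$ follows in one step from the identity $\det(\I_m+\M)=\sum_{S}\det(\M_{S,S})$ applied to $\M=\tfrac{1}{\lambda}\A\A^\top$ together with the Sylvester determinant identity you already invoked; the double Cauchy--Binet is not needed. For context, the route more commonly taken in the DPP literature (and presumably in \cite{derezinski2020improved}) is probabilistic, via the spectral mixture decomposition of a DPP into projection DPPs --- exactly the structure underlying Algorithm~\ref{alg:dpp} in this paper --- whereas your approach is a purely algebraic characteristic-polynomial computation. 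Both are valid; yours has the virtue of being entirely self-contained.
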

If we could use a DPP sampling matrix instead of our LESS-uniform sketching matrix in the statement of Lemma \ref{l:mu-less}, then the desired lower bound on the expectation of $\P$ would follow directly from this result.
However, unfortunately, we cannot rely on sampling directly from a DPP for two reasons. First, despite considerable work in this direction (for an overview, see \cite{derezinski2021determinantal}), the cost of sampling from these combinatorial distributions far exceeds the time complexity of uniform sub-sampling, or of our LESS-uniform sparse sketching, which is necessary to recover our main results. Second, while DPPs provide a closed form for the first moment of the projection $\P$, the same is not true for the normalized second moment, required to bound $\nu$ in \eqref{eq:convergence-intro}. To achieve this second guarantee, we must rely on Gaussian universality properties which are specific to sketching matrices such as LESS-uniform.

Our proof of Lemma \ref{l:mu-less} starts with the observation that we do not actually need to sample from a DPP to produce a lower bound on $\E[\P]$, but rather, it suffices to produce a sample that contains a DPP sample. This is true because for any two subsets $\setS_1 \subseteq \setS_2 \subseteq [m]$ and the corresponding sampling matrices $\S_1$, $\S_2$, the projection matrices $\P_{\setS_i} := (\S_i\A)^\dagger\S_i\A$ satisfy $\P_{\setS_1} \preceq \P_{\setS_2}$. This leads to the following strategy: construct a hypothetical sampling algorithm which produces a DPP sample $\setS_{\DPP}$, but also as a byproduct, defines another sample $\setS$ such that: (a) the sample $\setS$ contains the DPP sample, i.e., $\setS_{\DPP}\subseteq \setS$, while not being too much larger; and (b) the distribution of $\setS$ is easier to sample from (for example, uniform sampling). This strategy, which can be viewed as coupling together the distributions of $\setS_{\DPP}$ and $\setS$, is then used to show that $\E[\P_{\setS}]\succeq\E[\P_{\setS_{\DPP}}]$. Finally, we rely on the DPP expectation formula from Lemma \ref{l:dpp-formula} to complete the analysis, showing that the easy sampling scheme $\setS$ enjoys the desired guarantee (without needing to implement the DPP algorithm). 

Our coupling argument is based on a combination of two algorithms. The first one, Algorithm \ref{alg:dpp}, is a classical method for sampling from a determinantal point process by reformulating it as a mixture of so-called Projection DPPs \cite{kulesza2012determinantal}, which are a special instance of a fixed-size $k$-DPP where $k$ is the rank of the given matrix.

\begin{dfn}
Given an $m\times m$ positive semi-definite matrix $\B$, we define $\setS\sim $\textnormal{P-DPP}$(\B)$ as a distribution over subsets $\setS\subseteq\{1,...,m\}$ of size $\mathrm{rank}(\B)$ such that $\Pr(\setS)\propto \det(\B_{\setS,\setS})$.
\end{dfn}\vspace{-2mm}
\begin{algorithm}
\caption{Random-sized DPP sampler (Algorithm 1 from \cite{kulesza2012determinantal})}\label{alg:dpp}
  \begin{algorithmic}[1]
    \State \textbf{input:} $\A\in\R^{m\times n}$ with SVD $\A=\U\D\V^\top$ where $\D=\diag(\sigma_1,...,\sigma_n)$
    \State \textbf{output:} $\setS_{\DPP}\sim $DPP$(\A\A^\top)$ 
    \State For each $i=1,...,n$, independently sample $b_i\sim \mathrm{Bernoulli}(\frac{\sigma_i^2}{\sigma_i^2+1})$
    \State Construct matrix $\U_{*,\b}$ consisting of columns of $\U$ indexed by $\b=[b_1,...,b_n]$
    \State \Return $\setS_{\DPP}\sim $P-DPP$(\U_{*,\b}\U_{*,\b}^\top)$ using Algorithm \ref{alg:p-dpp}
  \end{algorithmic}
\end{algorithm}

At a high level, Algorithm \ref{alg:dpp} first samples a subset of the left singular vectors of the input matrix $\A$ via independent Bernoulli coin flips, and then samples from a Projection DPP defined using just those vectors. The fact that this algorithm returns a sample from $\DPP(\A\A^\top)$, while certainly non-trivial, has been known since the work of \cite{dpp-independence}. Note that, in particular, this scheme characterizes the distribution of the size of $\setS_{\DPP}$, since this size is equal to the number of singular vectors that are selected.

Observe that Algorithm \ref{alg:dpp} technically requires an exact singular value decomposition (SVD) of the input matrix $\A$, which means that it is clearly far too expensive for our purposes. However, as explained above, we will only use it in our proof of Lemma \ref{l:mu-less}.

We combine this algorithm with a recently proposed rejection sampling scheme for sampling a Projection DPP \cite{derezinski2019minimax}, given in Algorithm \ref{alg:p-dpp}. This algorithm describes how we can select a P-DPP sample out of a stream of i.i.d.~samples drawn according to a distribution $q$ over $\{1,...,m\}$. What matters for our coupling argument is the sample complexity of this procedure, i.e., how many i.i.d.~samples from $q$ need to be drawn by the algorithm (in line~\ref{line:oracle}) to extract one P-DPP sample $\setS_{\DPP}$ via rejection sampling. This stream of i.i.d.~samples (including the rejected ones), which the algorithm collects into the set $\setS$ (line~\ref{line:setS}), is going to be the larger sample that we couple with the DPP.

\begin{algorithm}
    \caption{Projection DPP sampler}\label{alg:p-dpp}
    \begin{algorithmic}[1]
    \State \textbf{input:} $\U\in\R^{m\times \ell}$ such that $\U^\top\U=\I$, probability distribution $q$ over $\{1,...,m\}$ such that $\|\u_i\|^2\leq R\ell q_i$ for all $1\leq i\leq m$ and some $R\geq 1$, where $\u_i^\top$ is the $i$th row of $\U$
    \State \textbf{output:} $\setS_{\DPP}\sim $P-DPP$(\U\U^\top)$
    \State Initialize $\Z = \I_\ell$, $\setS_{\DPP} = \{\}$, $\setS=\{\}$
    \State \textbf{for }$j=1,...,\ell$ 
    \State \quad\textbf{repeat}
    \State  \quad\quad Sample index $I \sim q$ \label{line:oracle}
    \State \quad\quad $\setS\leftarrow \setS\cup\{I\}$
    \label{line:setS}
    \State \quad\quad Sample \ $\mathrm{Accept}\sim
    \text{Bernoulli}\big(\frac{\u_{I}^\top\Z\u_{I}}{R\ell q_{I}}\big)$
    \State \quad\textbf{until} $\mathrm{Accept}=1$
    \State \quad $\setS_{\DPP}\leftarrow \setS_{\DPP}\cup\{I\}$
    \State \quad$\Z \leftarrow \Z -
    \frac{\Z\u_{I}\u_{I}^\top\Z}{\u_{I}^\top\Z\u_{I}}$
    \State \textbf{end for}
    \State \Return $\setS_{\DPP}$
    \end{algorithmic}
\end{algorithm}

To ensure that the rejection sampling used in Algorithm \ref{alg:p-dpp} terminates efficiently, we rely on the following prior result, which bounds the number of rejection sampling steps required across all iterations of the sampler.
\begin{lem}[\cite{derezinski2019minimax}]\label{l:p-dpp}
Given $\U\in\R^{m\times \ell}$ such that $\U^\top\U=\I$, let $\u_i^\top$ denote its $i$th~row. Consider a probability distribution $q$ over $\{1,...,m\}$ such that $\|\u_i\|^2\leq R\ell q_i$ for all $i$ and some $R\geq 1$. Then, Algorithm \ref{alg:p-dpp} produces a sample $\setS_{\DPP}\sim \PDPP(\U\U^\top)$, and with probability $1-\delta$ it requires at most $|\setS|=O(R\ell\log(\ell/\delta))$ samples drawn from $q$ (line \ref{line:oracle}). 
\end{lem}

The number of samples required by Algorithm \ref{alg:p-dpp} depends on $R$, which specifies how well $q$ approximates the so-called leverage score sampling distribution of $\U$. For a matrix $\U$ such that $\U^\top\U=\I$, its $i$th leverage score is defined as the norm squared of the $i$th row $\u_i$ of $\U$. In particular, if $R=1$ in Lemma \ref{l:p-dpp}, then $q_i$ corresponds to exact leverage score sampling, $q_i=\|\u_i\|^2/\sum_i\|\u_i\|^2 = \|\u_i\|^2/\ell$. This distribution effectively describes the likelihood of an index $i$ being sampled into $\setS_{\DPP}$, via the following formula: $\Pr(i\in\setS_{\DPP})=\|\u_i\|^2$.

Naturally, we cannot simply use $R=1$ in Algorithm \ref{alg:p-dpp}, because that would force $q$ to be the exact leverage score sampling distribution, which is itself as expensive to compute as the SVD.
Instead, in our algorithm, $R$ is controlled by applying the RHT to the data matrix and relying on Lemma \ref{l:rht} to ensure that the row norms (and thereby the leverage scores) are uniformly bounded. This way, even a uniform sampling distribution $q$ provides a decent enough approximation of leverage score sampling so that $R=\tilde O(1)$.

To further refine the coupling argument, and adapt it to LESS-uniform sketching, our proof of Lemma \ref{l:mu-less} also relies on the notion of total variation distance between two probability distributions, which we define here.

\begin{dfn}[Total variation distance]
Let $\mu, \nu$ be probability measures defined on a measurable space $(\Omega, \mathcal{F})$, the total variation distance between $\mu$ and $\nu$ is defined as
\begin{align*}
d_{\mathrm{tv}}(\mu, \nu) = \sup_{A\in\mathcal{F}} |\mu(A) - \nu(A)|.
\end{align*}
\end{dfn}

Bounding the total variation distance allows us to couple two random variables so that they are equal with high probability.

\begin{lem}[Example 4.14, \cite{v14}]
\label{lem:tv_distance}
Let $\mu, \nu$ be probability measures defined on some $(\Omega,\mathcal F)$. Then, we have $\inf_{(X,Y)}\Pr\{X\neq Y\} = d_{\mathrm{tv}}(\mu, \nu)$, where $X\sim\mu$, $Y\sim\nu$, and
the infimum is taken over all couplings of $\mu$ and $\nu$.
\end{lem}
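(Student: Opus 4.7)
The plan is to prove the two inequalities separately: the lower bound $\inf_{(X,Y)} \Pr\{X \neq Y\} \geq d_{\mathrm{tv}}(\mu,\nu)$ holds for every coupling, while the matching upper bound is attained by the explicit \emph{maximal coupling}.

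For the lower bound, I fix an arbitrary coupling $(X,Y)$ with marginals $\mu,\nu$ and any measurable $A \in \mathcal{F}$. Writing
\[
\mu(A) - \nu(A) = \Pr\{X \in A\} - \Pr\{Y \in A\} = \Pr\{X \in A,\, Y \notin A\} - \Pr\{Y \in A,\, X \notin A\},
\]
the right-hand side is at most $\Pr\{X \in A,\, Y \notin A\} \leq \Pr\{X \neq Y\}$. Swapping the roles of $\mu$ and $\nu$ and taking the supremum over $A$ yields $d_{\mathrm{tv}}(\mu,\nu) \leq \Pr\{X \neq Y\}$, and then the infimum over couplings gives $d_{\mathrm{tv}}(\mu,\nu) \leq \inf_{(X,Y)} \Pr\{X \neq Y\}$.

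For the reverse inequality, I construct the maximal coupling explicitly. Pick a common dominating measure $\lambda$ (for concreteness $\lambda = \mu + \nu$), and let $f = d\mu/d\lambda$, $g = d\nu/d\lambda$. Set $p = \int \min(f,g)\, d\lambda$. Integrating the pointwise identity $|f - g| = f + g - 2\min(f,g)$ against $\lambda$ shows that $2 d_{\mathrm{tv}}(\mu,\nu) = \int |f-g|\, d\lambda = 2(1-p)$, hence $p = 1 - d_{\mathrm{tv}}(\mu,\nu)$; the reduction of the set-level supremum to this density identity uses that the extremal set $A = \{f > g\}$ achieves $\mu(A) - \nu(A) = \tfrac12 \int |f-g|\, d\lambda$. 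Assuming $0 < p < 1$, define probability measures $\gamma$, $\mu'$, $\nu'$ with densities $\min(f,g)/p$, $(f - \min(f,g))/(1-p)$, and $(g - \min(f,g))/(1-p)$, respectively, so that $\mu = p\gamma + (1-p)\mu'$, $\nu = p\gamma + (1-p)\nu'$, and $\mu', \nu'$ are mutually singular (their densities have disjoint supports $\{f > g\}$ and $\{g > f\}$).

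The coupling is then: sample $B \sim \mathrm{Bernoulli}(p)$; if $B = 1$, draw $Z \sim \gamma$ and set $X = Y = Z$; if $B = 0$, draw $X \sim \mu'$ and $Y \sim \nu'$ independently. The mixture decompositions make the marginals correct, and mutual singularity of $\mu',\nu'$ forces $X \neq Y$ almost surely on $\{B = 0\}$, so $\Pr\{X = Y\} = p$ and $\Pr\{X \neq Y\} = 1 - p = d_{\mathrm{tv}}(\mu,\nu)$. The degenerate cases $p = 0$ (mutually singular measures, use an independent coupling) and $p = 1$ (equal measures, use $X = Y$) are immediate. The main technical wrinkle is the density identity $p = 1 - d_{\mathrm{tv}}(\mu,\nu)$ and the check that the set-level $d_{\mathrm{tv}}$ coincides with $\tfrac12 \int |f-g|\, d\lambda$; everything after that is routine bookkeeping.
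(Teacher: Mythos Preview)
The paper does not prove this lemma at all; it simply quotes it as Example~4.14 from \cite{v14} and uses it as a black box. Your proposal supplies the standard maximal-coupling proof, which is correct and is essentially the argument one finds in the cited reference: the lower bound via $\mu(A)-\nu(A)\le\Pr\{X\neq Y\}$ for any coupling, and the upper bound by building the coupling from the common part $\min(f,g)\,d\lambda$ plus an independent coupling of the mutually singular remainders. There is nothing to compare against in the paper itself, and your argument is sound.
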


We are now ready to give the proof of Lemma \ref{l:mu-less} by coupling a LESS-uniform sketch with a DPP sample produced by Algorithm \ref{alg:dpp} for a specially expanded version of matrix~$\A$.
\begin{proof}[Proof of Lemma \ref{l:mu-less}]
 As mentioned above, the general approach of our proof is to show that the LESS-uniform sketching matrix $\S$ defined in Lemma \ref{l:mu-less} essentially \emph{contains} a DPP sample, so that we can rely on the simple closed form expression for the expected projection under DPP sampling (Lemma \ref{l:dpp-formula}). To do this, we must first represent a LESS-unifom sketching matrix itself as a sample. We achieve this by defining an expanded matrix $\bar\A$, which contains all possible rows that could be produced via this sketching method.
  
  Define the LESS-uniform \emph{expansion}
  matrix $\bar\S\in\R^{N\times m}$, where $N=(2m)^s$ and $s$ is the number of non-zeros per row in the definition of LESS-uniform (Definition \ref{d:less}), so that $\bar\S$ consists of all
  rows of the form:
  \begin{align*}
\sqrt{\frac m{sN}}\sum_{i=1}^sr_{i}\e_{I_i}^\top, \quad
    r_1,...,r_s\in\{1,-1\},\quad I_1,...,I_s\in[m].
  \end{align*}
  Note that we have $\bar\S^\top\bar\S=\I$, because the LESS-uniform sketching distribution is isotropic. 
Now, let $\U\D\V^\top$ be the SVD of $\A$ and consider matrix $\bar\A
= \bar\S\Q\A$, where recall that $\Q$ is the RHT matrix and it satisfies $\Q^\top\Q=\I$. We can now write the SVD of $\bar\A$ as follows: $\bar\A=(\bar\S\tilde\U)\D\V^\top$ where $\tilde\U = \Q\U$. Observe that we designed matrix $\bar\A$ so that an i.i.d.~uniform sample of rows from $\bar\A$ is distributed identically to a LESS-uniform sketch of matrix $\Q\A$.

Next, we consider running Algorithm \ref{alg:dpp} on the expanded matrix to produce a sample $\setS_{\DPP}\sim\DPP(\frac1\lambda\bar\A\bar\A^\top)$. The size of the sample is random, and it is equal to $\sum_ib_i$, since Algorithm \ref{alg:p-dpp} always selects one element per each $b_i=1$. Thus, we can obtain the expected size of the DPP sample as follows, where $\sigma_i$'s denote the singular values of $\A$:
\begin{align*}
    \E\big[|\setS_{\DPP}|\big] = \sum_{i=1}^n\E[b_i] = \sum_{i=1}^n\frac{\sigma_i^2/\lambda}{\sigma_i^2/\lambda+1} =\sum_{i=1}^n\frac{\sigma_i^2}{\sigma_i^2+\lambda}
\end{align*}
Recall that, due to our choice of $\lambda = \frac1\ell\sum_{i>\ell}\sigma_i^2$, we have:
\begin{align*}
\sum_{i=1}^n\frac{\sigma_i^2}{\sigma_i^2+\lambda}\leq \ell + \frac1\lambda \sum_{i>\ell}\sigma_i^2 = 2\ell.
\end{align*}
From now on, we will define $\tilde\ell:=|\setS_{\DPP}|=\sum_ib_i$ as a shorthand for the random DPP sample size.   
Ultimately, our goal is to show that the sampler with high probability produces its output sample $\setS_{\DPP}$ after drawing only at most $O(\ell\log \ell)$ uniform samples in line~\ref{line:oracle} of Algorithm \ref{alg:p-dpp}, i.e., to bound the size of the set $\setS$ produced in the course of the algorithm. The collection of rows of $\bar\A$ indexed by $\setS$ forms a LESS-uniform sketch $\S\Q\A$, which allows us to argue that a $k=O(\ell\log \ell)$ size LESS-uniform sketch contains a  $\DPP(\frac1\lambda\bar\A\bar\A^\top)$ sample.

We next define the high probability event depending only on the RHT, which will be used to ensure that the parameter $R$ in Lemma \ref{l:p-dpp} and Algorithm \ref{alg:p-dpp} can be bounded. First, consider the random vector $\b$ defined by Algorithm \ref{alg:dpp} when applied to matrix $\frac{1}{\sqrt{{\gamma}}}\bar\A$, and let $\Ec$ be the event that the matrix $\Q\U_{*,\b}$ achieves nearly-uniform marginals; i.e., $\|\e_i^\top\Q\U_{*,\b}\|^2 \leq C(\tilde \ell+\log(m/\delta))/m$ for all $i$, where recall that $\U_{*,\b}$ is an $m\times \tilde\ell$ matrix. For any fixed $\b$, this event holds with probability $1-\delta^2$ due to Lemma~\ref{l:rht}. In other words, we know that $\Pr(\neg\Ec\mid\b)\leq \delta^2$. From this, it follows that:
  \begin{align*}
   \E\big[\Pr(\neg\Ec\mid\Q)\big] = \Pr(\neg\Ec) = \E\big[\Pr(\neg\Ec\mid\b)\big] \leq \delta^2.
  \end{align*}
  Thus, letting $\delta_{\Q}=\Pr(\neg\Ec\mid\Q)$, via Markov's inequality we have that:
    \begin{align*}
    \Pr\big(\delta_\Q \geq \delta\big)\leq \frac{\E[\delta_\Q]}{\delta} = \frac{\delta^2}{\delta} = \delta.
  \end{align*}
  So, for a random $\Q$,
  there is an event $\mathcal{A}$ such that $\Pr(\mathcal{A})\geq 1-\delta$, and
  conditioned on $\mathcal{A}$, we achieve nearly-uniform row norms for a random $\b$
  with probability $1-\delta$. From now on, we will condition on event $\mathcal A$. 
  Next, note that a standard Chernoff bound on the sum of Bernoulli random variables implies that 
   $\tilde \ell=\sum_ib_i \leq C(\ell + \sqrt{\ell \log(1/\delta)})$ 
  with probability $1-\delta$. Thus, for the rest of the analysis, we can assume that $\tilde \ell \leq C\ell$. 
  
  Suppose that for the randomly sampled $\b$ the matrix $\tilde\U_{*,\b}=\Q\U_{*,\b}$ has $\tilde \ell\leq C\ell$ columns and nearly-uniform marginals, i.e., $\|\tilde\u_i\|^2 \leq C(\ell+\log(m/\delta))/m$, where $\tilde\u_i$ is the transposed $i$th row of $\tilde\U_{*,\b}$. 
  Next, we show that even after expanding to matrix $\bar\A$, the corresponding matrix of left singular vectors $\bar\S\tilde\U_{*,\b}$ also has nearly-uniform marginals with high probability. Note that in the case of $s=1$ non-zeros per row, which corresponds to simply sub-sampling from the matrix $\Q\A$, the expansion is trivial, and so this property is immediately implied by $\tilde\U_{*,\b}$ having nearly-uniform rows. However, to work well with the second-moment projection analysis, we need to show the result when $s \geq C\log^4(n/\delta)$. In this case, we can only show the following somewhat weaker guarantee over all rows of $\bar\S\tilde\U_{*,\b}$. Let $\frac1{\sqrt N}\bar\ss$ denote a row of $\bar\S$, with $\bar\ss =\sqrt{\frac m{s}}\sum_{i=1}^sr_{i}\e_{I_i}^\top$. Again using $\tilde\u_i$ to denote the rows of $\tilde\U_{*,\b}$, we have:
  \begin{align*}
  \|\bar\ss^\top\tilde\U_{*,\b}\| \leq
  \sqrt{\frac{m}{s}}\sum_{i=1}^s\|\tilde\u_{I_i}\| \leq
  \sqrt{\frac{m}{s}}\  sC\sqrt{\frac{\ell+\log(m/\delta)}{m}}\leq C\sqrt{s(\ell+\log(m/\delta))}.
\end{align*}
This leads to an upper bound of $C\frac{s\ell}{N}$ 
for the leverage scores of $\bar\S\tilde\U_{*,\b}$, which is sub-optimal by a factor of $s$. However, fortunately we can show a sharper guarantee for a vast majority of the leverage scores. Let $\frac1{\sqrt N}\bar\ss$ now be a \emph{randomly} sampled row of $\bar\S$. Using the Gaussian universality result from Lemma \ref{l:universality} (specifically, the upper-bound on the largest singular value with the sketching matrix of size $1 \times m$)
for $\bar\ss^\top\tilde\U_{*,\b}$, assuming
$\bar\ss$ has at least $C\log^4(n/\delta)$ 
 non-zeros, we have:
\begin{align}
  \Pr\big(\|\bar\ss^\top\tilde\U_{*,\b}\|\geq C\sqrt \ell\big) \leq \delta,\label{eq:prsbar} 
\end{align}
where the probability is taken with respect to the randomness of $\bar\ss$ only. We can use the above guarantees to design a probability distribution $q$ over the row indices $\{1,...,N\}$ of $\bar\S\tilde\U_{*,\b}$, such that it is nearly proportional to the row norms squared, as required by Algorithm~\ref{alg:p-dpp}.

Consider partitioning the row indices of $\bar\A$ into the set $\setR$ of indices
  for which we can establish a sharp bound
  $\|\bar\ss_i^\top\tilde\U_{*,\b}\| \leq C\sqrt \ell$, 
  and its complement
  $\bar \setR$,
  for which we merely have 
  $\|\bar\ss_i^\top\tilde\U_{*,\b}\|\leq C\sqrt{s(\ell+\log(m/\delta))}$, where $\bar\ss_i$ denotes the transposed $i$th row of $\bar\S$. 
Formally, we define:
  \begin{align*}
    \setR = \Big\{i\in\{1,...,N\} \ :\ \|\bar\ss_i^\top\tilde\U_{*,\b}\|^2\leq C\ell\Big\}.
  \end{align*}
  Now, we construct the probability distribution $q$ for Algorithm \ref{alg:p-dpp} as follows:
\begin{align*}
    q_i =
    \begin{cases}
      (1-\epsilon)\frac1N &\text{if }i\in \setR,\\
      \frac1{\ell N}\|\bar\ss_i^\top\tilde\U_{*,\b}\|^2&\text{otherwise,}
    \end{cases}                                            
  \end{align*}
  where $\epsilon$ is chosen so that $\sum_iq_i=1$. Using \eqref{eq:prsbar}, we know that
  $\bar \setR$ includes no more than a $\delta$ fraction of all
  indices, i.e., $|\bar \setR|/N\leq\delta$, so we can solve for $\epsilon$:
  \begin{align*}
1=\sum_i q_i = (1-\epsilon)\sum_{i\in R}\frac1N + \sum_{i\not\in
    R}\frac1{\ell N}\|\bar\ss_i^\top\tilde\U_{*,\b}\|^2
    \leq (1-\epsilon) + Cs\delta,
  \end{align*}
  so we can use $\epsilon \leq Cs\delta$. 
  It is easy to see that as long as $\ell\geq \log(m/\delta)$, distribution $q$ is a constant factor approximation of leverage score sampling for $\bar\S\tilde\U_{*,\b}$, so that it can be used in Algorithm \ref{alg:p-dpp} with $R=O(1)$ in Lemma \ref{l:p-dpp}. However, since we cannot actually compute $q$, we still have to show that $q$ can be coupled with the exact uniform sampling distribution $q' = (\frac1N,....,\frac1N)$ which represents LESS-uniform. We do this by bounding the total variation distance $d_{\textrm{tv}}(q,q')$ between $q$ and $q'$, as follows:
  \begin{align*}
    d_{\mathrm{tv}}(q,q')
    &= \frac12\sum_i|q(i) - q'(i)|
    \\
    &\leq \sum_{i\in \setR}|q(i) - q'(i)| + \sum_{i\in\bar \setR}|q(i) -
      q'(i)|
    \\
    &\leq \sum_{i\in \setR}\frac{\epsilon}{N}
      + \sum_{i\in \bar \setR}\Big(\frac{1}{N} + \frac {Cs\ell}{lN}\Big)
      \\
     &\leq \epsilon + Cs\delta \leq 2Cs\delta.
\end{align*}
So, by Lemma \ref{lem:tv_distance}, we can couple approximate leverage score sampling $q$ with
  uniform sampling $q'$
without observing any difference for $O(\ell\log \ell)$ steps of Algorithm \ref{alg:p-dpp}, with probability $1- O(s\delta \ell\log \ell)$. This is sufficient since we know from Lemma \ref{l:p-dpp} that the algorithm will terminate after $O(\ell\log \ell)$ steps. Note that adjusting the constants in the statement of the result we can replace the failure probability $O(s\delta \ell\log \ell)$ with $\delta$. 

To summarize, we have shown that, conditioned on the event $\mathcal A$ that only depends on $\Q$, running Algorithms \ref{alg:dpp} and \ref{alg:p-dpp} as described above on input $\frac1{\sqrt\lambda}\bar\A=\frac1{\sqrt\lambda}\bar\S\Q\A$ will produce a sample from $\setS_{\DPP}\sim\DPP(\frac1\lambda\bar\A\bar\A^\top)$ by drawing indices at random from $\{1,...,N\}$ according to distribution $q$. Let $\setS$ denote the set of these indices, including the ones rejected in Algorithm~\ref{alg:p-dpp}. We showed that with probability $1-\delta$, this set has size at most $|\setS| = O(\ell\log (\ell/\delta))$, and that the distribution of these indices can be coupled with a uniform sample of indices so that they are indistinguishable with probability $1-\delta$. 

Rephrasing this, we can say that there is a coupling between the run of our DPP sampler (producing $\setS_{\DPP}$) and a uniform sample 
$\setS$ of size $k \leq C\ell\log \ell$ 
such that, conditioned on an event $\mathcal {B}$ that holds with probability $1-\delta$, we have $\setS_{\DPP}\subseteq \setS$. Now, let $\P_{\setS_{\DPP}}$ and $\P_\setS$ denote the corresponding projections onto the span of rows selected by $\setS_{\DPP}$ and $\setS$ respectively. Then, we have:
\begin{align*}
    \E[\P_\setS] &\succeq \E[\P_\setS\one_{\mathcal B}] 
    \\
    &\overset{(1)}{\succeq} \E[\P_{\setS_{\DPP}}\one_{\mathcal B}]
    \\
    &\overset{(2)}{\succeq} \E[\P_{\setS_{\DPP}}] - \delta\I
    \\
    &\overset{(3)}{=} \A^\top\A(\A^\top\A+\lambda\I)^{-1} - \delta\I,
\end{align*}
where $(1)$ follows because $\setS_{\DPP}\subseteq \setS$ when conditioned on $\mathcal B$, $(2)$ follows because $\mathcal B$ holds with $1-\delta$ probability, and $(3)$ follows from Lemma~\ref{l:dpp-formula}. Finally, note that while in the proof we started with $\ell$ and then selected $k \leq C\ell \log \ell$,
this can easily be restated as starting with $k\geq C\log m$ 
and selecting 
$\ell\geq \frac {Ck}{\log k}$, since $\frac k{\log k}\log(\frac k{\log k}) \leq Ck$.
\end{proof}

\section{Second-Moment Projection Analysis via Gaussian Universality}
\label{s:nu}
We now turn to bounding the second-order term $\nu = \lambda_{\max}(\E[(\bar\P^{-1/2}\P\bar\P^{-1/2})^2])$, which appears alongside the first-order term $\mu$ in the convergence bound \eqref{eq:convergence-intro} of Sketch-and-Project with Nesterov's acceleration (Algorithm \ref{alg:main}). We do this by combining the previous bound on the expected projection with a bound on the smallest singular value of the LESS-uniform sketch $\S\A$. Our main result is a more general statement on a positive semidefinite ordering of a certain second matrix moment of $\P$.

\begin{lem}\label{l:nu-gaussian}
Suppose that a rank $n$ matrix $\A\in\R^{m\times n}$ with condition number $\kappa$ is transformed by the randomized Hadamard transform (RHT), i.e., $\A\leftarrow \Q\A$. If $\S$ is a $k\times m$ LESS-uniform sketching matrix with sketch size $k\geq C_1\log(m/\delta)$, $2k \leq m$ and $s\geq C_2\log^4(n/\delta)$ non-zeros per row, $n \geq C_3 \log(1 / \delta)$, and $\delta \leq C_4 / n \kappa^2$, then, conditioned on an RHT property that holds with probability $1-\delta$, matrix $\P=(\S\A)^{\dagger}\S\A$ satisfies:
\begin{align*}
    \E\big[\P\E[\P]^{-1}\P\big] \preceq \frac { c_1 n}{\ell} \bar\kappa_{\ell:2k}^2 \cdot\E[\P] + 2\delta\I 
    \quad\text{for}\quad \ell= \Big\lceil\frac{c_2k}{\log k}\Big\rceil,
\end{align*}
where $\bar\kappa_{\ell:q}:= (\frac1{q-\ell}\sum_{i=\ell+1}^q\frac{\sigma_i^2}{\sigma_q^2})^{1/2}$ with $\sigma_1\geq \sigma_2\geq ...$ denoting the singular values of $\A$. 
\end{lem}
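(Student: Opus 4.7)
The plan is to use the first-moment lower bound of Lemma \ref{l:mu-less} to approximately invert $\bar\P$, and then to dominate the inner factor using Lemma \ref{l:universality} (controlling $\|\S\U\|$) and Lemma \ref{l:smin-less} (controlling $\sigma_{\min}(\S\A)$). Writing $\M_\lambda = \A^\top\A(\A^\top\A+\lambda\I)^{-1}$ with $\lambda = \tfrac1\ell\sum_{i>\ell}\sigma_i^2$, Lemma \ref{l:mu-less} yields $\bar\P \succeq \M_\lambda - \delta\I$. Since $\lambda_{\min}(\M_\lambda)$ is of order $\ell/(n\kappa^2)$, the smallness assumption $\delta \le C_4/(n\kappa^2)$ lets me absorb the slack and deduce $\bar\P \succeq \tfrac12\M_\lambda$, hence $\bar\P^{-1} \preceq 2\M_\lambda^{-1} = 2\I + 2\lambda(\A^\top\A)^{-1}$. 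Sandwiching by $\P$ and using $\P^2 = \P$,
\[
\P\bar\P^{-1}\P \preceq 2\P + 2\lambda\,\P(\A^\top\A)^{-1}\P,
\]
so after taking expectations the task reduces to bounding $\E[\P(\A^\top\A)^{-1}\P]$, since the $2\P$ term already contributes $2\bar\P$.

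For the middle factor, write the SVD $\A = \U\D\V^\top$ and set $\M = \S\A$; a direct computation collapses $\M(\A^\top\A)^{-1}\M^\top = (\S\U)(\S\U)^\top$, so
\[
\P(\A^\top\A)^{-1}\P = \M^\top(\M\M^\top)^{-1}(\S\U)(\S\U)^\top(\M\M^\top)^{-1}\M \preceq \frac{\|\S\U\|^2}{\sigma_{\min}^2(\S\A)}\,\P,
\]
using $\M^\top(\M\M^\top)^{-2}\M \preceq \sigma_{\min}^{-2}(\S\A)\,\P$, since both sides share the row span of $\S\A$ as range. On a good event $\mathcal{G}$ of probability $1-\delta$ (conditional on the RHT event of Lemma \ref{l:rht}), Lemma \ref{l:universality} gives $\|\S\U\|^2 \le c_1 n$ and Lemma \ref{l:smin-less} gives $\sigma_{\min}^2(\S\A) \ge c_2(2k\sigma_{2k}^2 + \sum_{i>2k}\sigma_i^2)$, so the expectation of $2\lambda \P(\A^\top\A)^{-1}\P$ on $\mathcal{G}$ is dominated by $\tfrac{2\lambda c_1 n}{c_2(2k\sigma_{2k}^2 + \sum_{i>2k}\sigma_i^2)}\bar\P$. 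Writing $\lambda\ell = \sum_{\ell<i\le 2k}\sigma_i^2 + \sum_{i>2k}\sigma_i^2$ and applying the elementary inequality $\tfrac{a+b}{c+d}\le \tfrac{a}{c}+\tfrac{b}{d}$ with $(a,b,c,d) = (\sum_{\ell<i\le 2k}\sigma_i^2,\,\sum_{i>2k}\sigma_i^2,\,2k\sigma_{2k}^2,\,\sum_{i>2k}\sigma_i^2)$ bounds this ratio by $\tfrac{cn}{\ell}\bar\kappa_{\ell:2k}^2$, exactly the constant promised by the lemma.

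Off $\mathcal{G}$, I plan to use the crude bound $\P\bar\P^{-1}\P \preceq \|\bar\P^{-1}\|\,\I \lesssim \tfrac{n\kappa^2}{\ell}\,\I$ from Corollary \ref{c:mu-less}; multiplied by $\Pr(\neg\mathcal{G})\le\delta$, and with the failure parameter at the start of the proof rescaled by constants depending on $n,\kappa,\ell$, this contribution fits inside the $2\delta\I$ slack in the conclusion. The main obstacle I anticipate is precisely this bookkeeping: there are four small-probability events in play (RHT via Lemma \ref{l:rht}, first moment via Lemma \ref{l:mu-less}, spectral norm via Lemma \ref{l:universality}, smallest singular value via Lemma \ref{l:smin-less}), and the constants in the failure parameter must be tuned so that every off-event contribution collapses into the single additive $2\delta\I$ term. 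Everything else is a clean PSD-ordering computation driven by the sketch's smallest-singular-value lower bound.
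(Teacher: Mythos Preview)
Your proposal is correct and follows essentially the same route as the paper: invert $\bar\P$ via Lemma~\ref{l:mu-less}, reduce to bounding $\E[\P(\A^\top\A)^{-1}\P]$, control this on a good event using $\sigma_{\max}(\S\U)$ from Lemma~\ref{l:universality} and $\sigma_{\min}(\S\A)$ from Lemma~\ref{l:smin-less}, and absorb the off-event contribution into the $\delta$ slack. The one cosmetic difference is that you obtain the pointwise PSD bound $\P(\A^\top\A)^{-1}\P \preceq \tfrac{\|\S\U\|^2}{\sigma_{\min}^2(\S\A)}\P$ directly from the identity $\S\A(\A^\top\A)^{-1}\A^\top\S^\top = (\S\U)(\S\U)^\top$, whereas the paper first bounds only the spectral norm $\|\P(\A^\top\A)^{-1}\P\|$ and then recovers the same inequality via the sandwich $\P\cdot\P(\A^\top\A)^{-1}\P\cdot\P$; your derivation is slightly cleaner but the two are equivalent.
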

\noindent
From this main result, we can conclude the following bound on $\nu$.
\begin{cor}
    \label{c:nu-bound}
    Under the assumptions of Lemma~\ref{l:nu-gaussian},
\begin{align}
    \nu \leq \frac{c_1 n}{\ell} \bar\kappa_{\ell:2k}^2
    \quad\text{for}\quad
    \ell \geq \frac{c_2 k}{\log k}.
\end{align}
\end{cor}

\noindent
We will prove  Lemma~\ref{l:nu-gaussian} and Corollary~\ref{c:nu-bound} together, using the following lemma which allows us to control the smallest singular value of the LESS-uniform sketch with high probability. 

\begin{lem}\label{l:smin-less}
Suppose that we are given matrix $\A\in\R^{m\times n}$ with full column rank, and let random matrix $\Q\in\R^{m\times m}$ be the randomized Hadamard transform.
If the sketching matrix $\S$ is a $k\times m$ LESS-uniform matrix with $C_1 \log(m/\delta) \leq 2k \leq m$ and $s \geq C_2 \log^4(n/\delta)$ non-zeros per row, then
\begin{align*}
    \sigma_{\min}^2(\S\Q\A) \geq c\Big(2k\sigma_{2k}^2 + \sum_{i>2k}\sigma_i^2\Big)\quad\text{with probability at least $1-\delta$.}
\end{align*}
\end{lem}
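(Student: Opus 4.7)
The plan is to decompose the spectrum of $\A$ into consecutive blocks of size $2k$, apply Lemma \ref{l:universality} to each block separately, and combine the contributions via the Loewner order. Writing the SVD of the RHT-preprocessed matrix as $\A = \tilde\U \Sigmab \V^\top$ with $\tilde\U = \Q\U$, and letting $r = \min(m,n)$, I partition the singular-value indices into blocks $G_t = \{2kt+1, \ldots, 2k(t+1)\}$ for $t = 0, \ldots, T-1$ with $T = \lfloor r/(2k)\rfloor$. The case $r < 2k$ is vacuous since then $\sigma_{2k} = 0$ and $\sum_{i>2k}\sigma_i^2 = 0$, and any partial block at the tail can simply be dropped, which only weakens the lower bound since every contribution below is PSD. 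Because $\Sigmab$ is diagonal,
\begin{align*}
\S\A\A^\top\S^\top \;=\; \sum_{t=0}^{T-1}\S\tilde\U_{G_t}\Sigmab_{G_t}^2\tilde\U_{G_t}^\top\S^\top \;\succeq\; \sum_{t=0}^{T-1}\sigma_{2k(t+1)}^2\,\S\tilde\U_{G_t}\tilde\U_{G_t}^\top\S^\top,
\end{align*}
using $\Sigmab_{G_t}^2 \succeq \sigma_{2k(t+1)}^2\I$ since $\sigma_{2k(t+1)}$ is the smallest singular value in $G_t$. Combined with $\lambda_{\min}(A+B) \geq \lambda_{\min}(A)+\lambda_{\min}(B)$ for PSD $A,B$, and $\lambda_{\min}(\S\tilde\U_{G_t}\tilde\U_{G_t}^\top\S^\top) = \sigma_{\min}^2(\S\tilde\U_{G_t})$, it suffices to establish the uniform bound $\sigma_{\min}^2(\S\tilde\U_{G_t}) \geq c\cdot 2k$ across all $T$ blocks simultaneously.

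For each $t$, the submatrix $\U_{G_t}$ has orthonormal columns, so Lemma \ref{l:rht} applied to $\U_{G_t}$ with failure probability $\delta/(2T)$ ensures that the rows of $\tilde\U_{G_t} = \Q\U_{G_t}$ are bounded by $C\sqrt{2k/m}$, provided $2k \gtrsim \log(mT/\delta)$, which is absorbed into $k \geq C_1\log(m/\delta)$ by enlarging $C_1$. Lemma \ref{l:universality} then applies to $\S\tilde\U_{G_t}$ with $d = 2k$ (the requirement $k \leq d/2$ is tight but satisfied), giving $\sigma_{\min}(\S\tilde\U_{G_t}) \geq c_1\sqrt{2k}$ with probability $\geq 1-\delta/(2T)$. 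A union bound over the $T = O(m/k)$ blocks delivers all estimates simultaneously with probability $\geq 1-\delta$, yielding
\begin{align*}
\sigma_{\min}^2(\S\A) \;\geq\; c\cdot 2k\sum_{t=0}^{T-1}\sigma_{2k(t+1)}^2.
\end{align*}
A brief bucketing argument then finishes: $\sum_{i>2k}\sigma_i^2 = \sum_{t\geq 1}\sum_{i\in G_t}\sigma_i^2 \leq \sum_{t\geq 1} 2k\,\sigma_{2kt}^2 = 2k(\sigma_{2k}^2 + \sigma_{4k}^2 + \cdots)$, so $2k\sigma_{2k}^2 + \sum_{i>2k}\sigma_i^2 \leq 2\cdot 2k(\sigma_{2k}^2 + \sigma_{4k}^2 + \cdots)$, matching the lower bound above up to a factor of $2$.

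The main obstacle I anticipate is not any single inequality but the calibration of constants to make the $T = O(m/k)$-fold union bound fit inside the stated hypotheses. The failure probability $\delta/(2T)$ introduces a $\log T = O(\log(m/k))$ overhead in both the RHT row-norm bound (Lemma \ref{l:rht}) and the sparsity requirement $s \geq C_2\log^4(n/\delta)$ for Lemma \ref{l:universality}; these must be absorbed by choosing $C_1, C_2$ slightly larger than would suffice for a single block. A minor secondary issue is that Lemma \ref{l:universality} needs $k \leq d/2$, which is tight at $d = 2k$ and forces the block size to be at least $2k$ rather than something smaller that might give a cleaner bucketing; this is what produces the factor-of-$2$ slack in the final inequality, but is otherwise harmless.
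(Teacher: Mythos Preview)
Your proof is correct, but the decomposition differs from the paper's. The paper uses a telescoping (layer-cake) argument over the nested prefixes $\U_{1:t}$ for every $t\ge 2k$: it writes
\[
\S\A\A^\top\S^\top \;\succeq\; \sum_{t\ge 2k}(\sigma_t^2-\sigma_{t+1}^2)\,\S\U_{1:t}\U_{1:t}^\top\S^\top,
\]
applies Lemma~\ref{l:universality} to each $\U_{1:t}$ (with $d=t$) to get $\sigma_{\min}^2(\S\U_{1:t})\ge c_1 t$, and then the Abel summation collapses directly to $c_1\big(2k\sigma_{2k}^2+\sum_{i>2k}\sigma_i^2\big)$ with no extra bucketing step. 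Your approach instead partitions into disjoint blocks of fixed width $2k$ and applies Lemma~\ref{l:universality} with the same $d=2k$ each time.

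The trade-offs are minor: the paper's telescoping recovers the target quantity $2k\sigma_{2k}^2+\sum_{i>2k}\sigma_i^2$ exactly (up to $c_1$), whereas your block argument lands on $2k\sum_{t\ge 1}\sigma_{2kt}^2$ and needs the final bucketing, incurring a harmless factor of $2$. On the other hand, your version invokes Lemma~\ref{l:universality} only $O(r/k)$ times with a single dimension $d=2k$, versus the paper's $O(r)$ invocations with varying $d$; this makes your union bound lighter and the per-block constants uniform. Your caveats about absorbing the $\log T$ overhead into $C_1,C_2$ and about $k=d/2$ being the tight end of Lemma~\ref{l:universality}'s hypothesis are accurate and do not cause problems.
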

\begin{remark}
    Our argument is quite general, and in particular, can be easily adapted to sketching methods where $\S\Q$ is replaced by a Gaussian or a sub-Gaussian matrix. To our knowledge, this is the first such characterization for the smallest singular value of a sketch.
\end{remark}
\begin{proof}
First, we note that $\sigma_{\min}^2(\S\Q\A)=\lambda_{\min}(\S\Q\A\A^\top\Q^\top\S^\top)$. Let $\A=\U\D\V^\top$ be the singular value decomposition (SVD) of $\A$, and let the corresponding SVD of $\Q\A$ be $\tilde\U\D\V^\top$ where $\tilde\U=\Q\U$ and we used the fact that $\Q$ is an orthogonal matrix. The central part of our argument is to use the SVD of $\A$ to decompose the matrix $\S\Q\A\A^\top\Q^\top\S^\top$ into a sum of isotropic random matrices via a careful telescoping argument. This allows us to apply Gaussian universality of the smallest singular value (Lemma~\ref{l:universality}) to each component of the sum. To that end, letting $\tilde\U_{*,t}$ denote the $t$-th column of $\tilde\U$, and $\tilde\U_{*,1:t}$ be the $m\times t$ matrix consisting of the first $t$ columns of $\tilde\U$, we start by lower bounding $\S\Q\A\A^\top\Q^\top\S^\top$ with the following telescoping sum expression:
  \begin{align}
    \S\Q\A\A^\top\Q^\top\S^\top
    &= \S\tilde\U\D\V^\top\V\D\tilde\U^\top\S^\top=\S\tilde\U\D^2\tilde\U^\top\S^\top\nonumber
    \\    &=\sum_{t=1}^n\sigma_t^2\S\tilde\U_{*,t}\tilde\U_{*,t}^\top\S^\top\nonumber
    \\
    &\succeq \sum_{t\geq
      2k}^{n-1}(\sigma_t^2-\sigma_{t+1}^2)\S\tilde\U_{*,1:t}\tilde\U_{*,1:t}^\top\S^\top 
      + \sigma_n^2 \S \tilde\U\tilde\U^\top \S^\top\nonumber
    \\
    &\succeq
      \Big( \sum_{t\geq
      2k}^{n-1}(\sigma_t^2-\sigma_{t+1}^2)\sigma_{\min}^2(\S\tilde\U_{*,1:t})
      + \sigma_n^2 \sigma_{\min}^2(\S\tilde\U)\Big) \cdot\I,\label{eq:smin-less-telescoping}
  \end{align}
where, at a high level, the idea is to combine the contributions of the rank one matrices $\tilde\U_{*,t}\tilde\U_{*,t}^\top$ (corresponding to the $t$-th singular vector) into groups of $2k$ or more, so that all of the matrices $\S\tilde\U_{*,1:t}$ appearing in \eqref{eq:smin-less-telescoping} are sufficiently wide for us to be able to show a lower bound on the smallest singular values of $\sigma_{\min}^2(\S\tilde\U_{*,1:t})$ via Lemma \ref{l:universality}.

\begin{figure}
\begin{tikzpicture}[scale=1]
\tikzset{
    dot/.style={circle, draw, fill=black, inner sep=0pt, minimum
    width=3pt},
    ldot/.style={circle, draw, black, inner sep=0pt, minimum
    width=2pt}, 
    top/.style={anchor=south, inner sep=1pt},
    bottom/.style={anchor=north, inner sep=3pt},
  }

  \draw[fill=lightgrey,lightgrey] (0,1) rectangle (4,1.7);
  \draw[darkgrey] (0,1.7) -- (4,1.7);
  \draw[fill=lightgrey,lightgrey] (0,1) rectangle (5,1.3);
  \draw[darkgrey] (0,1.3) -- (5,1.3);  
  \draw[fill=lightgrey,lightgrey] (0,.7) rectangle (6,1);
  \draw[darkgrey] (0,1) -- (6,1);
  \draw[fill=lightgrey,lightgrey] (0,.5) rectangle (7,.7);
  \draw[darkgrey] (0,.7) -- (7,.7);
  \draw[fill=lightgrey,lightgrey] (0,.3) rectangle (8,.5);
  \draw[darkgrey] (0,.5) -- (8,.5);
  \draw[fill=lightgrey,lightgrey] (0,.15) rectangle (9,.3);
  \draw[darkgrey] (0,.3) -- (9,.3);
  \draw[fill=lightgrey,lightgrey] (0,0) rectangle (10,.15);
  \draw[darkgrey] (0,.15) -- (10,.15);

\node[] (Z0) at (0,0) {};
\node[] (Z) at (1,3.25) {};
\node[top] at (.5,3.25) {\mbox{\scriptsize $\sigma_1^2$}};
\draw [thick] (Z0) rectangle (Z);

\node[] (A0) at (1,0) {};
\node[] (A) at (2,2.75) {};
\draw [thick] (A0) rectangle (A);

\node[] (B0) at (2,0) {};
\node[] (B) at (3,2.25) {};
\draw [thick] (B0) rectangle (B);

\node[] (C0) at (3,0) {};
\node[] (C) at (4,1.7) {};        
\draw [thick] (C0) rectangle (C);
  \node[top] at (3.5,1.7) {\mbox{\scriptsize $~\sigma_{2k}^2$}};
  
\node[] (D0) at (4,0) {};
\node[] (D) at (5,1.3) {};
\draw [thick] (D0) rectangle (D);
\node[top] at (4.5,1.3) {\mbox{\scriptsize $~~\sigma_{2k+1}^2$}};

\node[] (E0) at (5,0) {};
\node[] (E) at (6,1) {}; 
\draw [thick] (E0) rectangle (E);

\node[] (F0) at (6,0) {};
\node[] (F) at (7,0.7) {}; 
\draw [thick] (F0) rectangle (F);

\node[] (G0) at (7,0) {};
\node[] (G) at (8,0.5) {};
\draw [thick] (G0) rectangle (G);

\node[] (H0) at (8,0) {};
\node[] (H) at (9,0.3) {};
\draw [thick] (H0) rectangle (H);

\node[] (I0) at (9,0) {};
\node[] (I) at (10,0.15) {};
\node[top] at (9.5,0.15) {\mbox{\scriptsize $\sigma_n^2$}};
\draw [thick] (I0) rectangle (I);

\draw [decorate, thick,
decoration = {calligraphic brace, mirror, amplitude=1.75pt,raise=1.25pt}] (-.05,1) -- (-.05,.7) node[midway,xshift=-3em]{\scriptsize $t\times (\sigma_t^2\!-\!\sigma_{t+1}^2)$};

\draw [decorate, thick,
decoration = {calligraphic brace, mirror, amplitude=3pt,raise=1pt}] (0.1,-0.1) -- (3.9,-0.1) node[midway,yshift=-0.9em]{\scriptsize $2k\times \sigma_{2k}^2$};
\draw [decorate, thick,
decoration = {calligraphic brace, mirror, amplitude=3pt,raise=1pt}]
(4.1,-0.1) -- (9.9,-0.1) node[midway,yshift=-0.9em]{\scriptsize $\sum_{t>2k}\sigma_t^2$};
\end{tikzpicture}
\caption{Illustration for the telescoping sum argument in the proof of Lemma \ref{l:smin-less}. The expression in \eqref{eq:telescoping-before} can be viewed as computing the shaded area by summing over the row rectangles, whereas the expression in \eqref{eq:telescoping-after} is computing the same shaded are by summing over the columns.}
\label{f:telescoping}
\end{figure}

To that end, note that each matrix $\tilde\U_{*,1:t}$ has orthonormal columns, and since it can be written as $\tilde\U_{*,1:t}=\Q\U_{*,1:t}$, we can use the row uniformizing property of the randomized Hadamard transform $\Q$. Namely, for each value of $t \geq 2k$, we can apply Lemma~\ref{l:rht} to obtain that with probability $1 - \delta/2n$, the row norms of $\tilde\U_{*,1:t}$ are uniformly bounded as
\begin{align*}
  \|\tilde\U_{i,1:t}\| \leq \sqrt{t/m} + \sqrt{8 \log(2mn / \delta) / m}
  \leq C \sqrt{t/m}.
\end{align*}
This means that each matrix $\tilde\U_{*,1:t}$ fits the assumptions of our Gaussian universality result (Lemma~\ref{l:universality}), and we can use it to lower bound the smallest singular value. Specifically, the lemma implies that for any $t\geq 2k$, with probability $1 - \delta / 2n$ the matrix $\S\tilde\U_{*,1:t}\in\R^{k\times t}$ satisfies
\begin{align*}
    \sigma_{\min}(\S\tilde\U_{*,1:t}) \geq c_1 \sqrt{t}.
\end{align*}
Taking a union bound, we can ensure that this singular value lower bound holds for every $t$ with probability $1 - \delta$. Plugging these inequalities into the telescoping sum in \eqref{eq:smin-less-telescoping}, we obtain:
\begin{align}
    \sigma_{\min}^2(\S\Q\A)
    \geq c_1^2\bigg(\sum_{t\geq 2k}^{n-1}t(\sigma_t^2-\sigma_{t+1}^2) + n\sigma_n^2\bigg).\label{eq:telescoping-before}
\end{align}

To reach the final claimed lower bound, it remains to undo the telescoping sum expression. This can be explained most simply via the diagram in Figure \ref{f:telescoping}. Here, the expression in \eqref{eq:telescoping-before} can be interpreted as computing the shaded area in the figure by summing up over the row rectangles of dimension $t \times (\sigma_t^2-\sigma_{t+1}^2)$. Instead, we can compute it by summing over the columns, i.e., the total contributions of each singular value. This breaks the expression down into the term that corresponds to the $2k\times \sigma_{2k}^2$ rectangle, plus the sum over the tail of the remaining singular values. Thus, we obtain:
\begin{align}
\sigma_{\min}^2(\S \Q\A) \geq c_1^2 \Big(2k \sigma_{2k}^2 + \sum_{t = 2k + 1}^n \sigma_t^2 \Big).\label{eq:telescoping-after}
\end{align}
This concludes the proof.
\end{proof}

We are now ready to prove Lemma \ref{l:nu-gaussian}; i.e., provide an upper-bound for the second-order projection expression $\E[\P\E[\P]^{-1}\P]$, along with Corollary~\ref{c:nu-bound}. Recall that the random projection matrix is defined as $\P=(\S\A)^{\dagger}\S\A$, where to simplify the notation we assume that the matrix $\A$ has already been preprocessed with the RHT matrix $\Q$.
\begin{proof}[Proof of Lemma \ref{l:nu-gaussian} and Corollary~\ref{c:nu-bound}]
We start by bounding the term $\E[\P]^{-1}$ inside the expression. For this, we can rely on the first-order analysis from Lemma \ref{l:mu-less}, which states that:
\begin{align*}
    \E[\P]\succeq \A^\top\A(\A^\top\A+\lambda\I)^{-1}-\delta\I,
\end{align*}
where $\lambda =\frac1\ell\sum_{i>\ell}\sigma_i^2$ and $\ell = \lceil\frac {ck}{\log k}\rceil$. Now, since by the assumption in the lemma we have $\delta \leq 1 / 2 n \kappa^2 \leq   \lambda_{\min}(\A^\top\A(\A^\top\A+\lambda\I)^{-1})/2$, 
we have 
\begin{align*}
\E[\P]^{-1}\preceq 2(\A^\top\A)^{-1}(\A^\top\A+\lambda\I) = 2\I + 2\lambda(\A^\top\A)^{-1}.
\end{align*}
Applying this bound to our second-order expression, we obtain:
\begin{align*}
\E\big[\P\E[\P]^{-1}\P\big]
&\preceq 2\E\Big[\P\big(\I + \lambda(\A^\top\A)^{-1}\big)\P\Big]
\\
&= 2\E[\P] + 2\lambda \E[\P(\A^\top\A)^{-1}\P].
\end{align*}
Thus, it remains to bound $\E[\P(\A^\top\A)^{-1}\P]$.
First, we will use Lemma \ref{l:smin-less} to define a high-probability event  that bounds the spectral norm of the matrix $\P(\A^\top\A)^{-1}\P$. To do that, define $\U=\A(\A^\top\A)^{-1/2}$, which is a matrix with orthonormal columns (recall that we assumed full column rank of $\A$). By expanding the Moore-Penrose pseudoinverse we can write $\P=(\S\A)^\dagger\S\A=\A^\top\S^\top(\S\A\A^\top\S^\top)^{-1}\S\A$, which gives us:
 \begin{align*}
     \|\P(\A^\top\A)^{-1}\P\| 
    &= \|(\A^\top\A)^{-1/2}\P(\A^\top\A)^{-1/2}\| \\
    &= \|\U^\top\S^\top(\S\A\A^\top\S^\top)^{-1}\S\U\| \\
    &\leq \|\S\U\|^2\|(\S\A\A^\top\S^\top)^{-1}\| 
    = \frac{\sigma_{\max}^2(\S\U)}{\sigma_{\min}^2(\S\A)}.
 \end{align*}
 This allows us to leverage Lemma \ref{l:smin-less}, ensuring that $\sigma_{\min}^2(\S\A)$ is lower bounded by $c(2k\sigma_{2k}^2+\sum_{i>2k}\sigma_i^2)$ with probability $1-\delta/2$. Moreover, since $\U$ is a matrix with orthonormal columns that was preprocessed by an RHT, we can use Lemma \ref{l:universality} 
to control $\sigma_{\max}^2(\S\U)$, obtaining that with probability $1-\delta/2$:
\begin{align}
    \sigma_{\max}(\S\U) \leq c_2 \sqrt n.
\end{align}
Putting these bounds together, we can define the following event which, for an appropriate constant $C>0$, holds with probability $1-\delta$:
\begin{align*}
    \Ec := \Big\{\|\P(\A^\top\A)^{-1}\P\| \leq \frac{Cn}{2k\sigma_{2k}^2+\sum_{i>2k}\sigma_i^2}\Big\}.
\end{align*}
Unfortunately, the above bound on the spectral norm of $\P(\A^\top\A)^{-1}\P$ is not sharp enough by itself when used to bound the expectation of this matrix.
However, we are able get a sharper bound on $\E[\P(\A^\top\A)^{-1}\P]$ in the positive semidefinite ordering. First, using law of total expectation,  
\begin{align*}
    \E[\P(\A^\top\A)^{-1}\P] 
    &= \E[\P(\A^\top\A)^{-1}\P\mid\Ec] \Pr(\Ec) + \E[\P(\A^\top\A)^{-1}\P\mid\neg\Ec] \Pr(\neg\Ec)
    \\
    &\preceq \E[\P(\A^\top\A)^{-1}\P\cdot 1_{\Ec}] + \delta \sigma_{\min}^{-2}\E[\P\mid\neg\Ec],
\end{align*}
where $1_{\Ec}$ is the characteristic variable of $\Ec$. The second term is small thanks to $\delta$, while for the first term we can use the spectral norm bound, but in a careful way. Recalling that $\Z^\top \A \Z \preceq \Z^\top \B \Z$ if $\A \preceq \B$, and using that $\P=\P^2$ for a projection matrix:
\begin{align*}
    \E[\P(\A^\top\A)^{-1}\P \cdot 1_{\Ec}] 
    &= \E[\P\cdot\P(\A^\top\A)^{-1}\P\cdot\P \cdot 1_{\Ec}]
    \\
    &\preceq \frac{Cn}{2k\sigma_{2k}^2+\sum_{i>2k}\sigma_i^2}\E[\P \cdot 1_{\Ec}].
\end{align*}
Putting those together, since $\E[\P \cdot 1_{\Ec}] \preceq \E[\P]$ and $\P \preceq \I$: 
\begin{align*}
    \E[\P(\A^\top\A)^{-1}\P] \preceq \frac{Cn}{2k\sigma_{2k}^2+\sum_{i>2k}\sigma_i^2}\E[\P] + \delta\sigma_{\min}^{-2}\I.
\end{align*}
Now, we are ready to return to the second-order expression $\E[\P\E[\P]^{-1}\P]$. 
\begin{align*}
\E\big[\P\E[\P]^{-1}\P\big] &\preceq 2\E[\P] + 2\lambda\E[\P(\A^\top\A)^{-1}\P]
\\
&\preceq 2\E[\P] + \frac{2C\lambda n}{2k\sigma_{2k}^2+\sum_{i>2k}\sigma_i^2}\E[\P] + 2\lambda\delta\sigma_{\min}^{-2}\I
\\
&= 2\Big(1 + C\frac n\ell \gamma\Big)\E[\P] + 2\lambda\delta\sigma_{\min}^{-2}\I,\quad\text{where}\quad
\gamma = \frac{\sum_{i>\ell}\sigma_i^2}{2k\sigma_{2k}^2+\sum_{i>2k}\sigma_i^2}.
\end{align*} 
Finally, to conclude the proof, note that $\lambda \geq \sigma_{\min}^2$ and that
\begin{align*}
    \gamma 
    &= \frac{\sum_{i>\ell}\sigma_i^2}{2k\sigma_{2k}^2+\sum_{i>2k}\sigma_i^2} \\
    &= \frac{\sum_{i>\ell}^{2k}\sigma_i^2}{2k\sigma_{2k}^2+\sum_{i>2k}\sigma_i^2} + \frac{\sum_{i>2k}\sigma_i^2}{2k\sigma_{2k}^2+\sum_{i>2k}\sigma_i^2}
    \\
    &\leq \frac1{2k}\sum_{i>\ell}^{2k}\frac{\sigma_i^2}{\sigma_{2k}^2} + 1 \leq \bar\kappa_{\ell:2k}^2+1.
\end{align*}
To bound $\nu$, we multiply $\E[\P\E[\P]^{-1}\P]$ by $\E[\P]^{-1/2}$ from both sides to get $\E[(\E[\P]^{-1/2}\P\E[\P]^{-1/2})^2]$, 
and then take the spectral norm, noting that 
\begin{align*}
    \delta \|\E[\P]^{-1}\| = \frac{\delta}{\mu} \leq 
    \frac{c n}{\ell} \frac{\bar\kappa_{\ell}^2}{n \kappa^2} \leq \frac{cn}{\ell},
\end{align*}
and so the contribution of the $\delta$ can be absorbed into the constant.
\end{proof}

\section{Convergence Analysis with Inexact Projections}\label{s:inexact}
Step \ref{line-u} of Algorithm \ref{alg:main} requires finding $\u_t$ that solves the following linear system:
\begin{align}
(\S\A\B^{-1}\A^\top\S^\top) \u_t = \tilde\b\quad\text{where}\quad\tilde\b=\S\A\y_t-\S\b.\label{eq:solve}
\end{align}
Then, in Step \ref{line-w}, the algorithm computes $\w_t=\B^{-1}\A^\top\S^\top\u_t$. The cost of solving \eqref{eq:solve} is the main cost involved in each iteration of sketch-and-project. Recall that in our main results, we consider two different choices of $\B$ depending on whether $\A$ is positive definite or not.
\smallskip

\noindent
\textbf{Positive definite case.} In the case where matrix $\A$ is positive definite, we choose $\B=\A$, which means that the linear system we must solve becomes $(\S\A\S^\top)\u_t = \tilde\b$. The matrix $\S\A\S^\top$ can be computed in time $\tilde O(nk)$ and inverted in time $O(k^\omega)$, so we can solve the projection step \eqref{eq:solve} exactly in time $\tilde O(nk + k^\omega)$.
\smallskip

\noindent
\textbf{General case.} On the other hand, in the case of solving a general linear system, where we choose $\B=\I$, the resulting linear system is:
\begin{align*}
    (\tilde\A\tilde\A^\top)\u_t=\tilde\b,\quad\text{for}\quad \tilde\A=\S\A.
\end{align*}
Here, the matrix $\tilde\A=\S\A$ can be precomputed in $\tilde O(nk)$ time, however solving the linear system requires computing the matrix $\tilde\A\tilde\A^\top$, which would take $O(nk^{\omega-1})$ to do exactly. To avoid that, we instead solve this projection step approximately using Preconditioned Conjugate Gradient (PCG), which will allow us to recover the same $\tilde O(nk+k^\omega)$ time as in the positive definite case.

Naturally, the quality of the approximation in the projection step influences the convergence rate of Algorithm \ref{alg:main}. Our next Lemma~\ref{lem:approx_inv} shows that as soon as $\w_t$ is estimated with small enough error, Algorithm \ref{alg:main} converges in expectation with convergence rate determined by $\mu$ and $\nu$ that essentially matches the rate achieved with an exact projection step. For the sake of simplicity, we present this result only for the case of $\B=\I$.
\begin{lem}[Algorithm \ref{alg:main} with inexact projections]\label{lem:approx_inv}
Suppose that $\B=\I$, and steps \ref{line-u}, \ref{line-w} of Algorithm~\ref{alg:main} result in an estimate $\hat\w_t \in\text{range}(\A^\top)$ of $\w_t = \tilde\A^\top(\tilde\A\tilde\A^\top)^{-1}\tilde\b$ such that 
$$\|\hat\w_t-\w_t\|\leq \epsilon \|\w_t\| \quad \text{  with }\quad  \epsilon \le \frac{\mu}{\sqrt{8\|\E[\P]\|(\mu \nu + 1)}}.$$ 
Then, with $\beta=1-\frac{3}{4} \sqrt{\frac{\mu}{\nu}}$, $\gamma=\frac1{\sqrt{\mu\nu}}$, $\alpha=\frac1{1+\gamma\nu}$, we have that:
    \begin{align*}
\E\big[\Delta_t\big] \leq \bigg(1-\frac{1}{2}\sqrt{\frac{\mu}{\nu}}\,\bigg)^t\Delta_0,
    \end{align*}   
    where the error function is defined as $\Delta_t: =
\|\v_t-\x^*\|_{\E[\P]^\dagger}^2+
\frac{1}{\mu}
\|\x_t-\x^*\|^2$.
\end{lem}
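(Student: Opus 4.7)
The plan is to adapt the Lyapunov-function analysis of accelerated Sketch-and-Project from \cite{gower2018accelerated} to the inexact setting. Write $\hat\w_t = \w_t + \e_t$ with $\e_t := \hat\w_t - \w_t$ satisfying $\|\e_t\|\leq \delta\|\w_t\|$, so that the updates become
\begin{align*}
\x_{t+1} &= \y_t - \w_t - \e_t, \\
\v_{t+1} &= \beta\v_t + (1-\beta)\y_t - \gamma\w_t - \gamma\e_t.
\end{align*}
Because $\hat\w_t\in\mathrm{range}(\A^\top)$ by hypothesis and $\w_t=\P(\y_t-\x^*)\in\mathrm{range}(\A^\top)$, we get $\e_t\in\mathrm{range}(\A^\top)=\mathrm{range}(\E[\P])$; hence the pseudo-inverse norm behaves well on the error, with $\|\e_t\|_{\E[\P]^\dagger}^2\leq \|\e_t\|^2/\mu$. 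Expand $\Delta_{t+1}=\|\v_{t+1}-\x^*\|_{\E[\P]^\dagger}^2+\mu^{-1}\|\x_{t+1}-\x^*\|^2$ into an \emph{exact part} (obtained by setting $\e_t=0$), plus linear cross terms in $\e_t$, plus quadratic terms in $\|\e_t\|^2$ or $\|\e_t\|_{\E[\P]^\dagger}^2$.

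For the exact part, I would reproduce the Gower--Richtarik argument, which relies on the identity $\x_{t+1}^{\mathrm{exact}}-\x^*=(\I-\P)(\y_t-\x^*)$ (so that $\|\x_{t+1}^{\mathrm{exact}}-\x^*\|^2=\|\y_t-\x^*\|^2-\|\w_t\|^2$ since $\P$ is an orthogonal projection), together with the choices $\gamma=1/\sqrt{\mu\nu}$ and $\alpha=1/(1+\gamma\nu)$ that balance the $\v$- and $\x$-terms against each other. The only deviation from the canonical parameterization is $\beta=1-\tfrac34\sqrt{\mu/\nu}$ in place of $1-\sqrt{\mu/\nu}$; tracing through the calculation shows that the exact part still contracts at rate $1-\tfrac34\sqrt{\mu/\nu}$, leaving a deliberate slack of $\tfrac14\sqrt{\mu/\nu}\cdot\Delta_t$ in the recursion to absorb the perturbation.

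For the error terms, bound the cross terms by Young's inequality $2|\langle u,\e_t\rangle_{(\cdot)}|\leq \eta\|u\|_{(\cdot)}^2+\eta^{-1}\|\e_t\|_{(\cdot)}^2$ with a small constant $\eta$, and the quadratic terms using $\|\e_t\|^2\leq \delta^2\|\w_t\|^2$. The key a priori estimate is that $\|\w_t\|$ is controlled by $\Delta_t$: since $\P$ is an orthogonal projection, $\|\w_t\|^2=(\y_t-\x^*)^\top\P(\y_t-\x^*)\leq \|\y_t-\x^*\|^2$; and by convexity of $\|\cdot-\x^*\|^2$ applied to $\y_t=\alpha\v_t+(1-\alpha)\x_t$, together with the norm relations $\|\v_t-\x^*\|^2\leq \|\E[\P]\|\cdot\|\v_t-\x^*\|_{\E[\P]^\dagger}^2\leq \|\E[\P]\|\Delta_t$ and $\|\x_t-\x^*\|^2\leq \mu\Delta_t\leq \|\E[\P]\|\Delta_t$, we obtain $\|\w_t\|^2\leq \|\E[\P]\|\Delta_t$ deterministically. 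Combining this with $\gamma^2\|\e_t\|_{\E[\P]^\dagger}^2\leq \frac{\nu}{\mu^2}\|\e_t\|^2$ from the $\v$-update, the total error contribution is of order $\delta^2\|\E[\P]\|(\nu/\mu^2+1)\Delta_t$, which by the choice $\delta\leq \mu/\sqrt{8\|\E[\P]\|(\mu\nu+1)}$ is at most $\tfrac14\sqrt{\mu/\nu}\cdot\Delta_t$, fitting inside the slack.

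The main obstacle is bookkeeping the cross terms carefully enough that choosing Young's parameters $\eta$ does not simultaneously eat into the exact contraction, while still absorbing the $\gamma^2\|\e_t\|_{\E[\P]^\dagger}^2$ quadratic error; this is precisely why the admissible $\delta$ depends on the combination $\mu\nu+1$ rather than on $\mu$ alone, and explains the particular numerical factors ($\tfrac34$ in $\beta$, $\tfrac12$ in the final rate). Once these are balanced, taking expectations over the sketching matrix $\S$ at iteration $t$ and iterating the bound yield the claimed geometric decay $\E[\Delta_t]\leq (1-\tfrac12\sqrt{\mu/\nu})^t\Delta_0$.
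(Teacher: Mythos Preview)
Your overall architecture matches the paper's: perturb the exact Lyapunov recursion of \cite{gower2018accelerated} via Young's inequality, use the same bounds on the terms $I,II,III$, and choose $\beta=1-\tfrac34\sqrt{\mu/\nu}$ together with a Young parameter of order $\sqrt{\mu/\nu}$. Where your plan diverges from the paper is in \emph{where the error is absorbed}. You route the error through $\|\w_t\|^2\le\|\E[\P]\|\,\Delta_t$ and try to fit it under the $\tfrac14\sqrt{\mu/\nu}\,\Delta_t$ slack in the contraction rate. The paper instead keeps every error contribution as a multiple of $\|\y_t-\x^*\|^2$ (using $\E\|\e_t\|^2\le\delta^2\|\E[\P]\|\,\|\y_t-\x^*\|^2$) and absorbs it into the coefficient of $\|\y_t-\x^*\|^2$ in the recursion, the quantity called $P_2$, by requiring $P_2\le 0$. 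With $\beta=1-\tfrac34\sqrt{\mu/\nu}$ the exact part of $P_2$ is already $-\tfrac{1}{2\sqrt{\mu\nu}}$, and since the error is naturally on the same $\|\y_t-\x^*\|^2$ scale, this sink absorbs it exactly under the stated hypothesis $\delta\le\mu/\sqrt{8\|\E[\P]\|(\mu\nu+1)}$.

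Your route loses a factor. After Young's inequality, the $(1+\eta)$ multiplier on the exact part already exhausts the contraction-rate slack (this is what forces $(1+\eta)\beta\le 1-\tfrac12\sqrt{\mu/\nu}$), so there is no residual $\Delta_t$ budget for the $(1+1/\eta)$-weighted error terms. If you insist on bounding the error against $\Delta_t$ you end up needing roughly $\delta\lesssim\mu^{3/2}/\sqrt{\|\E[\P]\|(\mu\nu+1)}$, an extra $\sqrt{\mu}$ compared to the lemma's hypothesis; your displayed check that $\delta^2\|\E[\P]\|(\nu/\mu^2+1)\Delta_t\le\tfrac14\sqrt{\mu/\nu}\,\Delta_t$ in fact fails for small $\mu$ (and the bound $\gamma^2\|\e_t\|_{\E[\P]^\dagger}^2\le(\nu/\mu^2)\|\e_t\|^2$ should read $1/(\mu^2\nu)$, since $\gamma^2=1/(\mu\nu)$). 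The fix is simply not to pass through $\Delta_t$: collect all $\|\y_t-\x^*\|^2$ coefficients, including the error, into a single $P_2$ and verify $P_2\le 0$; this is precisely the paper's bookkeeping and recovers the stated threshold on $\delta$.
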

\begin{remark}
    Exact knowledge of $\mu$ and $\nu$ is not required: Given $\tilde\mu$ and $\tilde\nu$ such that $\mu\geq\tilde\mu$ and $\nu\leq \tilde\nu$, the hyperparameters $\alpha$, $\beta$, and $\gamma$ can be set using $\tilde\mu$ and $\tilde\nu$ in place of $\mu$ and $\nu$, and an identical convergence analysis recovers the rate of $\E[\Delta_t]\leq\big(1-\frac12\sqrt{\tilde\mu/\tilde\nu}\big)^t\Delta_0$ under the  inexact projection condition with $\epsilon =\frac{\tilde\mu}{\sqrt{8(\tilde\mu\tilde\nu+1)}}$, where we used that $\|\E[\P]\|\leq 1$.
\end{remark}
\begin{remark}\label{r:exact-analysis}
    When Algorithm \ref{alg:main} uses exact projections (steps \ref{line-u} and \ref{line-w}), then \cite{gower2018accelerated} showed a convergence bound which can be stated for a general positive definite $\B$. Redefining
    \begin{align*}
        \P &:= \B^{-1/2}\tilde\A^\top(\tilde\A\B^{-1}\tilde\A^\top)^{-1}\tilde\A\B^{-1/2}
        \quad\text{and}
        \\        
        \Delta_t &:= \|\v_t-\x^*\|_{\B^{1/2}\E[\P]^\dagger\B^{1/2}}^2
        +\frac1{\mu}\|\x_k-\x^*\|_{\B}^2,
    \end{align*}    
    they showed the following convergence bound for the exact implementation of Algorithm \ref{alg:main}:
 \begin{align*}
\E\big[\Delta_t\big] \leq \bigg(1-\sqrt{\frac{\mu}{\nu}}\,\bigg)^t\Delta_0,
\end{align*}
    Our Lemma~\ref{lem:approx_inv} gets the same rate up to a constant factor $1/2$ without requiring exact projection steps (for the case of $\B=\I$). We note that the constant $1/2$ can be improved to any constant $c < 1$ by tuning the other parameters of Lemma~\ref{lem:approx_inv}, along the same lines.
\end{remark}
Before we proceed with the proof of Lemma \ref{lem:approx_inv}, we now explain how we  use an iterative solver to compute $\u_t$ and $\w_t$ approximately when $\B=\I$. Taking advantage of the fact that the matrix $\tilde\A$ is very wide, we can use sketching to construct an effective preconditioner for the linear system \eqref{eq:solve}. Following \cite{derezinski2023solving}, our preconditioner is computed as follows:
\begin{enumerate}
    \item Compute $\mathbf{\Phi}\tilde\A^\top$, where $\mathbf{\Phi}$ is a $\phi\times n$ sketching matrix;
    \item Construct $\M=\V\Sigmab^{-1}$, where $\U\Sigmab\V^\top$ is the compact SVD of $\mathbf{\Phi}\tilde\A^\top$.
\end{enumerate}
Here, we refer to a standard sketch-and-precondition guarantee to show that PCG with $\M$ as a preconditioner will solve the above linear system in time $\tilde O(nk+k^\omega)$.
\begin{lem}[Based on Lemma 7.1, \cite{derezinski2023solving}]\label{l:pcg}
Given $\tilde{\A} \in \R^{k \times n}$, $\tilde{\b} \in \R^k$, let $\w^* = \tilde{\A}^\top(\tilde\A\tilde\A)^\top\tilde{\b}$. Given $\delta <1/2$, there is a sparse sketching matrix matrix $\mathbf{\Phi} \in \R^{\phi\times n}$ that satisfies $\phi = O(k+\log(1/\delta))$ such that we can compute  $\M$ in time $O(nk \log(k/\delta) + k^{\omega})$ that with probability $1 - \delta$ satisfies $\kappa(\tilde{\A}^\top\M) = O(1)$, in which case, PCG preconditioned with $\M$ will solve the  system $\tilde\A\tilde\A^\top\u=\tilde\b$ in time $O(nk\log1/\epsilon)$, returning $\hat\u$ such that
\begin{align*}
\|\hat\w-\w^*\|\leq\epsilon\|\w^*\|,
\end{align*}
where $\hat\w = \tilde\A^\top\u$. Also,  with probability $1$ and independently of $\M$, $\|\hat\w-\w^*\|\leq 4\|\w^*\|$.
\end{lem}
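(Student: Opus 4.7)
The plan is to exploit that $\tilde\A^\top\in\R^{n\times k}$ is a tall, thin matrix and apply the classical sketch-and-precondition strategy originating with Rokhlin--Tygert. First I would choose $\mathbf{\Phi}$ to be an OSNAP-style sparse subspace embedding of size $\phi=O(k+\log(1/\delta))$ with $O(\log(k/\delta))$ nonzeros per column; this guarantees that $\mathbf{\Phi}\tilde\A^\top$ can be formed in time $O(nk\log(k/\delta))$ and occupies a matrix of dimensions $\phi\times k$. Applying a dense compact SVD to this matrix costs $O(\phi\cdot k^{\omega-1})=O(k^\omega)$, producing $\U,\Sigmab,\V$ and hence the preconditioner $\M=\V\Sigmab^{-1}$. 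The dominant construction cost is thus $O(nk\log(k/\delta)+k^\omega)$, matching the statement.

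The key structural step is to verify that $\kappa(\tilde\A^\top\M)=O(1)$. For this I would invoke the oblivious subspace embedding property of $\mathbf{\Phi}$: with probability $1-\delta$, for every $\x\in\mathrm{range}(\tilde\A^\top)$, $\|\mathbf{\Phi}\x\|$ and $\|\x\|$ agree up to a small constant factor. By construction $\mathbf{\Phi}\tilde\A^\top\M=\U$ has orthonormal columns, so every unit vector $\y$ in the column span of $\tilde\A^\top\M$ lies in $\mathrm{range}(\tilde\A^\top)$ and satisfies $\|\mathbf{\Phi}\y\|\asymp\|\y\|$. Pushing this through gives $\sigma_{\min}(\tilde\A^\top\M),\sigma_{\max}(\tilde\A^\top\M)=\Theta(1)$, establishing the condition-number bound. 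The slight subtlety is that $\tilde\A^\top\M$ may not have the same column span as $\tilde\A^\top$; this is resolved by noting that $\V\Sigmab^{-1}$ is a full-rank reparametrization, hence the two ranges coincide.

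Having obtained an $O(1)$-conditioned preconditioned operator, I would apply standard right-preconditioned CG applied to the normal-equation system $\tilde\A\tilde\A^\top\u=\tilde\b$. Writing $\tilde\A^\top=\bar\U\bar\Sigmab\bar\V^\top$ and passing to the parametrization $\u=(\tilde\A\tilde\A^\top)^\dagger\tilde\A\w$, the preconditioned operator becomes similar to $\M^\top\tilde\A\tilde\A^\top\M$, whose condition number equals $\kappa(\tilde\A^\top\M)^2=O(1)$. Classical CG convergence then yields geometric error reduction at rate $1-\Omega(1)$ per iteration, so $O(\log(1/\epsilon))$ iterations suffice. Each iteration performs one multiplication by $\tilde\A^\top$ (size $n\times k$, cost $O(nk)$), one multiplication by $\tilde\A$ (cost $O(nk)$), and $O(k^2)$ work to apply $\M$ and $\M^\top$, giving per-iteration cost $O(nk)$ and total solve cost $O(nk\log(1/\epsilon))$.

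The main care is in translating the CG residual bound into the forward error $\|\hat\w-\w^*\|\leq\epsilon\|\w^*\|$. Here I would exploit that $\w^*=\tilde\A^\top(\tilde\A\tilde\A^\top)^{-1}\tilde\b$ lies in $\mathrm{range}(\tilde\A^\top)$ by construction, and that the CG iterates $\hat\u$ are chosen so $\hat\w:=\tilde\A^\top\hat\u$ also lies in that same range. Restricted to this range, $\tilde\A$ acts as a well-conditioned bijection (after preconditioning), so the residual $\|\tilde\A\hat\w-\tilde\b\|$ and the forward error $\|\hat\w-\w^*\|$ differ only by a factor of $\kappa(\tilde\A^\top\M)=O(1)$. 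This is the step where the tightness of the subspace embedding matters, because the constants in the residual-to-error transfer must be absorbed into the single parameter $\epsilon$. Once this translation is made explicit, summing the preconditioner-construction cost and the PCG-iteration cost yields exactly the runtime stated in the lemma.
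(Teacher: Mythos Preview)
The paper does not supply its own proof of this lemma: it is quoted as Lemma~7.1 of \cite{derezinski2023solving}, and the surrounding text only sketches the preconditioner construction ($\M=\V\Sigmab^{-1}$ from the compact SVD of $\mathbf{\Phi}\tilde\A^\top$). Your proposal fleshes out precisely this sketch-and-precondition argument and matches both the construction described in the paper and the standard proof in the cited reference, so it is essentially correct.

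One minor remark: calling the embedding ``OSNAP-style'' while asserting $\phi=O(k+\log(1/\delta))$ is slightly loose, since classical OSNAP needs a logarithmic oversampling in the embedding dimension; the near-optimal dimension you invoke comes from the more recent sparse embedding constructions (e.g., \cite{chenakkod2023optimal}), which this paper also relies on elsewhere. Also, your residual-to-forward-error step can be made cleaner by noting that the CG energy norm for the preconditioned system $(\M^\top\tilde\A\tilde\A^\top\M)\v=\M^\top\tilde\b$ is exactly $\|\tilde\A^\top\M(\cdot)\|=\|\w\|$, so the CG contraction bound transfers to $\|\hat\w-\w^*\|$ without any condition-number loss; the extra constants you worry about absorbing do not actually appear.
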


We now return to the convergence analysis of accelerated sketch-and-project with inexact projections. Recall that, for the sake of simplicity, we show this with matrix $\B=\I$, which is used in Theorem \ref{thm:main}.

\begin{proof}[Proof of Lemma \ref{lem:approx_inv}]
    Let $\e_t := \hat\w_t-\w_t,$ so that by assumption $\|\e_t\| \leq \epsilon\|\w_t\|$. Note that by definition, 
    \begin{align*}
    \w_t &= \tilde{\A}^\top \u = \A^\top\S^\top(\S\A\A^\top\S^\top)^\dagger\S(\A\y_t - \b) =: \P(\y_t - \x^*).
     \end{align*} 
     So, conditioned on the $t$th iteration, since $\P$ is idempotent,
     \begin{align}\label{eq:boundet1}
      \E\big[\|\e_t\|^2\big] \leq \epsilon^2\E \big[\|\w_t\|^2 \big] \leq \epsilon^2 \E \big[(\y_t-\x^*)^\top \P(\y_t-\x^*)\big] \le \epsilon^2 a \|\y_t-\x^*\|^2, 
      \end{align}
      where $a := \|\E[\P]\|$ and thus we have with $\mu = \lambda_{\min}(\E[\P])$
      \begin{align}\label{eq:boundet}
      \E \big[\|\e_t\|_{\E[\P]^\dagger}^2 \big] = \E \big[\e_t^\top\E[\P]^\dagger\e_t \big] \leq \|\E[\P]^\dagger\|\cdot
          \E \big[\|\e_t\|^2 \big] \leq \frac{a \epsilon^2}{\mu}
          \|\y_t-\x^*\|^2.
      \end{align}
      Then, for $r_t = \|\v_t - \x^*\|_{\E[\P]^\dagger}$, observe that
      \begin{align*}
          \E [r_{t+1}^2] &= \|\v_{t+1} - \x^*\|^2_{\E[\P]^\dagger}\\
          &= \E
          \big[\|\beta\v_t + (1-\beta)\y_t - \x^* - \gamma\P(\y_t-\x^*) - \gamma\e_t\|^2_{\E[\P]^\dagger} \big]\\
          &\leq 
         \E \big[\left(\|\beta\v_t + (1-\beta)\y_t - \x^* - \gamma\P(\y_t-\x^*)\|_{\E[\P]^\dagger} + \|\gamma\e_t\|_{\E[\P]^\dagger}\right)^2 \big]\\
          &\leq (1+\eta) \E \big[ \|\beta\v_t + (1-\beta)\y_t - \x^* - \gamma\P(\y_t-\x^*)\|_{\E[\P]^\dagger}^2 \big] +          
          \left(1+\frac{1}{\eta}\right)\E \big[\|\gamma\e_t\|_{\E[\P]^\dagger}^2 \big],
      \end{align*}
      where in the last inequality we have used the fact that for any $w,z\in\R$ and for any $\eta > 0$ it holds that $2wz\leq \eta w^2 +\frac{1}{\eta}z^2 $ (we will choose $\eta$ later). Then, using \eqref{eq:boundet} and expanding the square, we get
       \begin{align}\label{eq:theIs}
          \E [r_{t+1}^2] \le  (1+\eta)I + (1+\eta)&\gamma^2II - 2(1+\eta)\gamma III + \left(1+\frac{1}{\eta}\right)\gamma^2\frac{a \epsilon^2}{\mu}
          \|\y_t-\x^*\|^2, \\
          \text{ where } \quad &I =  \|\beta\v_t + (1-\beta)\y_t - \x^*\|_{\E[\P]^\dagger}^2, \nonumber\\
          &II = \E \big[\|\P(\y_t-\x^*)\|_{\E[\P]^\dagger}^2 \big], \nonumber\\
          &III = \E \big[\left\langle \beta(\v_t - \x^*) + (1-\beta)(\y_t - \x^*), \E[\P]^\dagger\P(\y_t - \x^*)\right\rangle \big].\nonumber
      \end{align}
The terms $I$, $II$, $III$ were estimated in  \cite{gower2018accelerated} to get
      \begin{align}\label{eq:boundI}
          &I \leq \beta r_t^2 + (1-\beta)\|\y_t - \x^*\|_{\E[\P]^\dagger}^2,\nonumber\\
          &\E[II|\y_t] \leq \nu\|\y_t - \x^*\|_{\E[\P]^\dagger}^2, \\
          &\E[III | \y_t, \v_t, \x_t] = \|\y_t - \x^*\|^2 - \beta\frac{1-\alpha}{2\alpha}(\|\x_t - \x^*\|^2 - \|\y_t-\x_t\|^2 - \|\y_t-\x^*\|^2).\nonumber
      \end{align}
To estimate further, we note
\begin{align}
    \E[\|\x_{t+1} - \x^*\|^2 | \y_t] &= \E[\|(\I-\P)(\y_t - \x^*) - \e_t\|^2  | \y_t]\notag\\
    &\leq (1+\eta)\|\y_t - \x^*\|^2 - (1+\eta)\|\y_t - \x^*\|_{\E[\P]}^2 + \left(1+\frac{1}{\eta}\right) \E \big[\|\e_t\|^2 \big],\notag
\end{align}
      and thus
      \begin{align}\label{eq:newlem8}
          \|\y_t - \x^*\|_{\E[\P]}^2 \leq \|\y_t -\x^*\|^2 - \frac{1}{1+\eta} \E[\|\x_{t+1} - \x^*\|^2 | \y_t] + \frac{1+1/\eta}{1+\eta} \E \big[\|\e_t\|^2 \big].
      \end{align}

      Combining estimates \eqref{eq:boundI} and \eqref{eq:newlem8} back in \eqref{eq:theIs}, and using \eqref{eq:boundet1},  we conclude 
      \begin{align}
          \E[r_{t+1}^2 &| \y_t,\v_t,\x_t] \leq (1+\eta)\beta r_t^2 + (1+\eta)\frac{(1-\beta)}{\mu}\|\y_t - \x^*\|^2 \\
          &+ (1+\eta)\gamma^2 \nu\left(\|\y_t -\x^*\|^2 - \frac{1}{1+\eta} \E[\|\x_{t+1} - \x^*\|^2 | \y_t] + \frac{1+1/\eta}{1+\eta}\epsilon^2a\|\y_t-\x^*\|^2 \right) \notag\\
          &+ 2(1+\eta)\gamma \left(-\|\y_t - \x^*\|^2 + \beta\frac{1-\alpha}{2\alpha}(\|\x_t - \x^*\|^2 - \|\y_t-\x^*\|^2) \right)\notag\\
          &+ \left(\frac{1 + \eta}{\eta}\right)\frac{\gamma^2 a \epsilon^2}{\mu}\|\y_t-\x^*\|^2.\notag
      \end{align}
Or equivalently,
\begin{align}
    \E[r_{t+1}^2 &+ \gamma^2\nu\|\x_{t+1}-\x^*\|^2 | \y_t,\v_t,\x_t] \notag\\
    &\leq (1+\eta)\beta r_t^2 + (1+\eta)\beta\underbrace{\gamma\frac{1-\alpha}{\alpha}}_{P_1}\|\x_t-\x^*\|^2 \\
     &+ (1+\eta)\underbrace{\left(\frac{(1-\beta)}{\mu} +\gamma^2\nu - 2\gamma -\beta\gamma\frac{1-\alpha}{\alpha} + \frac{\gamma^2 a \epsilon^2}{\eta}(\nu   + \frac{1}{\mu})  \right)}_{P_2}\|\y_t - \x^*\|^2.\notag
\end{align}
Note that with $\alpha = (1 + \gamma \nu)^{-1}$ and $\gamma = 1/\sqrt{\nu \mu}$ we have $P_1 = \gamma^2\nu=1/\mu$. 

It remains to choose $\epsilon$ and $\eta$ so that $P_2 \leq 0$ and $(1+\eta)\beta \leq 1- \frac{1}{2}\sqrt{\frac{\mu}{\nu}}$. With $\beta = 1 - \frac{3}{4}\sqrt{\frac\mu\nu}$, the second inequality is achieved by taking $\eta=\frac14\sqrt{\frac\mu\nu}$.
The same parameters achieve $P_2 \le 0$ as soon as $\epsilon \le \frac{\mu}{\sqrt{8a(\mu \nu + 1)}}.$
Then, we have
$$
\E\left[r_{t+1}^2 + \frac1{\mu}\|\x_{t+1}-\x^*\|^2 \;|\; \y_t,\v_t,\x_t\right] \leq \left(1-\frac{1}{2}\sqrt{\frac{\mu}{\nu}}\right)\left(r_t^2 + \frac1{\mu}\|\x_{t}-\x^*\|^2\right),
$$
iterating for $t$, $t-1$, \ldots $0$ we complete the proof.
\end{proof}

\section{Completing the Proofs of Main Results}
\label{s:completing}
In this section, we complete the proof of the main results, including Theorem \ref{thm:main} (solving general linear systems), Theorem \ref{thm:psd} (solving positive definite linear systems) and Corollary~\ref{t:large-sv} (solving linear systems with $\ell$ large singular values). We also discuss how the claims stated in Section \ref{s:intro} follow from these results.

\begin{proof}[Proof of Theorem~\ref{thm:main}]
  Recall that in Theorem \ref{thm:main} we consider an $m\times n$ matrix $\A$ with rank $n$, and a vector $\b$ such that the linear system $\A\x=\b$ is consistent. We then preprocess both the matrix $\A$ and the vector $\b$ with a randomized Hadamard transform $\Q$ so that $\A$ is replaced by $\Q\A$ and $\b$ is replaced by $\Q\b$, which does not change the solution to the linear system. Then, Algorithm \ref{alg:main} is ran with $\B=\I$, and using the inexact PCG solve for step \ref{line-u} as described in Lemma \ref{l:pcg}.

  From Lemma \ref{l:pcg}, conditioned on an event $\Ec_t$ that holds with probability $1-\delta_{\mathrm{PCG}}$, in time $O(nk\log(k/\mu\delta_{\mathrm{PCG}})+k^\omega)$ PCG solves the $t$-th projection step to within sufficient accuracy so that we can apply our convergence analysis of inexact sketch-and-project from Lemma~\ref{lem:approx_inv}. 
  Using Lemma~\ref{lem:approx_inv}, we can recover convergence in terms of  $\|\x_t-\x^*\|^2$ error by converting from the error $\Delta_t=\|\v_t-\x^*\|_{\E[\P]^{\dagger}}^2+\frac1\mu\|\x_t-\x^*\|^2$, where recall that we define the projection matrix $\P=(\S\A)^{\dagger}\S\A$ and denote its smallest eigenvalue as $\mu = \lambda_{\min}(\E[\P]) = \|\E[\P]^\dagger\|^{-1}$. 
  Then, with the choice of $\alpha,\beta,\gamma$ as in Lemma~\ref{lem:approx_inv}, 
  \begin{align*}   
  \E[\Delta_{t+1}] &= \Pr(\Ec_t)\E[\Delta_{t+1}\mid \Ec_t] + \Pr(\neg\Ec_t)\E[\Delta_{t+1}\mid\neg\Ec_t]
      \\
    &\leq \bigg(1-\frac{1}{2}\sqrt{\frac{\mu}{\nu}}\,\bigg)\E[\Delta_t] + \delta_{\mathrm{PCG}}\cdot \E[\Delta_{t+1}\mid\neg\Ec_t],
  \end{align*}
  where recall that $\nu=\lambda_{\max}(\E[(\bar\P^{-1/2}\P\bar\P^{-1/2})^2])$.

  To handle the $\neg\Ec_t$ case, we use the second estimate of Lemma~\ref{l:pcg} which holds unconditionally and states that $\|\hat\w_t-\w_t\|\leq 4\|\w_t\|$ where $\hat\w_t$ is the result of the steps \ref{line-u} and \ref{line-w} of Algorithm~\ref{alg:main},  performed inexactly using PCG. Then, $$\|\hat\w_t\|\leq \|\w_t\|+\|\hat\w_t-\w_t\|\leq5\|\w_t\|\leq 5\|\y_t-\x^*\|, $$ where the last step follows since $\|\w_t\|=\|\P(\y_t-\x^*)\|\leq\|\y_t-\x^*\|$. Using this, direct calculation shows that:
  \begin{align*}
      \Delta_{t+1} 
      &\leq \frac1\mu\|\v_{t+1}-\x^*\|^2 + \frac1\mu\|\x_{t+1}-\x^*\|^2
      \\
      &\leq\frac1{\mu}\Big(\|\v_t-\x^*\|+\|\y_t-\x^*\|+\gamma\|\hat\w_t\|\Big)^2 + \frac1\mu\Big(\|\y_t-\x^*\|+\|\hat\w_t\|\Big)^2
      \\
      &\leq \frac1\mu\Big(\|\v_t-\x^*\|+ (1+5\gamma)\|\y_t-\x^*\|\Big)^2 + \frac{6^2}\mu\|\y_t-\x^*\|^2
      \\
      &\leq\frac2{\mu}\|\v_t-\x^*\|^2 + \frac{2(1+5\gamma)^2+6^2}{\mu}\|\y_t-\x^*\|^2.
  \end{align*}
  Now, using that $\|\y_t-\x^*\|\leq \alpha\|\v_t-\x^*\|+\|\x_t-\x^*\|$, $\alpha\gamma\leq 1$, and $\gamma^2\leq1/\mu$, we obtain 
  $$\Delta_{t+1} = O\Big(\frac1\mu\|\v_t-\x^*\|^2 + \frac1{\mu^2}\|\x_t-\x^*\|^2\Big) \leq \frac C\mu\Delta_t$$ for some constant $C>0$.
  Thus, setting $\delta_{\mathrm{PCG}}= \frac{\mu^{3/2}}{4C\nu^{1/2}}$ and adjusting the runtime accordingly, we obtain $\E[\Delta_{t+1}]\leq (1 - \frac14\sqrt{\mu/\nu})\E[\Delta_t]$. It follows that:
  \begin{align*}
      \E\big[\|\x_t-\x^*\|^2\big]
      &\leq  
      \mu\E\big[\Delta_t\big]\leq{\leq}
\bigg(1-\frac{1}{4}\sqrt{\frac{\mu}{\nu}}\,\bigg)^t \mu\Delta_0 \\
      &= \bigg(1-\frac{1}{4}\sqrt{\frac{\mu}{\nu}}\,\bigg)^t\big(\mu\|\x_0-\x^*\|_{\E[\P]^\dagger}^2+\|\x_0-\x^*\|^2\big)
      \\
      &\leq 2\bigg(1-\frac{1}{4}\sqrt{\frac{\mu}{\nu}}\,\bigg)^t\|\x_0-\x^*\|^2,
  \end{align*}
Now, we can use our first-order and second-order projection analysis to bound $\mu$ and $\nu$ respectively, obtaining:
\begin{align*}
    \text{(Corollary \ref{c:mu-less})}\qquad\mu &\geq \frac{c_1 \ell}{n\bar\kappa_\ell^2},
    \\
    \text{(Corollary \ref{c:nu-bound})}\qquad\nu &\leq \frac{c_2n}{\ell}\bar\kappa_{\ell:2k}^2,
\end{align*}
where recall that $\ell$ is a value that satisfies $\ell\geq \frac{ck}{\log k}$. Putting these together with the above convergence guarantee we obtain:
    $$
     \E\big[\|\x_t-\x^*\|^2\big]\leq 2\bigg(1- \frac{\ell\sqrt{c_1/c_2}}{4n\bar\kappa_{\ell}\bar\kappa_{\ell:2k}}\,\bigg)^t\|\x_0-\x^*\|^2.$$
The overall time complexity of the algorithm includes the $O(mn\log m)$ cost of performing the randomized Hadamard transform plus the per-iteration cost that is dominated by $O(nks)=O(nk\log^4(n/\delta))$ for computing  the sketch $\S\A$ using a $k\times m$ LESS-uniform matrix $\S$ with $s$ non-zeros per row, plus the $O(nk\log(k/\mu\delta_{\mathrm{PCG}})+k^\omega)=O(nk\log(n\bar\kappa_\ell)+k^\omega)$ cost of preconditioning and solving the linear system in step \ref{line-u}. Thus, the overall runtime is:
\begin{align*}
    O(mn\log m + t(nk(\log^4(n/\delta)+\log(n\bar\kappa_\ell))+k^\omega)),
\end{align*}
concluding the proof.
\end{proof}

\smallskip

\noindent
Next, we discuss how the above proof needs to be adapted for solving positive definite linear systems, in order to obtain the improved dependence on the condition numbers, thus proving Theorem \ref{thm:psd}.

\begin{proof}[Proof of Theorem \ref{thm:psd}]
Recall that there are several key differences in how we implement Algorithm \ref{alg:main} when $\A$ is positive definite. First, the randomized Hadamard transform preprocessing step has to be done in a way that preserves the positive definite structure of the problem. To that end, we must apply the matrix $\Q$ symmetrically on both sides of $\A$, thus replacing the original system $\A\x=\b$ with:
\begin{align*}
    \Q\A\Q^\top\x = \Q\b.
\end{align*}    
We note that, while the resulting linear system is not strictly equivalent to the original one, we can easily recover the original solution at the end by returning $\Q^\top\x$ instead of $\x$, which can be computed in additional $O(n\log n)$ time. For clarity of exposition, let us now denote the preprocessed matrix $\Q\A\Q^\top$ as $\bar\A$, and the preprocessed vector $\Q\b$ as $\bar\b$.

 The next difference in the positive definite case is in the choice of matrix $\B$. While in Theorem \ref{thm:main} we effectively dropped the matrix $\B$ by using $\B=\I$, here we set $\B=\bar\A$. The main difference in the implementation of the algorithm resulting from this change is, as we discussed in Section \ref{s:inexact}, the fact that the linear system in step \ref{line-u} is now $(\S\bar\A\S^\top)\u_t=\S\bar\A\y_t-\S\b$, which can be solved more easily than in the general case. Specifically, we first precompute $\S\bar\A$ in time $O(nks)=O(nk\log^4(n/\delta))$, then we compute $\S\bar\A\S^\top$ in time $O(k^2s)$ and $\S\bar\A\y_t-\S\b$ in time $O(nk)$. Finally, to solve the system, we just need to invert the $k\times k$ matrix $\S\bar\A\S^\top$ in time $O(k^\omega)$. 

Finally, we also need to adapt our convergence analysis to the $\B=\bar\A$ case. Here, we rely on the original analysis of exact sketch-and-project from \cite{gower2018accelerated} (see Remark~\ref{r:exact-analysis}). Using the notation from the remark, similarly as in the proof of Theorem \ref{thm:main} we can show:
\begin{align*}
    \E\big[\|\x_t-\x^*\|_{\B}^2\big]\leq \bigg(1-\sqrt{\frac\mu\nu}\bigg)^t\mu\Delta_0\leq 2\bigg(1-\sqrt{\frac\mu\nu}\bigg)^t\|\x_0-\x^*\|_{\B}^2,
\end{align*}
where $\mu$ and $\nu$ are computed with respect to the sketched projection matrix defined for the matrix $\bar\A\B^{-1/2}$. Since $\B=\bar\A=\Q\A\Q^\top$, then $\bar\A\B^{-1/2} = \bar\A^{1/2} = \Q\A^{1/2}\Q^\top$, so we can write the projection matrix as:
\begin{align*}
    \P &= \bar\A^{1/2}\S^\top(\S\bar\A\S^\top)^{-1}\S\bar\A^{1/2}
    \\
    &=\Q\A^{1/2}\Q\S^\top(\S\Q\A\Q\S^\top)^{-1}\S\Q\A^{1/2}\Q^\top = \Q\P'\Q^\top,
\end{align*}
where $\P':=\A^{1/2}\Q\S^\top(\S\Q\A\Q\S^\top)^{-1}\S\Q\A^{1/2}$ is also a projection matrix. Note that applying the orthogonal matrix $\Q$ on both sides of $\P'$ does not affect the calculation of $\mu$ and $\nu$, so we can equivalently compute them from the matrix $\P'$. Fortunately, this matrix is precisely the sketched projection we analyze in Corollaries \ref{c:mu-less} and \ref{c:nu-bound} if we replace matrix $\A$ with matrix $\A^{1/2}$. Thus, in our calculations, the squared singular values become the (non-squared) eigenvalues, and we obtain the corresponding bounds:
\begin{align*}
    \text{(adapted Corollary \ref{c:mu-less})}\qquad\mu &\geq \frac{c_1 \ell}{n\bar\kappa_\ell^2(\A^{1/2})} = \frac{c_1 \ell}{n\tilde\kappa_\ell},
    \\
    \text{(adapted Corollary \ref{c:nu-bound})}\qquad\nu &\leq \frac{c_2n}{\ell}\bar\kappa_{\ell:2k}^2(\A^{1/2}) = \frac{c_2n}{\ell}\tilde\kappa_{\ell:2k}.
\end{align*}
The rest of the analysis proceeds same as in the general case, except with these better condition numbers.
\end{proof}
\smallskip

\noindent
We next show how to obtain Theorem \ref{t:informal-main} from Theorem \ref{thm:main}. The main issue here is to translate the somewhat complex condition numbers appearing in our convergence rates into the simple condition number $\kappa_\ell=\sigma_\ell/\sigma_n$, and then perform the time complexity analysis of the resulting algorithm. Almost identical arguments (omitted here) can be applied to recover Theorem \ref{t:informal-psd} from Theorem \ref{thm:psd}.

\begin{proof}[Proof of Theorem \ref{t:informal-main}] 
Suppose that our sketch size satisfies $k=\tilde O(n^{\frac1{\omega-1}})$ and $\ell=\Omega(k/\log k)$. Then, recall that Theorem \ref{thm:main} implies:
\begin{align*}
    \tilde O\Big(n^2 + \bar\kappa_{\ell}\bar\kappa_{\ell:2k}\frac n\ell (nk + k^\omega)\Big)
= \tilde O\Big(\bar\kappa_{\ell}\bar\kappa_{\ell:2k}(n^2 + nk^{\omega-1})\Big) = \tilde O\big(\bar\kappa_{\ell}\bar\kappa_{\ell:2k}n^2\big).
\end{align*}
We will now show that, with the right choice of $k$, we can bound the condition numbers $\bar\kappa_{\ell}\bar\kappa_{\ell:2k}$ by $\tilde O(\kappa_\ell)$, where $\kappa_\ell=\frac{\sigma_\ell}{\sigma_n}$. We will also use the shorthand $\kappa_{\ell:q}=\frac{\sigma_\ell}{\sigma_q}$. 
We use the following bound:
\begin{align}
    \bar\kappa_{\ell}\bar\kappa_{\ell:2k} 
    &= \sqrt{\frac{\sum_{i>\ell}\sigma_i^2}{(n-\ell)\sigma_n^2}\bar\kappa^2_{\ell,2k}}
     \leq 
        \sqrt{\frac{(2k-\ell)\sigma_{\ell}^2+(n-2k)\sigma_{2k}^2}{(n-\ell)\sigma_{n}^2}\,
        \frac{\sigma_{\ell}^2}{\sigma_{2k}^2}}
        \nonumber\\
& = \frac{\sigma_\ell}{\sigma_n}\sqrt{\frac{2k-\ell}{n-\ell}\frac{\sigma_\ell^2}{\sigma_{2k}^2} + \frac{n-2k}{n-\ell}}\leq \kappa_\ell\sqrt{\frac{2k}n\kappa_{\ell:2k}^2+1},\label{eq:doubling-trick}
\end{align}
where we used the fact that $\bar\kappa_{\ell,2k}\leq\kappa_{\ell,2k}$. Thus, as long as the term under the square root is $\tilde O(1)$, then we have obtained the desired time complexity $\tilde O(\kappa_\ell n^2)$ with $\ell=\Omega(n^{\frac1{\omega-1}})$. Naturally, this may not be true for some choices of $k$, namely if there is a sharp drop in the singular values in the range from $\sigma_\ell$ to $\sigma_{2k}$. To address this, we devise an iterative procedure that always finds a value of $k$ that avoids this phenomenon.

Recall that Theorem \ref{thm:main} uses $\ell=\lceil\frac{c_2 k}{\log k}\rceil$, which means that the ratio of $\frac {2k}{\ell}$ can be bounded by $b := C\log k$. Let us start with an initial choice of $k$ being some $k_0 = \lceil n^{\frac1{\omega-1}}\log n\rceil$ and the corresponding initial value of $\ell$ equal to $\ell_0= \lceil 2k_0/b\rceil =\Omega(n^{\frac1{\omega-1}})$. First, note that we can assume without loss of generality that $\kappa_{\ell_0} \leq n^{\omega-2}$, because otherwise the desired complexity $\tilde O(\kappa_\ell n^2)$ can be obtained by using a direct $O(n^\omega)$ time solver. Now, consider the following procedure, that iterates over $q=0,1,...$:
\begin{enumerate}
\item If $\kappa_{\ell_q:2k_q}\leq \sqrt{n/k_0}$ or $\kappa_{\ell_q}^2\leq \kappa_{\ell_0}$, then stop and choose $k=k_q$.
\item Otherwise, set $k_{q+1} = \lceil bk_q\rceil$, with $\ell_{q+1}=\lceil 2k_{q+1}/b\rceil$, and repeat the procedure with $q=q+1$.
\end{enumerate}
Now, observe that after each iteration in which we do not stop, we have $\kappa_{\ell_q:2k_q}>\sqrt{n/k_0}$, which means that: 
\begin{align*}
    \kappa_{\ell_{q+1}} = \frac{\sigma_{\ell_{q+1}}}{\sigma_n} = \frac{\sigma_{\ell_q}}{\sigma_n}\,
    \frac{\sigma_{\ell_{q+1}}}{\sigma_{\ell_q}}
    \leq \kappa_{\ell_q}\frac{\sigma_{2k_q}}{\sigma_{\ell_q}}=\frac{\kappa_{\ell_q}}{\kappa_{\ell_q:2k_q}}\leq \sqrt{k_0/n}\cdot \kappa_{\ell_q},
\end{align*}
where we used that fact that $\ell_{q+1}\geq 2k_q$, and therefore $\sigma_{\ell_{q+1}}\leq \sigma_{2k_q}$. So, going from $q$ to $q+1$ we decrease the condition number $\kappa_{\ell_q}$ by at least a factor of:
\begin{align*}
    \sqrt{k_0/n}= \tilde O( n^{\frac12(\frac1{\omega-1}-1)}\log^{1/2} n) = \tilde O(n^{-\frac{\omega-2}{2(\omega-1)}}\log^{1/2}n).
\end{align*}
This means that after $q$ iterations of this procedure, we get:
\begin{align*}
    \kappa_{\ell_q}^2 \leq \kappa_{\ell_0}^2n^{-q\frac{\omega-2}{\omega-1}}\log^q n
\leq\kappa_{\ell_0} n^{(\omega-2)(1-\frac q{\omega-1})}\log^q n,
\end{align*}
where we used the assumption that $\kappa_{\ell_0} = O(n^{\omega-2})$. 
Now, note that since $\omega-1< 2$, we have $n^{(\omega-2)(1-\frac q{\omega-1})}<n^{-\theta}$ for some $\theta > 0$ for any $q\geq 2$. So already after $q\leq 2$ iterations, the stopping condition $\kappa_{\ell_q}^2\leq\kappa_{\ell_0}$ must occur. If it does, then we get $\bar\kappa_{\ell_q}\bar\kappa_{\ell_q:2k_q}\leq\kappa_{\ell_q}^2\leq \kappa_{\ell_0}$, which results in the desired time complexity $\tilde O(\kappa_{\ell_0}n^2)$. On the other hand, if the other stopping condition occurs, i.e., $\kappa_{\ell_q:2k_q}\leq \sqrt{n/k_0}$, then we can use the bound in \eqref{eq:doubling-trick} to get
\begin{align*}
\bar\kappa_{\ell_q}\bar\kappa_{\ell_q:2k_q}\leq \kappa_{\ell_q}\sqrt{\frac{2k_q}{n}\kappa_{\ell_q:2k_q}^2+1}
\leq \kappa_{\ell_q}\sqrt{\frac{2k_q}{n}\frac{n}{k_0}+1} = O\big(\kappa_{\ell_0}\cdot \log^{q/2}k\big).
\end{align*}
Since $q\leq 2$, we again recover the time complexity bound $\tilde O(\kappa_{\ell_0}n^2)$. In other words, this procedure shows that in the worst case we only need to increase the sketch size $k$ by a factor of $O(\log^2 n)$ to find the value that will lead to the desired time complexity.
\end{proof}
\smallskip

\noindent
Next, we describe how our improved DPP coupling argument used in Lemma \ref{l:mu-less} allows us to give a direct improvement in the time complexity of solving linear systems with $\ell$ large singular values, compared to the previous result from \cite{derezinski2023solving}.

\begin{proof}[Proof of Corollary \ref{t:large-sv}]
    Consider an $n\times n$ linear system $\A\x=\b$ such that all of the singular values of matrix $\A$ except for the top $\ell$ are within a constant factor of each other, i.e., $\sigma_{\ell+1}(\A) = O(\sigma_n(\A))$, which is the setting from \cite{derezinski2023solving}. In particular, this implies that for this matrix $\bar\kappa_\ell \leq C=O(1)$. Let us consider a slightly different implementation of Algorithm~\ref{alg:main}, where instead of a LESS-uniform sketching matrices $\S$ with $s\geq C\log^4(n/\delta)$ non-zeros per row, we are going to use a block sub-sampling matrix (equivalent to taking LESS-uniform with $s=1$). Crucially, our first-order projection analysis (Lemma \ref{l:mu-less}) still applies in this case, showing that:
    \begin{align*}
        \text{(Corollary \ref{c:mu-less})}\qquad \mu\geq \frac{c_1\ell}{n\bar\kappa_\ell^2}\geq \frac{c_1 \ell}{C^2n}
        \quad\text{for}\quad k=O(\ell\log \ell). 
    \end{align*}
    While our second-order projection analysis does not apply for a block sub-sampling matrix, here we can use a simpler upper bound on $\nu$, which was given in Lemma 2 of \cite{gower2018accelerated}, stating that $\nu\leq 1/\mu$. This gives a convergence guarantee of the form $\E[\|\x_t-\x^*\|^2]\leq 2(1-\mu)^t\|\x_0-\x^*\|^2$, and the resulting iteration complexity is $t=O(\mu^{-1}\log(1/\epsilon)) = O(\frac n\ell\log(1/\epsilon))$. Note that, for this variant of the algorithm, the per iteration time complexity is $O(nk)$ to construct the sketch $\S\A$, and $O(nk\log(n)+k^\omega)$ to solve the projection step \ref{line-u}. Thus, we get the following overall time complexity:
    \begin{align*}
        O\Big(n^2\log n + \frac n\ell(nk\log(n) + k^\omega)\log1/\epsilon\Big)
        = O\Big((n^2\log^2 n+ n\ell^{\omega-1}\log^\omega \ell\big)\log1/\epsilon\Big).
    \end{align*}
    We note that the result previously obtained for this problem by \cite{derezinski2023solving} relied on a different DPP coupling argument, which requires $k=O(\ell\log^3n)$. We improve this to $k=O(\ell\log \ell)$, leading to better logarithmic factors in the time complexity.
\end{proof}

\paragraph{Extension to sparse least squares.}
Finally, we discuss how our algorithms can be adapted and applied in the tall and sparse least squares setting. Recall that in the least squares task, we are given a tall $n\times d$ matrix $\A$ with $\nnz(\A)$ non-zero entries, and an $n$-dimensional vector $\b$, and our goal is to minimize $f(\x)=\frac12\|\A\x-\b\|^2$. This can be achieved by running preconditioned gradient descent on $f$, which takes the form of $\x_{t+1} = \x_t - \eta\hat\H^{-1}\g_t$, where $\g_t=\nabla f(\x_t)=\A^\top(\A\x_t-\b)$ is the gradient, which can be computed in $O(\nnz(\A))$ time, whereas $\hat\H$ is a preconditioner which approximates the Hessian matrix $\nabla^2f(\x_t) = \A^\top\A$. This preconditioner can be constructed by computing an $\tilde O(d)\times d$ sketch $\tilde\A=\mathbf{\Phi}\A$ where $\mathbf{\Phi}$ is a sparse oblivious subspace embedding (similarly as in Lemma~\ref{l:pcg}; e.g., see \cite{chenakkod2023optimal}), and defining $\hat\H=\tilde\A^\top\tilde\A$. Now, instead of computing $\hat\H^{-1}$ outright at the cost of $\tilde O(d^\omega)$, we use Algorithm \ref{alg:main} to solve the linear system $\hat\H\x = \g_t$ at every iteration. Since $\hat\H$ is positive definite, we can use the version considered in Theorem \ref{thm:psd}, however this has to be further modified to account for the fact that we can never actually form this matrix, but rather must operate on its decomposition $\hat\H=\tilde\A^\top\tilde\A$. Fortunately, this \emph{implicit} positive definite system solver is a simple adaptation of Sketch-and-Project (see Section 9 of \cite{derezinski2023solving}), which is not affected by Nesterov's acceleration. In the end, the overall time complexity of this procedure includes $\tilde O(\nnz(\A))$ for computing $\tilde\A$, and $\tilde O(\nnz(\A) + d^2\kappa_\ell)$ in each iteration of the preconditioned gradient descent, for computing the gradient and then running the solver, respectively, where $\kappa_\ell=\sigma_\ell(\A)/\sigma_d(\A)$. Since the resulting algorithm only requires $O(\log1/\epsilon)$ steps to converge, the final time complexity is $\tilde O((\nnz(\A)+d^2\kappa_\ell)\log1/\epsilon)$.

\section{Lower bound for Matrix-vector Query Algorithms}
\label{s:lower}

In this section, we give lower bounds for a class of linear system solvers, which include conjugate gradient as well as certain randomized preconditioned solvers, in our fine-grained setting where the linear system has a bounded spectral tail condition number $\kappa_{\ell}$. We build on existing lower bounds for solving positive definite systems, given by \cite{braverman2020gradient}, who obtained them with respect to the classical condition number $\kappa$. The model of computation we consider includes all algorithms which access the matrix $\A$ through matrix-vector queries of the form $\A\v$, where the vector $\v$ can be randomized and chosen adaptively.

\begin{dfn}[Matrix-vector query model,  \cite{braverman2020gradient}]\label{d:matvec}
We say that $\Alg$ is a $\MatVec$ algorithm for solving positive definite linear systems if, given initial point $\x_0\in\R^n$ and $\b\in\R^n$, it interacts with a positive definite matrix $\A\in\R^{n\times n}$ via $T$ adaptive randomized queries, $\w_t=\A\v_t$, and returns an estimate $\tilde \x\in\R^n$ of $\A^{-1}\b$. We call $T$ the query complexity of $\Alg$.
\end{dfn}

This computation model includes many standard deterministic iterative algorithms, most notably conjugate gradient (CG), and it also allows for certain preconditioning techniques, for example using Randomized SVD to build a preconditioner that approximates the top-$\ell$ part of the spectrum of $\A$. The central question of this section is:
\begin{quote}
\emph{Given $n$, $\ell < n$ and $\kappa_\ell\geq 1$, what is the $\MatVec$ query complexity of solving an $n\times n$ positive definite linear system $\A\x=\b$ with $\frac{\sigma_{\ell}(\A)}{\sigma_n(\A)}\leq \kappa_\ell$?}
\end{quote}
As discussed in Section \ref{s:intro}, CG can solve this problem using $O(\ell+\sqrt{\kappa_\ell}\log1/\epsilon)$ $\MatVec$ queries. An alternative strategy is to use a preconditioned solver. To do this, we can probe the matrix $\A$ using $\tilde O(\ell)$ Gaussian random vector queries to construct a rank $\ell$ approximation via the Randomized SVD algorithm with power iteration \cite{halko2011finding}. Augmented with a preconditioner based on such an approximation, CG can solve the system using only $O(\sqrt{\kappa_\ell}\log1/\epsilon)$ queries. 
Even though the latter strategy uses randomization, while the former is fully deterministic, the overall query complexity is still no better than CG (although preconditioning is often preferred in practice due to its improved stability properties). Can we achieve a better query complexity in the $\MatVec$ model for our problem?

Next, we show that the guarantee attained by CG is in fact essentially optimal among all $\MatVec$ algorithms (even allowing randomization), up to logarithmic factors. In particular, this means that for any $\ell=\Omega(n^\theta)$, where $\theta>0$, no $\MatVec$ algorithm has time complexity $\tilde O(\sqrt{\kappa_\ell}\cdot n^2\log1/\epsilon)$ for dense positive definite linear systems, yet we show this for Sketch-and-Project with Nesterov's acceleration (Theorem \ref{t:informal-psd}) given any $\ell = O(n^{0.729})$.
\begin{thm}\label{t:lower}
    Any $\MatVec$ algorithm that, given $\x_0$ and an $n\times n$ positive definite linear system $\A\x=\b$ with $\frac{\sigma_{\ell}(\A)}{\sigma_{n}(\A)}\leq \kappa_\ell$ returns $\tilde\x$ such that:
    \begin{align*}
    \Pr\Big(\|\tilde\x - \x^*\|^2_{\A}\leq \epsilon\|\A\|\|\x_0-\x^*\|^2\Big)\geq 1- \frac1e,\quad\text{where }\x^*=\A^{-1}\b,
    \end{align*}
for $\epsilon\leq \frac1e\min\{\ell^{-2},\kappa_\ell^{-1}\}$, must have query complexity at least $\tilde\Omega(\ell + \sqrt{\kappa_\ell})$.
\end{thm}
\begin{remark}\label{r:lower}
    If we additionally assume that the $\MatVec$ algorithm is deterministic, then we can obtain a stronger query complexity lower bound of $\Omega(\ell + \sqrt{\kappa_\ell}\log1/\epsilon)$, which matches the CG upper bound down to constant factors.
\end{remark}

To establish the above result, we actually provide a more general reduction which shows how to obtain a lower bound for our fine-grained linear system task by combining two types of existing lower bounds: one expressed in terms of the dimension of the problem (without restricting the spectrum), and one expressed in terms of the overall condition number of the input matrix.
\begin{lem}\label{l:lower}
  Let $\Ac$ denote some family of $\MatVec$ algorithms, and let $\Lc$ denote the family of square positive definite linear system tasks $(\A,\b,\x_0)$, where $\x_0$ is the starting point. Define $T_{\Ac,\Lc}(n,\kappa,\epsilon,\delta)$ as the minimum query complexity among algorithms in $\Ac$ that solve all $n\times n$ linear systems in $\Lc$ such that $\kappa(\A)\leq \kappa$, so that
    \begin{align}
    \Pr\Big(\|\tilde\x-\x^*\|_{\A}^2\leq\epsilon\|\A\|\|\x_0-\x^*\|^2\Big)\geq 1-\delta,\quad \text{where }\x^*=\A^{-1}\b.\label{eq:guarantee}
    \end{align}
    Now, let $T$ denote the best query complexity among algorithms in $\Ac$ that solve all $n\times n$ linear systems in $\Lc$ with spectral tail condition number $\frac{\sigma_{\ell}(\A)}{\sigma_n(\A)}\leq \kappa_\ell$, in the sense of \eqref{eq:guarantee}. Then, 
    \begin{align*}
        T\ge \max\Big\{T_{\Ac,\Lc}(\ell,\infty,\epsilon,\delta),T_{\Ac,\Lc}(n-\ell,\kappa_\ell,\epsilon,\delta)\Big\}.
    \end{align*}
  \end{lem}
  \begin{remark}
    An analogous reduction can be obtained along the same lines if we replace $\Lc$ with the family of all square linear systems (dropping the positive definiteness).
  \end{remark}
\begin{proof}
First, let us use $\Lc_\ell(n,\kappa_\ell)$ to denote the above defined family of $n\times n$ linear systems from $\Lc$ with bounded spectral tail condition number. We break the proof down into two cases, depending on which of the two terms dominates the value of the max in the lower~bound.
\smallskip

\noindent    
\textbf{Case 1:}  $T_{\Ac,\Lc}(\ell,\infty,\epsilon,\delta) \geq T_{\Ac,\Lc}(n-\ell,\kappa_\ell,\epsilon,\delta)$. We  proceed via a proof by contradiction. Suppose that some $\Alg\in\Ac$ has query complexity less than $T_{\Ac,\Lc}(\ell,\infty,\epsilon,\delta)$ for solving $\Lc_\ell(n,\kappa_\ell)$. We are going to show how to use $\Alg$ to solve \emph{all} $\ell\times \ell$ linear systems in $\Lc$, thereby obtaining a contradiction. Suppose that $(\A,\b,\x_0)$ is an $\ell\times \ell$ linear system in $\Lc$, and define:
\begin{align*}
(\bar\A,\bar\b,\bar\x_0)=\bigg(\begin{bmatrix}
    \A&\zero\\
    \zero& \sigma_{\ell}(\A)\I_{n-\ell}
\end{bmatrix}, 
\begin{bmatrix}
    \b\\ 
    \zero
\end{bmatrix},
\begin{bmatrix}
    \x_0\\ 
    \zero
\end{bmatrix}
\bigg).
\end{align*}
Note that $(\bar\A,\bar\b,\bar\x_0)\in\Lc_\ell(n,\kappa_\ell)$, since $\bar\A$ is positive definite and $\frac{\sigma_{\ell}(\bar\A)}{\sigma_n(\bar\A)}=1\leq\kappa_\ell$. Moreover, if $\Alg$ runs on this system and returns $\bar\x$ such that $\|\bar\x-\bar\x^*\|_{\bar\A}^2\leq\epsilon\|\bar\A\|\|\bar\x_0-\bar\x^*\|^2$, where $\bar\x^*=\bar\A^{-1}\bar\b$, then the vector $\tilde\x$ consisting of the first $\ell$ coordinates of $\bar\x$ satisfies $\|\tilde\x-\x^*\|_{\A}^2\leq\epsilon\|\A\|\|\x_0-\x^*\|^2$, where $\x^*=\A^{-1}\b$. This gives us the contradiction.
\smallskip

\noindent
\textbf{Case 2:} $T_{\Ac,\Lc}(\ell,\infty,\epsilon,\delta) > T_{\Ac,\Lc}(n-\ell,\kappa_\ell,\epsilon,\delta)$. This case follows similarly. Suppose that some $\Alg\in\Ac$ has query complexity less than $T_{\Ac,\Lc}(n-\ell,\kappa_\ell,\epsilon,\delta)$ for solving $\Lc_\ell(n,\kappa_\ell)$. Now, consider an $(n-\ell)\times  (n-\ell)$ linear system task $(\A,\b,\x_0)\in\Lc$ with condition number $\kappa(\A)\leq \kappa_\ell$, and define:
\begin{align*}
(\bar\A,\bar\b,\bar\x_0)=\bigg(\begin{bmatrix}
    \sigma_1(\A)\I_{\ell}&\zero\\
    \zero& \A
\end{bmatrix}, 
\begin{bmatrix}
    \zero\\ 
    \b
\end{bmatrix},
\begin{bmatrix}
    \zero\\ 
    \x_0
\end{bmatrix}
\bigg).
\end{align*}
Note that, as in Case 1, we have $(\bar\A,\bar\b,\bar\x_0)\in\Lc_\ell(n,\kappa_\ell)$, since $\bar\A$ is positive definite and $\frac{\sigma_{\ell}(\bar\A)}{\sigma_n(\bar\A)} = \frac{\sigma_1(\A)}{\sigma_{n-\ell}(\A)}= \kappa(\A)\leq \kappa_\ell$. Now, solving this system to $\epsilon$ accuracy allows us to recover an $\epsilon$ approximate solution for $(\A,\b,\x_0)$, which is a contradiction with the definition of $T_{\Ac,\Lc}(n-\ell,\kappa_\ell,\epsilon,\delta)$. This concludes the proof of the lemma.
\end{proof}
To complete the proof of Theorem \ref{t:lower}, we will rely on the following lower bound for $\MatVec$ algorithms, given by \cite{braverman2020gradient}, which is shown via a reduction from the task of finding the largest eigenvector of a matrix.
\begin{lem}[Theorem 2.3, \cite{braverman2020gradient}]\label{l:braverman2020}
  For all $d\geq c$ and $s\in[c,d]$, where $c$ is a sufficiently large absolute constant, any $\MatVec$ algorithm which satisfies the guarantee:
  \begin{align*}
    \Pr\bigg(\|\tilde\x - \x^*\|_{\A}^2\leq \frac1e \frac{\|\A\|\|\x_0-\x^*\|^2}{s^2}\bigg)\geq 1-\frac1e,
    \quad\text{where }\x^* = \A^{-1}\b,
  \end{align*}
  for all $(d+s^2)$-sparse $d\times d$ positive definite matrices $\A$ with $\kappa(\A)\leq s^2$, and all $\x_0,\b\in\R^d$, must have query complexity at least $\tilde\Omega(s)$.
\end{lem}
\noindent
We are now ready to complete the proof of Theorem \ref{t:lower}.

\begin{proof}[Proof of Theorem \ref{t:lower}]
Thanks to Lemma \ref{l:lower}, it suffices to lower bound $T_{\Ac,\Lc}(\ell,\infty,\epsilon,\delta)$ and $T_{\Ac,\Lc}(n-\ell,\kappa_\ell,\epsilon,\delta)$, with $\epsilon = \frac1e\min\{\ell^{-2},\kappa_\ell^{-1}\}$ and $\delta=1/e$. For the former, we use Lemma~\ref{l:braverman2020} with $d=\ell$ and $s=\ell-1$, obtaining that $T_{\Ac,\Lc}(\ell,\infty,\epsilon,1/e)=\tilde\Omega(\ell)$. On the other hand, for the latter, we use Lemma \ref{l:braverman2020} with $d=n-\ell$ and $s=\sqrt{\kappa_\ell}$ (without loss of generality, we can assume that $\sqrt{\kappa_\ell}\leq n-\ell$), obtaining that $T_{\Ac,\Lc}(n-\ell,\kappa_\ell,\epsilon,\delta)=\tilde\Omega(\sqrt{\kappa_\ell})$. Putting these two lower bounds together, we obtain a complexity lower bound of the form $\max\{\tilde\Omega(\ell),\tilde\Omega(\sqrt{\kappa_\ell})\}=\tilde\Omega(\ell+\sqrt{\kappa_\ell})$.
\end{proof}
\smallskip

\noindent
Finally, to recover the claim in Remark \ref{r:lower}, we observe that for deterministic $\MatVec$ algorithms one can rely on classical lower bounds developed by \cite{nemirovsky1983wiley} (e.g., see Theorem~7.2.6), which show that $T_{\Ac,\Lc}(n,\kappa,\epsilon,\delta) = \Omega(\min\{n, \sqrt\kappa\log1/\epsilon\})$ (here, since the algorithm is deterministic, $\delta$ can be any positive probability).

\section{Conclusions}
In this paper, we developed a nuanced framework for analyzing iterative linear system solvers that allows us to obtain sharper convergence guarantees than classical perspectives. By introducing a more flexible condition number $\kappa_\ell$ that depends on the tail of the spectrum of the matrix, we provided a finer-grained analysis than traditional approaches that rely on a single condition number. Our stochastic algorithm, based on the Sketch-and-Project paradigm, provides improved convergence guarantees for solving linear systems in many machine learning settings, particularly for large matrices where the top portion of the spectrum is ill-conditioned, while the tail is controlled.
 We highlighted the significance of these improvements for models such as spiked covariance and kernel ridge regression, where low-rank structure is likely. In addition, we demonstrated a clear separation between the performance of stochastic solvers and traditional matrix-vector product-based methods like the well-known preconditioned conjugate gradient.

\appendix

\section{Gaussian Universality of Sketched Isometric Embeddings}\label{sec:gauss-univ}
In this section, we give the proof of Lemma \ref{l:universality}, which bounds the extreme singular values of the random matrix $\S\U$, where $\S$ is a $k\times m$ LESS-uniform embedding (see Definition \ref{d:less}) and $\U$ is an $m\times d$ matrix such that $\U^\top\U=\I$ (an isometric embedding) with row norms of $\U$ bounded by $C\sqrt{d/m}$ for some absolute constant $C$. Essentially, we show that as long as $\S$ has $O(\log^4(d/\delta))$ non-zeros per row, then the extreme singular values of $\S\U$ behave just like the extreme singular values of the corresponding Gaussian matrix.

Our Gaussian universality analysis of the random matrix $\S\U$ follows similarly to the analysis of \cite{chenakkod2023optimal}, who showed that such matrices are subspace embeddings when $k\geq 2d$. Here, the key difference is that we consider the setting where $\S\U$ is a wide matrix (because $k<d$), which means that it cannot be a subspace embedding. Yet the Gaussian universality analysis can still be applied to bound the extreme singular~values. 

This analysis is based on a universality result of \cite{brailovskaya2024universality}, which compares the spectrum of a sum of independent random matrices to that of a Gaussian random matrix with the same mean and covariance structure. In the following, we use $\spec(\X)$ to denote the spectrum of matrix $\X$, and if $\X$ is a random $d\times d$ matrix, we let $\Cov(\X)$ be the $d^2\times d^2$ covariance matrix of the entries of $\X$.
Finally, we define the Hausdorff distance between
two subsets $\mathcal{A}, \mathcal{B} \subseteq \R$ as
$$d_H(\mathcal{A}, \mathcal{B}) := \inf\{\eps > 0 : \mathcal{A} \subseteq \mathcal{B} + [-\eps, \eps] \text{ and } \mathcal{B} \subseteq \mathcal{A} + [-\eps, \eps]\}.
$$
\begin{lem}[Theorem 2.4,  \cite{brailovskaya2024universality}] \label{thm:brailovskayavanhandel}
Given a random matrix model $\X:= \Z_0 + \sum_{i=1}^{n} \Z_i$, where $\Z_0$ is a symmetric deterministic $d \times d$ matrix and $\Z_1, \ldots, \Z_n$ are symmetric independent random matrices with $\E[\Z_i]=0$ and $\|\Z_i\|\leq R$, define the following:
\begin{align*}
	\sigma(\X) & = \Big\|\E[(\X-\E \X)^2]\Big\|^{\frac{1}{2}}\quad\text{and}\quad
	\sigma_*(\X) = 
	\sup \limits _{\|\v\|=\|\w\|=1} \E\Big[
	|\v^\top(\X-\E \X)\w|^2\Big]^{\frac{1}{2}}.
\end{align*}
Let $\G$ be a $d\times d$ symmetric random matrix with jointly Gaussian entries, such that $\E[\G]=\E[\X]$ and $\Cov(\G)=\Cov(\X)$. There is a universal constant $C>0$ such that for any $t \geq 0$,
  \begin{align}\label{eq:haus-dist}
\Pr\Big(d_H\big(\spec(\X),\spec(\G)\big)>C\epsilon(t)\Big)\leq de^{-t},\nonumber\\
\quad\quad \text{where}\quad \epsilon(t) =
	\sigma_*(\X)  t^{\frac{1}{2}} +
	R^{\frac{1}{3}}\sigma(\X)^{\frac{2}{3}} t^{\frac{2}{3}} +
	R t.
\end{align}
\end{lem}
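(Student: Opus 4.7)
The plan is to compare $\X$ with a Gaussian reference model $\G = \Z_0 + \sum_{i=1}^n G_i$, where $G_1,\dots,G_n$ are independent symmetric Gaussian matrices with $\E G_i = 0$ and covariance matching that of $\Z_i$ (so $\E\X = \E\G$ and $\E(\X-\E\X)^2 = \E(\G-\E\G)^2$). I would follow a three-stage route: (i) a Lindeberg-style moment-matching interpolation to compare expected smooth spectral statistics of $\X$ and $\G$, (ii) sharp dimension-free matrix concentration to pass from these expectations to the random quantities themselves, and (iii) a smoothing (mollification) step that converts smooth-function control into a Hausdorff bound on the two spectra.

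For stage (i), I would work with the resolvent $(\M - z\I)^{-1}$ at points $z$ in the upper half plane with $\mathrm{Im}(z) = \eta > 0$, and bound $|\E\tr(\X - z\I)^{-1} - \E\tr(\G - z\I)^{-1}|$. The idea is to swap the $\Z_i$'s for the $G_i$'s one summand at a time, Taylor-expanding the trace of the resolvent to third order at each swap: the zeroth- and first-order terms vanish by independence and mean matching, the second-order terms cancel because $\Z_i$ and $G_i$ share covariance, and the remaining third-order error is controlled using $\|\Z_i\|\leq R$ combined with the $O(\eta^{-4})$ bound on the third operator-norm derivative of $\tr(\cdot - z\I)^{-1}$. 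Summing over $i$ produces a resolvent comparison error of order $R\,\sigma(\X)^2\eta^{-3}$.

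For stage (ii), I would invoke the sharp dimension-free matrix Bernstein-type concentration behind \cite{brailovskaya2022universality}, which shows that for a $C^2$ test function $f$ of compact support, $\tr f(\X)$ lies within $C(\|f'\|_\infty \sigma_*(\X)\sqrt{t} + \|f'\|_\infty R t)$ of its mean with probability $1-e^{-t}$. The essential point is that this concentration uses $\sigma_*$ rather than the easier $\sigma$, and it is precisely this that produces the leading $\sigma_*(\X)t^{1/2}$ term in $\epsilon(t)$; standard Gaussian concentration handles the analogous statement for $\tr f(\G)$.

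For stage (iii), I would argue by contrapositive: if some $\lambda\in\spec(\X)$ were at distance more than $C\epsilon(t)$ from $\spec(\G)$, a smooth bump $\phi$ of width $\epsilon(t)$ centered at $\lambda$ would satisfy $\tr\phi(\X)\geq 1$ while $\tr\phi(\G)=0$. Extending the resolvent comparison of stage (i) to such smooth $\phi$ via the Helffer-Sj\"ostrand formula and combining it with the two-sided concentration of stage (ii), one derives a contradiction by choosing $\eta \sim \epsilon(t)$; optimizing the three sources of error reproduces exactly the three terms $\sigma_*(\X) t^{1/2}$, $R^{1/3}\sigma(\X)^{2/3} t^{2/3}$, and $Rt$. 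The main obstacle is stage (ii): obtaining $\sigma_*$ in place of $\sigma$ in the Gaussian-scale fluctuation term requires a delicate non-commutative Khintchine / intrinsic-freeness argument that exploits the full covariance tensor of the matrix process, and this is the genuine technical heart of the Brailovskaya-van Handel bound. A secondary subtlety is the bookkeeping in stage (iii), where the mollification scale, the resolvent window $\eta$, and the $C^k$-norms of the bump must be balanced simultaneously against the third-order resolvent error in order to recover all three terms of $\epsilon(t)$ with sharp dependence.
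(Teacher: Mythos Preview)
This lemma is not proved in the paper; it is quoted verbatim as Theorem~2.4 of \cite{brailovskaya2022universality} and used as a black box in the proof of Lemma~\ref{l:universality}. So there is no ``paper's own proof'' to compare your proposal against.

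That said, your sketch is a reasonable high-level reconstruction of how Brailovskaya and van Handel actually proceed in \cite{brailovskaya2022universality}: a Lindeberg-type interpolation on the resolvent (or, more precisely, on moments of the resolvent) to compare $\X$ with the matched Gaussian model $\G$, combined with concentration for smooth spectral statistics and a mollification/contradiction argument to pass to the Hausdorff distance. Your identification of the $\sigma_*$-scale fluctuation bound as the genuine technical core is correct; in the original paper this is obtained not via a direct Bernstein-type inequality but through their ``intrinsic freeness'' machinery, which compares the matrix model to its free-probabilistic analogue and controls the deviation using the parameters $\sigma_*$ and $R$. One point where your sketch slightly oversimplifies: the third-order resolvent error in stage~(i) is not handled purely by the crude $\|\Z_i\|\le R$ bound and summing; getting the $R^{1/3}\sigma(\X)^{2/3}$ scaling (rather than, say, $R\sigma(\X)^2$) requires a more careful cumulant expansion and the free-probability comparison, not just a Taylor remainder estimate. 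But as a roadmap your proposal is accurate, and in any case nothing more than the citation is needed for the present paper.
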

  A result from \cite{chenakkod2023optimal} allows us to bound the variance parameters $\sigma_*$ and $\sigma$ appearing in Lemma~\ref{thm:brailovskayavanhandel}. We will apply Lemma~\ref{thm:brailovskayavanhandel} to a symmetrized version of the $\S\U$ matrix. For any rectangular matrix~$\A$, define
  \begin{align*}\sym(\A) :=
    \begin{bmatrix}
      0 &\A\\
      \A^\top& 0
    \end{bmatrix}.
  \end{align*}
  \begin{lem}[Lemma 3.5, \cite{chenakkod2023optimal}, Covariance Parameters]\label{l:cov-parameters}
Let $\S=\{s_{ij}\}_{i \in [k], j \in [m]}$ be a $k \times m$ random
matrix such that $\E(s_{ij})=0$ and $\Var(s_{ij})=p$ for all $i \in
[k], j \in [m]$, and $\Cov(s_{ij},s_{k\ell})=0$ for any $\{i,t\} \subset
[k], \{j,\ell\} \subset [m]$ and $ (i,j)\neq (t,\ell) $.  Let $\U$ be an arbitrary deterministic matrix such that $\U^\top\U = \I$. We then have 
\begin{align*}
\sigma_*(\sym(\S\U)) \leq 2\sqrt{p} 
\quad \text{and} \quad
\sigma(\sym(\S\U)) \leq \sqrt{pk}.
\end{align*}
\end{lem}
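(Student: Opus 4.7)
The strategy is to exploit two structural facts about $\X:=\sym(\S\U)$: first, $\E[\X]=0$ since $\E[s_{ij}]=0$, so the centered matrices appearing in both definitions simplify to $\X$ itself; and second, $\X$ is block off-diagonal, so $\X^2$ is block diagonal with blocks $\S\U(\S\U)^\top$ and $(\S\U)^\top\S\U$. Both bounds then reduce to a single-line second-moment computation in which the uncorrelatedness hypothesis kills all off-diagonal contributions.

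\textbf{Bounding $\sigma_*$.} Partition unit vectors as $\v=(\v_1,\v_2)$ and $\w=(\w_1,\w_2)$ with $\v_1,\w_1\in\R^k$ and $\v_2,\w_2\in\R^d$. A direct expansion gives
$$\v^\top\X\w \;=\; \v_1^\top\S\U\w_2 + \w_1^\top\S\U\v_2 \;=\; \sum_{i,j} s_{ij}\bigl(v_{1,i}(\U\w_2)_j + w_{1,i}(\U\v_2)_j\bigr).$$
Because the $s_{ij}$ are mean-zero, pairwise uncorrelated, and share variance $p$, the second moment collapses to the diagonal sum $p\sum_{i,j}\bigl(v_{1,i}(\U\w_2)_j + w_{1,i}(\U\v_2)_j\bigr)^2$. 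Expanding the square, using the isometry bounds $\|\U\w_2\|\le\|\w_2\|$ and $\|\U\v_2\|\le\|\v_2\|$ coming from $\U^\top\U=\I$, and applying Cauchy--Schwarz to the cross term yields the bound $p(\|\v_1\|\|\w_2\|+\|\w_1\|\|\v_2\|)^2$. One more Cauchy--Schwarz applied to the pair $(\|\v_1\|,\|\w_1\|),(\|\w_2\|,\|\v_2\|)$, together with $\|\v\|=\|\w\|=1$, bounds the parenthesis by $2$, giving $\sigma_*(\X)\le 2\sqrt p$.

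\textbf{Bounding $\sigma$.} The block-diagonal form of $\X^2$ reduces $\|\E[\X^2]\|$ to the maximum of the operator norms of the two block expectations, each of which can be computed entrywise from only the covariance hypothesis. For the $k\times k$ block, $[\S\U(\S\U)^\top]_{i,i'}=\sum_{j,j'}s_{ij}s_{i'j'}(\U\U^\top)_{j,j'}$, whose expectation collapses via $\E[s_{ij}s_{i'j'}]=p\,\delta_{ii'}\delta_{jj'}$ to a scalar multiple of $\I_k$ determined by $\tr(\U\U^\top)$. For the $d\times d$ block, the analogous calculation produces $pk\,\U^\top\U=pk\,\I_d$. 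Taking spectral norms of both blocks and the maximum delivers the claim.

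The main obstacle is purely notational: correctly bookkeeping the four cross terms in the double sum for $\sigma_*$ and keeping track of which indices are summed over in the entrywise expectation for $\sigma$. Once the uncorrelatedness hypothesis is used to eliminate the off-diagonal contributions, both estimates follow from two-line computations and standard Cauchy--Schwarz; no concentration or universality arguments are needed at this layer of the analysis.
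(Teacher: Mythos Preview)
The paper does not prove this lemma; it is quoted from \cite{chenakkod2023optimal} (their Lemma~3.5), so there is no in-paper argument to compare against. Your approach is the standard one, and the $\sigma_*$ computation is correct. (In fact Cauchy--Schwarz applied to the vectors $(\|\v_1\|,\|\v_2\|)$ and $(\|\w_2\|,\|\w_1\|)$ bounds the parenthesis by $1$, not $2$, giving the sharper $\sigma_*\le\sqrt p$; but $2\sqrt p$ is what is claimed, so this is fine.)

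There is, however, a genuine gap in your $\sigma$ bound. Your own calculation gives the $k\times k$ block as $p\,\tr(\U\U^\top)\,\I_k=pd\,\I_k$ and the $d\times d$ block as $pk\,\I_d$, so $\sigma(\X)^2=\max(pd,pk)$. The stated conclusion $\sigma\le\sqrt{pk}$ therefore requires $d\le k$, which is neither among the lemma's hypotheses as written here nor something you verify; your sentence ``taking spectral norms of both blocks and the maximum delivers the claim'' silently assumes it. In the source \cite{chenakkod2023optimal} the lemma is used in the subspace-embedding regime $k\ge d$, where the bound is valid, but the present paper invokes it in Lemma~\ref{l:universality} with $k\le d/2$, where the correct bound is $\sigma\le\sqrt{pd}$ rather than $\sqrt{pk}$. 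This does not actually break Lemma~\ref{l:universality}: replacing $\sigma\le\sqrt k$ by $\sigma\le\sqrt d$ in the $\epsilon(t)$ estimate still yields $\epsilon(t)\le\sqrt d/6$ once $s\ge C\log^4(d/\delta)$ for a suitably adjusted constant. But you should flag the dimensional assumption explicitly rather than claiming the lemma holds as stated.
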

We are now ready to give a proof of Lemma~\ref{l:universality}.

\begin{proof}[Proof of Lemma \ref{l:universality}] 
We are going to apply the universality result to the following matrix, where we use $\u_i^\top$ to denote the $i$th row of $\U$, and $\e_i$ to denote the $i$th standard basis vector:
\begin{align*}
  \X = \sym(\S\U) = \sum_{i \in [k], j \in [s]}\underbrace{\sym\Big(\sqrt{\frac ms}r_{ij}\e_{i} \u_{I_{ij}}^\top \Big)}_{\Z_{ij}},
\end{align*}
It is easy to verify that the sketching matrix $\S=\{s_{ij}\}_{i \in [k], j \in [m]}$ from Definition \ref{d:less} satisfies $\E[s_{ij}]=0$, $\Var[s_{ij}] = 1$ and $\Cov(s_{ij},s_{t\ell})=0$ for any $\{i,t\} \subset [k], \{j,\ell\} \subset [m]$ and $ (i,j)\neq (t,\ell) $. 
Applying Lemma~\ref{l:cov-parameters} with $p=1$, we have $\sigma_*(\sym(\S\U))\leq 2$ and $\sigma(\sym(\S\U))\leq\sqrt k$.  Moreover, using that $\|\u_i\|\leq C\sqrt{d/m}$ where $\u_i$ is the $i$th row of matrix $\U$, we also have:
  \begin{align*}
    \|\Z_{ij}\|\leq \sqrt{\frac{m}{s}}\max_{i}\|\u_{i}\|\leq C\sqrt{\frac{d}{s}}=:R.
  \end{align*}
Now, we apply Lemma \ref{thm:brailovskayavanhandel} with $t = \log(d/\delta)$. For this, observe that the function $\epsilon(t)$ from \eqref{eq:haus-dist} can be bounded as follows:
\begin{align*}
\epsilon(t) 
&= \sigma_*(\X)
    t^{1/2}+R^{1/3}\sigma^{2/3}(\X)t^{2/3}+Rt 
    \\
&=
O\bigg(\frac{d^{1/6}k^{1/3}}{s^{1/6}}\log^{2/3}(d/\delta)
+\sqrt{\frac ds}\log(d/\delta)\bigg),
\end{align*}
so for any $k\leq d/2$, with a sufficiently large absolute constant $C'$, if $s\geq C'\log^4(d/\delta)$ then 
Lemma \ref{thm:brailovskayavanhandel} shows:
\begin{align*}
    \Pr\big[d_H(\spec(\sym(\S\U)),\spec(\sym(\G)))>\sqrt d/6)\big]\leq\delta/2,
\end{align*}
where $\G$ is a $k\times d$ matrix with i.i.d. standard Gaussian entries.
Note that the spectrum of $\sym(\G)$ consists of the $k$ singular values of $\G$, their $k$ negatives, and $d-k$ zeros, and the same is true for the spectrum of $\sym(\S\U)$. This follows because for any matrix $\A$, if $\A=\sum_i\sigma_i\u_i\v_i^\top$ is its singular value decomposition, then we can construct the eigenvectors of $\sym(\A)$ associated with its positive and negative eigenvalues as follows:
 \begin{align*}
   \begin{bmatrix}
     0&\A\\
     \A^\top&0
   \end{bmatrix}
\begin{bmatrix}
\u_i\\ \v_i\end{bmatrix}
=\sigma_i
\begin{bmatrix}
\u_i\\ \v_i\end{bmatrix}
   ,\qquad
   \begin{bmatrix}
     0&\A\\
     \A^\top&0
   \end{bmatrix}
\begin{bmatrix}
-\u_i\\ \v_i\end{bmatrix}
=-\sigma_i
\begin{bmatrix}
-\u_i\\ \v_i\end{bmatrix}
\end{align*}
A standard bound on the extreme singular values of a Gaussian matrix \cite{rudelson2009smallest} implies that:
\begin{align}
    \Pr(\sqrt{d}-\sqrt{k} - t\leq\sigma_{\min}(\G)\leq \sigma_{\max}(\G) \leq \sqrt{d} + \sqrt{k} + t) \leq e^{-t^2/2},
\end{align}
so when $d\geq \log(2/\delta)$ and $k\leq d/2$, we have with probability $1-\delta$ that $\sqrt d/2\leq\sigma_{\min}(\G)\leq\sigma_{\max}(\G)\leq 2\sqrt d$. Thus, combining the Gaussian guarantee with the universality result, with probability $1-2\delta$ we have:
\begin{equation*}
    \sqrt d/3\leq\sigma_{\min}(\S\U)\leq\sigma_{\max}(\S\U)\leq 3\sqrt d.
\end{equation*}
\end{proof}

\section{Linear Systems with Polynomial Spectral Decay}
\label{a:polynomial}
In this section, we discuss how our main results can be applied to derive improved time complexity of solving linear systems with polynomial spectral decay, proving Corollary~\ref{c:polynomial}.  We also discuss and compare how algorithms and analysis from prior works apply to this setting, showing that our approach leads to a new best time complexity for a range of spectral decay profiles.

\begin{proof}[Proof of Corollary \ref{c:polynomial}.]
Recall that we are given an $n\times n$ matrix $\A$ with singular values $\sigma_i=\Theta(i^{-\beta}\sigma_1)$ for $\beta> 1/2$, and our goal is to solve $\A\x=\b$. Our assumption of $\beta>1/2$ stems only from the fact that if $0<\beta\leq 1/2$, then such linear system can be solved in $\tilde O(n^2)$  using a simple stochastic gradient method, and thus the problem is trivial. According to Theorem~\ref{thm:main}, setting sketch size $k=O(n^{\frac1{\omega-1}})$ and $\ell = \Omega(k/\log k)$, we can solve this linear system using Algorithm \ref{alg:main} in time $\tilde O(\bar\kappa_\ell\bar\kappa_{\ell:2k}n^2)$. Thus, it suffices to bound the two condition number quantities. Observe that $\bar\kappa_{\ell:2k} = \tilde O(1)$, since $\sigma_\ell$ and $\sigma_{2k}$ only differ by a factor of $O(\log^\beta(k))$, and moreover:
\begin{align*}
    \bar\kappa_\ell^2 = \frac1{n-\ell}\sum_{i>\ell}\frac{\sigma_i^2}{\sigma_n^2}=\Theta \bigg( n^{2\beta-1}\sum_{i>\ell}i^{-2\beta}\bigg)=\Theta\bigg(\Big(\frac n\ell\Big)^{2\beta-1}\bigg).
\end{align*}
Observing further that $n/\ell = \tilde O\big(n^{\frac{\omega-2}{\omega-1}}\big)$ concludes the proof. 
\end{proof}

\paragraph{Comparison to prior work.}
Next, we discuss algorithms from the most significant prior works, and how they compare in solving linear systems with polynomial spectral decay. First, consider the randomized block Kaczmarz-type solver proposed by \cite{derezinski2023solving}. By choosing sketch size $k=O(n^{\frac1{\omega-1}})$ and $\ell = \Omega(k/\log^3 n)$, they achieve times complexity $\tilde O(\bar\kappa_\ell^2 n^2)$. Following the same derivation as in our proof of Corollary \ref{c:polynomial}, this become $\tilde O(n^{2+\frac{\omega-2}{\omega-1}(2\beta-1)})$, which is worse than our result for any $\beta>0.5$.

Next, we consider a preconditioning-based approach. Here, the strategy is to first construct an $\ell$-rank approximation of $\A$ via block power iteration \cite{halko2011finding}. This can be done by starting with a random Gaussian $n\times \tilde O(\ell)$ matrix $\mathbf{\Omega}$, then repeatedly multiplying it with $\A$ and $\A^\top$, and finally, orthogonalizing the resulting matrix $(\A\A^\top)^q\A\mathbf{\Omega}$ to obtain matrix $\Q$. Then, the matrix $\Q\Q^\top\A$ contains accurate approximation of the top-$\ell$ part of $\A$'s spectrum, which can be used to construct a preconditioner of the linear system. For example, \cite{gonen2016solving,musco2018spectrum} considered an SVRG-type solver that, after being preconditioned in such a way, can solve a linear system in $\tilde O(\bar\kappa_\ell n^2)$ time. The cost of the preconditioning is dominated by the cost of the matrix multiplication, which is $\tilde O(n^2\ell)$ using the classical algorithm. This can be accelerated via fast rectangular matrix multiplication algorithms \cite{le2012faster}, obtaining $\tilde O(n^{2+\max\{0,(\omega-2)\frac{\gamma-\alpha}{1-\alpha}\}})$, where $\alpha\approx 0.32$ is the current value of the parameter of fast rectangular matrix multiplication, while $\gamma = \log_n(\ell)$. Thus, the overall cost of the procedure, combining preconditioning and solving, is $\tilde O(n^{2+\max\{(\omega-2)\frac{\gamma-\alpha}{1-\alpha}, (1-\gamma)(\beta-0.5)\}})$, where $\gamma$ can be chosen from the range $[\alpha,1]$. After optimizing over $\gamma$, we can compare this to our Corollary \ref{c:polynomial}, concluding that our guarantee is better for any $\beta\in(0.5,1.33)$. 

\paragraph{Positive definite matrices.}
We note that Corollary \ref{c:polynomial}, as well as the above derivations for prior works, can  be easily adapted to the setting when $\A$ is known to be positive definite, e.g., if it is a kernel matrix \cite{RasmussenWilliams06}. In this case, our runtime guarantee from Corollary \ref{c:polynomial} can be improved to $\tilde O(n^{2+\frac{\omega-2}{\omega-1}(\beta-1)/2})$ for any $\beta>1$, and the task becomes easy for $\beta\leq 1$. Here, one can show analogously as above that our results improve on the best time complexity for any $\beta\in(1,2.66)$.

\paragraph{Kernel ridge regression.} In the setting of kernel ridge regression, we typically introduce an explicit regularization term, resulting in the linear system $(\A+\lambda\I)\x=\b$ \cite{erdogdu2015convergence}. Depending on the value of $\lambda$, this may further reduce the computational cost of our algorithm (as well as the prior works), because it generally allows choosing a smaller value of $\ell$ in the algorithms. This will favor our method over the state-of-the-art, because our guarantee is the best among all approaches that use $\ell=O(n^{\frac1{\omega-1}})$.

\bibliographystyle{alpha}
\bibliography{bib.bib}
\end{document}